\documentclass{article}

\usepackage{hyperref}
\usepackage{amssymb,amsmath,xspace,enumerate,gensymb}
\usepackage{amsthm}

\usepackage[pagewise]{lineno}
\usepackage{xspace}
\usepackage{arydshln}
\usepackage{rotating}
\usepackage{listings}
\usepackage{soul}
\usepackage{ifmtarg}
\usepackage{stmaryrd} \usepackage{sidecap}
\usepackage{cleveref}[2012/02/15]

\usepackage{tikz}
\usetikzlibrary{arrows,decorations.pathmorphing,backgrounds,positioning,fit,petri,backgrounds}
\usetikzlibrary{calc}
\usetikzlibrary{scopes}
\usetikzlibrary{patterns}
\usepgflibrary{shapes.geometric}
\usetikzlibrary{matrix}

\usepackage{pgffor}
\usepackage{paralist}
\usepackage{chngcntr}

\newif\ifcomments
\newif\ifchanges
\commentsfalse\changesfalse
\commentstrue\changestrue

  \makeatletter{}\newcommand{\mtext}[1]{\textsc{#1}}

\newcommand{\ramsey}[2]{\ensuremath{R_{#1}(#2)}}
\newcommand{\homramsey}[2]{\ensuremath{R^{\text{hom}}_{#1}(#2)}}

\newcommand{\subseq}{\sqsubseteq}

\newcommand{\mand}{\wedge}
\newcommand{\mor}{\vee}

\newcommand{\restrict}[2]{#1\mspace{-3mu}\upharpoonright \mspace{-3mu}#2}
\newcommand{\isomorph}{\simeq}
\newcommand{\isomorphVia}[1]{\isomorph_{#1}}
\newcommand{\swap}[2]{id{[#1, #2]}}

\newcommand{\sem}[2]{\ensuremath{\llbracket #1\rrbracket_{#2}}}

\newcommand{\nd}{d}

\newcommand{\norder}{\prec}

\newcommand{\col}{col}

\newcommand{\N}{\ensuremath{\mathbb{N}}}

\newcommand  {\myclass} [1]  {\ensuremath{\textsc{#1}}}

\newcommand{\mneg}{\neg} 
\newcommand{\class}{\calC}

\newcommand{\DynClass}[1]{\myclass{Dyn#1}\xspace}

\newcommand{\DynProp}{\DynClass{Prop}}
\newcommand{\DynPropbi}{\ensuremath{\DynClass{Prop}^*}\xspace}

\newcommand{\DynQF}{\DynClass{QF}}
\newcommand{\DynQFbi}{\ensuremath{\DynClass{QF}^*}\xspace}
\newcommand{\DynFO}{\DynClass{FO}}

\newcommand{\DynC}{\DynClass{$\class$}}

\newcommand{\DynAND}{\DynClass{PropCQ}}
\newcommand{\DynAnd}{\DynAND}
\newcommand{\DynPropCQ}{\DynAND}

\newcommand{\DynPropPos}{\DynClass{PropUCQ}}

\newcommand{\DynPropUCQ}{\DynPropPos}

\newcommand{\DynAndNeg}{\DynClass{PropCQ{\ensuremath{^{\mneg}}}}}
\newcommand{\DynPropCQneg}{\DynAndNeg}

\newcommand{\updates}{\ensuremath{\Delta}}

\newcommand{\abstrUpd}{\ensuremath{\updates}}

\newcommand{\init}{\mtext{Init}}

\newcommand{\First}{\mtext{First}}
\newcommand{\List}{\mtext{List}}
\newcommand{\Last}{\mtext{Last}}
\newcommand{\In}{\mtext{In}}
\newcommand{\Out}{\mtext{Out}}
\newcommand{\Empty}{\mtext{Empty}}
\newcommand{\Succ}{\mtext{Succ}}
\newcommand{\Pred}{\mtext{Pred}}

\newcommand{\ins}{\mtext{ins}}
\newcommand{\del}{\mtext{del}}

\newcommand{\type}[2]{\ensuremath{\langle #1, #2 \rangle}}
\newcommand{\stype}[3]{\ensuremath{\langle #1, #2 \rangle_{#3}}}

\newcommand{\insertdescr}[2]{\textbf{Insertion of \ensuremath{#2} into \ensuremath{#1}.}}
\newcommand{\deletedescr}[2]{\textbf{Deletion of \ensuremath{#2} from \ensuremath{#1}.}}

\newcommand{\schema}{\tau}
\newcommand{\relSchema}{\schema_{\text{rel}}}
\newcommand{\conSchema}{\schema_{\text{const}}}
\newcommand{\funSchema}{\schema_{\text{fun}}}
\newcommand{\Terms}[2]{\textsc{Terms}^{#2}_{#1}} 
\newcommand{\inpSchema}{\schema_{\text{in}}}
\newcommand{\auxSchema}{\schema_{\text{aux}}}

\newcommand{\builtinSchema}{\schema_{\text{bi}}}
\newcommand{\arity}{\ensuremath{\text{Ar}}}

\newcommand{\auxInit}{\init_{\text{aux}}}
\newcommand{\builtinInit}{\init_{\text{bi}}}

\newcommand{\nb}[3]{\calN_{#2}^{#3}(#1)}
\newcommand{\nbv}[3]{\vec \calN_{#2}^{#3}(#1)}

\newcommand{\struc}{\calS}
\newcommand{\db}{\calD}
\newcommand{\inp}{\calI}
\newcommand{\aux}{\calA}
\newcommand{\builtin}{\calB}
\newcommand{\domain}{D}
\newcommand{\query}{\calQ}
\newcommand{\querys}{Q}

\newcommand{\state}{\ensuremath{\struc}}

\newcommand{\prog}{\ensuremath{\calP}\xspace}

\newcommand{\updateDB}[2]{\ensuremath{#1(#2)}}
\newcommand{\updateState}[3]{\ensuremath{#1_{#2}(#3)}}

\makeatletter \newcommand{\uf}[4]{
  \@ifmtarg{#4}{
    \ensuremath{\phi^{#1}_{#2}(#3)}
   }{
    \ensuremath{\phi^{#1}_{#2}(#3; #4)}
  }
}
\newcommand{\ufwa}[2]{
  \ensuremath{\phi^{#1}_{#2}}
}

\newcommand{\ut}[4]{\ensuremath{t^{#1}_{#2}(#3; #4)}}

\newcommand{\ite}[3]{\mtext{ite}(#1,#2,#3)}

\newcommand {\calA}      {{\mathcal A}\xspace}
\newcommand {\calB}      {{\mathcal B}\xspace}
\newcommand {\calC}      {{\mathcal C}\xspace}
\newcommand {\calD}      {{\mathcal D}\xspace}

\newcommand {\calI}      {{\mathcal I}\xspace}

\newcommand {\calN}      {{\mathcal N}\xspace}

\newcommand {\calP}      {{\mathcal P}\xspace}
\newcommand {\calQ}      {{\mathcal Q}\xspace}

\newcommand {\calS}      {{\mathcal S}\xspace}
\newcommand {\calT}      {{\mathcal T}\xspace}

\newcommand  {\problemdescr} [3] {
    \vspace{3mm}
    \def\Name{#1}
    \def\Input{#2}
    \def\Question{#3}
	  \hspace{5mm}\begin{tabular}{r l}	    \textit{Query:} & \textsc{\Name} \\
	    \textit{Input:} & \Input \\
	    \textit{Question:} & \Question
	  \end{tabular}
    \vspace{3mm}
    }

\newcommand {\problem} [1] {\textsc{#1}}

\newcommand{\dynProb}[1]{\textsc{Dyn(#1)}}

\newcommand{\stgraph}{$s$-$t$-graph\xspace}
\newcommand{\stgraphs}{$s$-$t$-graphs\xspace}

\newcommand{\reachQ}{\textsc{Reach}\xspace}\newcommand{\streachQ}{\textsc{$s$-$t$-Reach}\xspace}\newcommand{\streachabilityquery}{$s$-$t$-reachability query\xspace}
\newcommand{\dynstReachQ}{\dynProb{\textsc{$s$-$t$-Reach}}\xspace}
\newcommand{\stTwoPath}{\problem{$s$-$t$-Two\-Path}\xspace}
\newcommand{\dynstTwoPath}{\dynProb{\stTwoPath}\xspace}
\newcommand{\sTwoPath}{\problem{$s$-Two\-Path}\xspace}
\newcommand{\dynsTwoPath}{\dynProb{\sTwoPath}\xspace}

\newcommand{\clique}[1]{\problem{$#1$-Clique}\xspace}
\newcommand{\dynClique}[1]{\dynProb{$#1$-Clique}\xspace}

\newcommand{\colorability}[1]{\problem{$#1$-Col}\xspace}
\newcommand{\dynColorability}[1]{\dynProb{$#1$-Col}\xspace}

\newcommand{\streach}{$s$-$t$-Reach}
\newcommand{\streachp}{\problem{\streach}\xspace}

\makeatletter 
     \newtheorem{theorem}{Theorem}[section]
   \newtheorem{lemma}[theorem]{Lemma}
   \newtheorem{corollary}[theorem]{Corollary}
   \newtheorem{proposition}[theorem]{Proposition}
   
    \theoremstyle{definition}
   \newtheorem{definition}{Definition}
   \newtheorem{example}{Example}

   \newenvironment{proofsketch}{\noindent\emph{Proof sketch.}\enspace}{\qed \medskip}
   \newenvironment{proofof}[1]{\noindent\textsc{Proof (of #1).}\enspace}{\qed}
  \newenvironment{proofsketchof}[1]{\noindent\textsc{Proof sketch (of #1).}\enspace}{\qed}

\makeatletter
\newcommand{\theoremcont}[3]{
   \def\Type{#1}
   \def\Number{#2}
   \def\Label{#3}
  \@ifmtarg{#3}{
    \vspace{2mm}\noindent\textbf{\Type\ \Number.}\itshape
   }{
    \vspace{2mm}\noindent\textbf{\Type\ \Number}\ \itshape(\Label).
  }
}

\newcommand{\df}{\ensuremath{\mathrel{\smash{\stackrel{\scriptscriptstyle{
    \text{def}}}{=}}}}}

\providecommand{\nc}{\newcommand}

\ifcomments
\nc{\commentbox}[1]{\noindent\framebox{\parbox{\linewidth}{#1}}}
\nc{\todo}[1]{\ \\ {\color{red} \fbox{\parbox{\linewidth}{{\sc
          ToDo}:\\  #1}}}}

\setlength{\marginparwidth}{2.5cm}
\setlength{\marginparsep}{3pt}

\newcommand{\acomment}[2]{\ \\ \fbox{\parbox{\linewidth}{{\sc #1}:\\ #2}}}
\newcommand{\mcomment}[2]{{\color{blue}(#1)}\footnote{#1: #2}}                                 \else
\nc{\commentbox}[1]{}
\newcommand{\mcomment}[2]{}
\newcommand{\acomment}[2]{}
\fi

\ifchanges

\newcommand{\loldnew}[3]{\commentbox{{\textcolor{blue}{\setlength{\fboxsep}{1pt}\fbox{\small
          #1}}} \textcolor{red}{\footnotesize #2}}
  \textcolor{blue}{#3}}
\setul{}{0.2mm}
\setstcolor{red}
\newcommand{\oldnew}[3]{{\textcolor{blue}{\setlength{\fboxsep}{1pt}\fbox{\small
        #1}}} \st{\footnotesize #2} \textcolor{blue}{#3}}

\else
\newcommand{\loldnew}[3]{#3}
\newcommand{\oldnew}[3]{#3}
\fi

\nc{\tzm}[1]{\mcomment{TZ}{#1}}
\nc{\tsm}[1]{\mcomment{TS}{#1}}
\nc{\tz}[1]{\acomment{TZ}{#1}}
\nc{\thz}[1]{\acomment{TZ}{#1}}
\nc{\ths}[1]{\acomment{TS}{#1}}

\nc{\tzon}[2][]{\oldnew{TZ}{#1}{#2}} 
\nc{\tson}[2][]{\oldnew{TS}{#1}{#2}}

\nc{\tzlon}[2][]{\loldnew{TZ}{#1}{#2}} 
\nc{\tslon}[2][]{\loldnew{TS}{#1}{#2}}

  \makeatletter{}\newcommand{\bgcolor}{black!5}

\newcommand{\substructurefillcolor}{blue!40}
\newcommand{\substructureufillcolor}{blue!60!red!40}
\newcommand{\substructuredrawcolor}{blue!80}
\newcommand{\substructureudrawcolor}{blue!60!red!80}

\newcommand{\structurefillcolor}{blue!20}
\newcommand{\structureufillcolor}{blue!60!red!20}
\newcommand{\structuredrawcolor}{blue!40}
\newcommand{\structureudrawcolor}{blue!60!red!40}

\pgfdeclarelayer{background}
\pgfdeclarelayer{substructure}
\pgfdeclarelayer{edges}
\pgfdeclarelayer{foreground}
\pgfsetlayers{background,substructure,edges,main,foreground}

\tikzstyle{background rectangle}=[
  fill=black!5,
  draw=black!20,
  inner sep=0.2cm,
  rounded corners=5pt
]

\tikzstyle{substructure}=[
  fill=\substructurefillcolor,
  draw=\substructuredrawcolor,
  inner sep=0.2cm,
  rounded corners=5pt
]
\tikzstyle{substructureu}=[
  fill=\substructureufillcolor,
  draw=\substructureudrawcolor,
  inner sep=0.2cm,
  rounded corners=5pt
]
\tikzstyle{substructurewoborder}=[
  fill=\substructurefillcolor,
  draw=\substructurefillcolor,
  inner sep=0.2cm,
  rounded corners=5pt
]

\tikzstyle{structure}=[
  fill=\structurefillcolor,
  draw=\structuredrawcolor,
  inner sep=0.2cm,
  rounded corners=5pt
]

\tikzstyle{structureu}=[
  fill=\structureufillcolor,
  draw=\structureudrawcolor,
  inner sep=0.2cm,
  rounded corners=5pt
]

\tikzstyle{mnode}=[
  circle,
  fill=black!40, 
  draw=black!80,
  minimum size=4pt, 
  inner sep=0pt
]
\tikzstyle{invisible}=[
]

\tikzstyle{invisibleEdge}=[
  transparent
]

\tikzstyle{nameNode}=[
  font=\scriptsize
]

\tikzstyle{mEdge}=[
  -latex',   semithick, 
  shorten >=3pt, 
  shorten <=3pt,
  draw=black!80,
]

\tikzstyle{delEdge}=[
  -latex', 
  semithick, 
  shorten >=3pt, 
  shorten <=3pt,
  dotted,
  draw=black!80,
]

\tikzstyle{insEdge}=[
  -latex', 
  semithick, 
  shorten >=3pt, 
  shorten <=3pt,
  dotted,
  draw=black!80,
]

\tikzstyle{dotsEdge}=[
  thick, 
  loosely dotted, 
  shorten >=7pt, 
  shorten <=7pt
]

\tikzstyle{dEdge}=[
  -latex',   thick, 
  shorten >=3pt, 
  shorten <=3pt,
  draw=black!80,
]

\tikzstyle{labelsubstruc}=[
  inner sep=0pt,
  outer sep=1.5pt,
]

\tikzstyle{labelbg}=[
  inner sep=1.5pt,
]

\newcommand{\piclogicinitsk}{
  \begin{tikzpicture}[
      xscale=0.7,
      yscale=0.7,
      font=\normalsize,
      show background rectangle
    ]
    
    \node (name) at (-1,1.5) [nameNode, font = \large] {$\state_k$:};
    
    \node (s) at (2, 1.5)[mnode, label={[labelsubstruc]above:$s$}] {};
    \node (a1) at (0,0) [mnode, label={[labelbg]below:$a_{1}$}] {};
    \node (a2) at (1.5,0) [mnode, label={[labelbg]below:$a_{k}$}] {};
    \node (t) at (2,-1.5) [mnode, label={[labelsubstruc]below:$t$}] {};

    \draw [dotsEdge] (a1) to (a2);

    \draw [mEdge] (s) to (a1);
    \draw [mEdge] (s) to (a2);

    \draw [mEdge, shorten <=9pt] (a1) to (t);
    \draw [mEdge, shorten <=9pt] (a2) to (t);

        \begin{pgfonlayer}{substructure}
      \def\f{0.7};
      \fill[substructure]  (
	    $(a1)+\f*(-1,0)$) 
	-- ($(a1)+\f*(-1,-0.5)$) 
	-- ($(t)+\f*(-0.5,-1)$) 
	-- ($(t)+\f*(1,-1)$) 
	-- ($(a2)+\f*(0.7,-0.5)$) 
	-- ($(a2)+\f*(0.7,0.5)$) 
	-- ($(s)+\f*(1,1)$) 
	-- ($(s)+\f*(-0.5,1)$) 
	-- ($(a1)+\f*(-1,0.5)$) 
	-- ($(a1)+\f*(-1,0)$);
    \end{pgfonlayer}
  \end{tikzpicture}
}

\newcommand{\piclogicinitsl}{
  \begin{tikzpicture}[
      xscale=0.7,
      yscale=0.7,
      font=\normalsize,
      show background rectangle
    ]
    
    \node (name) at (-1,1.5) [nameNode, font = \large] {$\state_l$:};
    
    \node (s) at (2, 1.5)[mnode, label={[labelsubstruc]above:$s$}] {};
    \node (a1) at (0,0) [mnode, label={[labelbg]below:$a_{1}$}] {};
    \node (a2) at (1.5,0) [mnode, label={[labelbg]below:$a_{k}$}] {};
    \node (a3) at (2.5,0) [mnode , label={[labelbg]below:$a_{k+1}$}] {};
    \node (a4) at (4,0) [mnode, label={[labelbg]below:$a_{l}$}] {};
    \node (t) at (2,-1.5) [mnode, label={[labelsubstruc]below:$t$}] {};

    \draw [dotsEdge] (a1) to (a2);
   \draw [dotsEdge] (a3) to (a4);

    \draw [mEdge] (s) to (a1);
    \draw [mEdge] (s) to (a2);
    \draw [mEdge] (s) to (a3);
    \draw [mEdge] (s) to (a4);

    \draw [mEdge, shorten <=9pt] (a1) to (t);
    \draw [mEdge, shorten <=9pt] (a2) to (t);
    \draw [mEdge, shorten <=9pt] (a3) to (t);
    \draw [mEdge, shorten <=9pt] (a4) to (t);

        \begin{pgfonlayer}{substructure}
      \def\f{0.7};
      \fill[substructure]  (
	    $(a1)+\f*(-1,0)$) 
	-- ($(a1)+\f*(-1,-0.5)$) 
	-- ($(t)+\f*(-0.5,-1)$) 
	-- ($(t)+\f*(1,-1)$) 
	-- ($(a2)+\f*(0.7,-0.5)$) 
	-- ($(a2)+\f*(0.7,0.5)$) 
	-- ($(s)+\f*(1,1)$) 
	-- ($(s)+\f*(-0.5,1)$) 
	-- ($(a1)+\f*(-1,0.5)$) 
	-- ($(a1)+\f*(-1,0)$);
    \end{pgfonlayer}
  \end{tikzpicture}
}

\newcommand{\picunarysk}{
  \begin{tikzpicture}[
      xscale=0.8,
      font=\large,
      show background rectangle,
    ]

    \node (name) at (0,1.5) [nameNode,font=\Large] {$\state'_k$:};

    \node (s) at (6, 1.5)[mnode, label={[labelsubstruc]above:$s$}] {};
    \node (1) at (0,0) [mnode, label={[labelbg]below:$a_{1}$}] {};
    \node (2) at (1.5,0) [mnode, label={[labelbg]below:$a_{i_1-1}$}] {};
    \node (3) at (2.5,0) [mnode , label={[labelsubstruc]below:$a_{i_1}$}] {};
    \node (4) at (3.5,0) [mnode, label={[labelbg]below:$a_{i_1+1}$}] {};
    \node (5) at (5,0) [mnode, label={[labelbg]below:$a_{i_2-1}$}] {};
    \node (6) at (6,0) [mnode, label={[labelsubstruc]below:$a_{i_2}$}] {};
    \node (7) at (7,0) [mnode, label={[labelbg]below:$a_{i_2+1}$}] {};
    \node (8) at (8.5,0) [mnode, label={[labelbg]below:$a_{k}$}] {};
    \node (9) at (9.5,0) [mnode, label={[labelbg]below:$a_{k+1}$}] {};
    \node (10) at (11.0,0) [mnode, label={[labelbg]below:$a_{i_k-1}$}] {};
    \node (11) at (12.0,0) [mnode, label={[labelsubstruc]below:$a_{i_k}$}] {};
    \node (12) at (13,0) [mnode, label={[labelbg]below:$a_{i_k+1}$}] {};
    \node (13) at (14.5,0) [mnode, label={[labelbg]below:$a_{l}$}] {};
    \node (14) at (15.5,0) [mnode, label={[labelbg]below:$a_{l+1}$}] {};
    \node (15) at (17.0,0) [mnode, label={[labelbg]below:$a_{m}$}] {};
    \node (t) at (6, -2.5)[mnode, label={[labelsubstruc]below:$t$}] {};

    \draw [dotsEdge] (1) to (2);
    \draw [dotsEdge] (4) to (5);
    \draw [dotsEdge] (7) to (8);
    \draw [dotsEdge] (9) to (10);
    \draw [dotsEdge] (12) to (13);
    \draw [dotsEdge] (14) to (15);

    \draw [delEdge] (s) to (14);
    \draw [delEdge] (s) to (15);

    \draw [mEdge] (s) to (1);
    \draw [mEdge] (s) to (2);
    \draw [mEdge] (s) to (3);
    \draw [mEdge] (s) to (4);
    \draw [mEdge] (s) to (5);
    \draw [mEdge] (s) to (6);
    \draw [mEdge] (s) to (7);
    \draw [mEdge] (s) to (8);
    \draw [delEdge] (s) to (9);
    \draw [delEdge] (s) to (10);
    \draw [delEdge] (s) to (11);
    \draw [delEdge] (s) to (12);
    \draw [delEdge] (s) to (13);

    \draw [mEdge, shorten <=3pt] (1) to (t);
    \draw [mEdge, shorten <=16pt] (2) to (t);
    \draw [mEdge, shorten <=18pt] (3) to (t);
    \draw [mEdge, shorten <=16pt] (4) to (t);
    \draw [mEdge, shorten <=12pt] (5) to (t);
    \draw [mEdge, shorten <=12pt] (6) to (t);
    \draw [mEdge, shorten <=12pt] (7) to (t);
    \draw [mEdge, shorten <=12pt] (8) to (t);
    \draw [mEdge, shorten <=16pt] (9) to (t);
    \draw [mEdge, shorten <=18pt] (10) to (t);
    \draw [mEdge, shorten <=29pt] (11) to (t);
    \draw [mEdge, shorten <=29pt] (12) to (t);
    \draw [mEdge, shorten <=38pt] (13) to (t);
    \draw [mEdge, shorten <=29pt] (14) to (t);
    \draw [mEdge, shorten <=39pt] (15) to (t);

        \begin{pgfonlayer}{substructure}

       \def\f{0.5};
       \fill[substructure]  
	   ($(1)+\f*(-1,0)$) 
	-- ($(1)+\f*(-1,1)$) 
	-- ($(s)+\f*(-1,1)$)
	-- ($(s)+\f*(1,1)$)
	-- ($(8)+\f*(1,1)$)
	-- ($(8)+\f*(1,-1)$)
	-- ($(t)+\f*(1,-1)$)
	-- ($(t)+\f*(-1,-1)$)
	-- ($(1)+\f*(-1,-1)$) 
	-- ($(1)+\f*(-1,0)$); 
     \end{pgfonlayer}

  \end{tikzpicture}
}

\newcommand{\picunarysl}{
  \begin{tikzpicture}[
      xscale=0.8,
      font=\large,
      show background rectangle
    ]

    \node (name) at (0,1.5) [nameNode,font=\Large] {$\state'_l$:};

    \node (s) at (6, 1.5)[mnode, label={[labelsubstruc]above:$s$}] {};
    \node (1) at (0,0) [mnode, label={[labelbg]below:$a_{1}$}] {};
    \node (2) at (1.5,0) [mnode, label={[labelbg]below:$a_{i_1-1}$}] {};
    \node (3) at (2.5,0) [mnode , label={[labelsubstruc]below:$a_{i_1}$}] {};
    \node (4) at (3.5,0) [mnode, label={[labelbg]below:$a_{i_1+1}$}] {};
    \node (5) at (5,0) [mnode, label={[labelbg]below:$a_{i_2-1}$}] {};
    \node (6) at (6,0) [mnode, label={[labelsubstruc]below:$a_{i_2}$}] {};
    \node (7) at (7,0) [mnode, label={[labelbg]below:$a_{i_2+1}$}] {};
    \node (8) at (8.5,0) [mnode, label={[labelbg]below:$a_{k}$}] {};
    \node (9) at (9.5,0) [mnode, label={[labelbg]below:$a_{k+1}$}] {};
    \node (10) at (11.0,0) [mnode, label={[labelbg]below:$a_{i_k-1}$}] {};
    \node (11) at (12.0,0) [mnode, label={[labelsubstruc]below:$a_{i_k}$}] {};
    \node (12) at (13,0) [mnode, label={[labelbg]below:$a_{i_k+1}$}] {};
    \node (13) at (14.5,0) [mnode, label={[labelbg]below:$a_{l}$}] {};
    \node (14) at (15.5,0) [mnode, label={[labelbg]below:$a_{l+1}$}] {};
    \node (15) at (17.0,0) [mnode, label={[labelbg]below:$a_{m}$}] {};
    \node (t) at (6, -2.5)[mnode, label={[labelsubstruc]below:$t$}] {};

    \draw [dotsEdge] (1) to (2);
    \draw [dotsEdge] (4) to (5);
    \draw [dotsEdge] (7) to (8);
    \draw [dotsEdge] (9) to (10);
    \draw [dotsEdge] (12) to (13);
    \draw [dotsEdge] (14) to (15);

    \draw [delEdge] (s) to (14);
    \draw [delEdge] (s) to (15);

    \draw [mEdge] (s) to (1);
    \draw [mEdge] (s) to (2);
    \draw [mEdge] (s) to (3);
    \draw [mEdge] (s) to (4);
    \draw [mEdge] (s) to (5);
    \draw [mEdge] (s) to (6);
    \draw [mEdge] (s) to (7);
    \draw [mEdge] (s) to (8);
    \draw [mEdge] (s) to (9);
    \draw [mEdge] (s) to (10);
    \draw [mEdge] (s) to (11);
    \draw [mEdge] (s) to (12);
    \draw [mEdge] (s) to (12);
    \draw [mEdge] (s) to (13);

    \draw [mEdge, shorten <=3pt] (1) to (t);
    \draw [mEdge, shorten <=16pt] (2) to (t);
    \draw [mEdge, shorten <=18pt] (3) to (t);
    \draw [mEdge, shorten <=16pt] (4) to (t);
    \draw [mEdge, shorten <=12pt] (5) to (t);
    \draw [mEdge, shorten <=12pt] (6) to (t);
    \draw [mEdge, shorten <=12pt] (7) to (t);
    \draw [mEdge, shorten <=12pt] (8) to (t);
    \draw [mEdge, shorten <=16pt] (9) to (t);
    \draw [mEdge, shorten <=18pt] (10) to (t);
    \draw [mEdge, shorten <=29pt] (11) to (t);
    \draw [mEdge, shorten <=29pt] (12) to (t);
    \draw [mEdge, shorten <=38pt] (13) to (t);
    \draw [mEdge, shorten <=29pt] (14) to (t);
    \draw [mEdge, shorten <=39pt] (15) to (t);

        \begin{pgfonlayer}{substructure}

       \def\f{0.5};
       \fill[substructure]  
	   ($(3)+\f*(-1,0)$) 
	-- ($(3)+\f*(-1,1)$) 
	-- ($(s)+\f*(-1,1)$)
	-- ($(s)+\f*(1,1)$)
	-- ($(11)+\f*(1,1)$)
	-- ($(11)+\f*(1,-1)$)
	-- ($(t)+\f*(1,-1)$)
	-- ($(t)+\f*(-1,-1)$)
	-- ($(3)+\f*(-1,-1)$) 
	-- ($(3)+\f*(-1,0)$); 

        \fill[substructure,  fill=\bgcolor]  
	   ($(6)+\f*(-1,0)$) 
	-- ($(t)+\f*(-1,0.5)$) 
	-- ($(3)+\f*(0.7,-0.5)$)
	-- ($(3)+\f*(0.7,0.5)$)
	-- ($(s)+\f*(-1,-0.5)$)
	-- ($(6)+\f*(-1,0)$); 

        \fill[substructure,  fill=\bgcolor]  
	   ($(6)+\f*(1,0)$) 
	-- ($(t)+\f*(1,0.5)$) 
	-- ($(11)+\f*(-0.7,-0.5)$)
	-- ($(11)+\f*(-0.7,0.5)$)
	-- ($(s)+\f*(1,-0.5)$)
	-- ($(6)+\f*(1,0)$);

     \end{pgfonlayer}
  \end{tikzpicture}
}

\newcommand{\picbinarytl}{
  \begin{tikzpicture}[
      xscale=0.7,
      font=\large,
      show background rectangle
    ]

    \node (name) at (0,3) [nameNode,font=\Large] {$\calT_l$:};    
    \node (s) at (6, 3)[mnode, label={[labelsubstruc]above:$s$}] {};
    \node (t) at (6, -3)[mnode, label={[labelsubstruc]below:$t$}] {};

    \node (a0) at (1.5,1.5) [mnode, label={[labelbg]above:$a_{X_1}$}]{};
    \node (a1) at (3,1.5) [mnode, label={[labelbg]above:$a_{X_k}$}]{};
    \node (a2) at (6,1.5) [mnode, label={[labelbg]above:$a_{X_l}$}]{};
    \node (a3) at (9,1.5) [invisible]{};

    \node (b1) at (0,0) [mnode, label={[labelbg]below:$b_{1}$}] {};
    \node (b2) at (1.5,0) [mnode, label={[labelbg]below:$b_{i_1-1}$}] {};
    \node (b3) at (2.5,0) [mnode , label={[labelsubstruc]below:$b_{i_1}$}] {};
    \node (b4) at (3.5,0) [mnode, label={[labelbg]below:$b_{i_1+1}$}] {};
    \node (b5) at (5,0) [mnode, label={[labelbg]below:$b_{i_2-1}$}] {};
    \node (b6) at (6,0) [mnode, label={[labelsubstruc]below:$b_{i_2}$}] {};
    \node (b7) at (7,0) [mnode, label={[labelbg]below:$b_{i_2+1}$}] {};
    \node (b8) at (8.5,0) [mnode, label={[labelbg]below:$b_{k}$}] {};
    \node (b9) at (9.5,0) [mnode, label={[labelbg]below:$b_{k+1}$}] {};
    \node (b10) at (11.0,0) [mnode, label={[labelbg]below:$b_{i_k-1}$}] {};
    \node (b11) at (12.0,0) [mnode, label={[labelsubstruc]below:$b_{i_k}$}] {};
    \node (b12) at (13,0) [mnode, label={[labelbg]below:$b_{i_k+1}$}] {};
    \node (b13) at (14.5,0) [mnode, label={[labelbg]below:$b_{l}$}] {};
    \node (b14) at (16,0) [invisible] {};

    \draw [dotsEdge] (a0) to (a1);
    \draw [dotsEdge] (a1) to (a2);
    \draw [dotsEdge,  shorten >=20pt,   shorten <=15pt,] (a2) to (a3);

    \draw [dotsEdge] (b1) to (b2);
    \draw [dotsEdge] (b4) to (b5);
    \draw [dotsEdge] (b7) to (b8);
    \draw [dotsEdge] (b9) to (b10);
    \draw [dotsEdge] (b12) to (b13);
    \draw [dotsEdge] (b13) to (b14);

    \draw [mEdge] (a0) to (b1);

    \draw [mEdge] (a1) to (b1);
    \draw [mEdge] (a1) to (b2);
    \draw [mEdge] (a1) to (b3);
    \draw [mEdge] (a1) to (b4);
    \draw [mEdge] (a1) to (b5);
    \draw [mEdge] (a1) to (b6);
    \draw [mEdge] (a1) to (b7);
    \draw [mEdge] (a1) to (b8);

    \draw [mEdge] (a2) to (b1);
    \draw [mEdge] (a2) to (b2);
    \draw [mEdge] (a2) to (b3);
    \draw [mEdge] (a2) to (b4);
    \draw [mEdge] (a2) to (b5);
    \draw [mEdge] (a2) to (b6);
    \draw [mEdge] (a2) to (b7);
    \draw [mEdge] (a2) to (b8);
    \draw [mEdge] (a2) to (b9);
    \draw [mEdge] (a2) to (b10);
    \draw [mEdge] (a2) to (b11);
    \draw [mEdge] (a2) to (b12);
    \draw [mEdge] (a2) to (b12);
    \draw [mEdge] (a2) to (b13);
        \draw [mEdge, shorten <=3pt] (b1) to (t);
    \draw [mEdge, shorten <=16pt] (b2) to (t);
    \draw [mEdge, shorten <=18pt] (b3) to (t);
    \draw [mEdge, shorten <=18pt] (b4) to (t);
    \draw [mEdge, shorten <=16pt] (b5) to (t);
    \draw [mEdge, shorten <=16pt] (b6) to (t);
    \draw [mEdge, shorten <=16pt] (b7) to (t);
    \draw [mEdge, shorten <=12pt] (b8) to (t);
    \draw [mEdge, shorten <=16pt] (b9) to (t);
    \draw [mEdge, shorten <=18pt] (b10) to (t);
    \draw [mEdge, shorten <=29pt] (b11) to (t);
    \draw [mEdge, shorten <=29pt] (b12) to (t);
    \draw [mEdge, shorten <=38pt] (b13) to (t);

        \begin{pgfonlayer}{substructure}

       \def\f{0.5};
       \fill[substructure]  
	   ($(b3)+\f*(-1,0)$) 
	-- ($(b3)+\f*(-1,1)$) 
	-- ($(a2)+\f*(-1,1)$)
	-- ($(s)+\f*(-1,1)$)
	-- ($(s)+\f*(1,1)$)
	-- ($(a2)+\f*(1,1)$)
	-- ($(b11)+\f*(1,1)$)
	-- ($(b11)+\f*(1,-1)$)
	-- ($(t)+\f*(1,-1)$)
	-- ($(t)+\f*(-1,-1)$)
	-- ($(b3)+\f*(-1,-1)$) 
	-- ($(b3)+\f*(-1,0)$); 

        \fill[substructure,  fill=\bgcolor]  
	   ($(b6)+\f*(-1,0)$) 
	-- ($(t)+\f*(-1,0.5)$) 
	-- ($(b3)+\f*(0.7,-0.5)$)
	-- ($(b3)+\f*(0.7,0.5)$)
	-- ($(a2)+\f*(-1,-0.5)$)
	-- ($(b6)+\f*(-1,0)$); 

        \fill[substructure,  fill=\bgcolor]  
	   ($(b6)+\f*(1,0)$) 
	-- ($(t)+\f*(1,0.5)$) 
	-- ($(b11)+\f*(-0.7,-0.5)$)
	-- ($(b11)+\f*(-0.7,0.5)$)
	-- ($(a2)+\f*(1,-0.5)$)
	-- ($(b6)+\f*(1,0)$);

     \end{pgfonlayer}
  \end{tikzpicture}
}

\newcommand{\picbinarytk}{
  \begin{tikzpicture}[
      xscale=0.7,
      font=\large,
      show background rectangle
    ]

    \node (name) at (0,3) [nameNode, font=\Large] {$\calT_k$:};    
    \node (s) at (6, 3)[mnode, label={[labelsubstruc]above:$s$}] {};
    \node (t) at (6, -3)[mnode, label={[labelsubstruc]below:$t$}] {};

    \node (a0) at (0,1.5) [mnode, label={[labelbg]above:$a_{X_1}$}]{};
    \node (a1) at (1.5,1.5) [mnode, label={[labelbg]above:$a_{X_k}$}]{};
    \node (a2) at (6,1.5) [mnode, label={[labelbg]above:$a_{X_l}$}]{};
    \node (a3) at (9,1.5) [invisible]{};

    \node (b1) at (0,0) [mnode, label={[labelbg]below:$b_{1}$}] {};
    \node (b2) at (1.5,0) [mnode, label={[labelbg]below:$b_{i_1-1}$}] {};
    \node (b3) at (2.5,0) [mnode , label={[labelsubstruc]below:$b_{i_1}$}] {};
    \node (b4) at (3.5,0) [mnode, label={[labelbg]below:$b_{i_1+1}$}] {};
    \node (b5) at (5,0) [mnode, label={[labelbg]below:$b_{i_2-1}$}] {};
    \node (b6) at (6,0) [mnode, label={[labelsubstruc]below:$b_{i_2}$}] {};
    \node (b7) at (7,0) [mnode, label={[labelbg]below:$b_{i_2+1}$}] {};
    \node (b8) at (8.5,0) [mnode, label={[labelbg]below:$b_{k}$}] {};
    \node (b9) at (9.5,0) [mnode, label={[labelbg]below:$b_{k+1}$}] {};
    \node (b10) at (11.0,0) [mnode, label={[labelbg]below:$b_{i_k-1}$}] {};
    \node (b11) at (12.0,0) [mnode, label={[labelsubstruc]below:$b_{i_k}$}] {};
    \node (b12) at (13,0) [mnode, label={[labelbg]below:$b_{i_k+1}$}] {};
    \node (b13) at (14.5,0) [mnode, label={[labelbg]below:$b_{l}$}] {};
    \node (b14) at (16,0) [invisible] {};

    \draw [dotsEdge] (a0) to (a1);
    \draw [dotsEdge] (a1) to (a2);
    \draw [dotsEdge,  shorten >=20pt,   shorten <=15pt,] (a2) to (a3);

    \draw [dotsEdge] (b1) to (b2);
    \draw [dotsEdge] (b4) to (b5);
    \draw [dotsEdge] (b7) to (b8);
    \draw [dotsEdge] (b9) to (b10);
    \draw [dotsEdge] (b12) to (b13);
    \draw [dotsEdge] (b13) to (b14);

    \draw [mEdge] (a0) to (b1);

    \draw [mEdge] (a1) to (b1);
    \draw [mEdge] (a1) to (b2);
    \draw [mEdge] (a1) to (b3);
    \draw [mEdge] (a1) to (b4);
    \draw [mEdge] (a1) to (b5);
    \draw [mEdge] (a1) to (b6);
    \draw [mEdge] (a1) to (b7);
    \draw [mEdge] (a1) to (b8);

    \draw [mEdge] (a2) to (b1);
    \draw [mEdge] (a2) to (b2);
    \draw [mEdge] (a2) to (b3);
    \draw [mEdge] (a2) to (b4);
    \draw [mEdge] (a2) to (b5);
    \draw [mEdge] (a2) to (b6);
    \draw [mEdge] (a2) to (b7);
    \draw [mEdge] (a2) to (b8);
    \draw [mEdge] (a2) to (b9);
    \draw [mEdge] (a2) to (b10);
    \draw [mEdge] (a2) to (b11);
    \draw [mEdge] (a2) to (b12);
    \draw [mEdge] (a2) to (b12);
    \draw [mEdge] (a2) to (b13);

        \draw [mEdge, shorten <=3pt] (b1) to (t);
    \draw [mEdge, shorten <=16pt] (b2) to (t);
    \draw [mEdge, shorten <=18pt] (b3) to (t);
    \draw [mEdge, shorten <=18pt] (b4) to (t);
    \draw [mEdge, shorten <=16pt] (b5) to (t);
    \draw [mEdge, shorten <=16pt] (b6) to (t);
    \draw [mEdge, shorten <=16pt] (b7) to (t);
    \draw [mEdge, shorten <=12pt] (b8) to (t);
    \draw [mEdge, shorten <=16pt] (b9) to (t);
    \draw [mEdge, shorten <=18pt] (b10) to (t);
    \draw [mEdge, shorten <=29pt] (b11) to (t);
    \draw [mEdge, shorten <=29pt] (b12) to (t);
    \draw [mEdge, shorten <=38pt] (b13) to (t);

        \begin{pgfonlayer}{substructure}

       \def\f{0.5};
       \fill[substructure]  
	   ($(b1)+\f*(-1,0)$) 
	-- ($(b1)+\f*(-1,1)$) 
	-- ($(a1)+\f*(-1,1)$)
	-- ($(s)+\f*(0.5,1)$)
	-- ($(s)+\f*(1.5,-0.5)$)
	-- ($(a1)+\f*(4,1)$)
	-- ($(a1)!0.5!(b5)$)
	-- ($(b8)+\f*(1,1)$)
	-- ($(b8)+\f*(1,-1)$)
	-- ($(t)+\f*(1,-1)$)
	-- ($(t)+\f*(-1,-1)$)
	-- ($(b1)+\f*(-1,-1)$) 
	-- ($(b1)+\f*(-1,0)$); 
     \end{pgfonlayer}
  \end{tikzpicture}
}

\newcommand{\picunaryqfa}{
  \begin{tikzpicture}[
      xscale=0.7,
      font=\large,
      show background rectangle
    ]

    \node (name) at (0,1.5) [nameNode, font=\large] {$\updateState{\prog}{\beta_1}{\state}$:};

    \node (s) at (5, 1)[mnode, label={[labelsubstruc]above:$s$}] {};
    \node (1) at (0,0) [mnode, label={[labelbg]below:$a_{1}$}] {};
    \node (2) at (1,0) [invisible] {};
    \node (2a) at (1,0) [invisible, font=\normalsize] {$\ldots$};

    \node (3) at (2,0) [mnode, label={[labelbg]below:$a_{i-1}$}] {};
    \node (4) at (3,0) [mnode , label={[labelsubstruc]below:$a_{i}$}] {};
    \node (5) at (4,0) [mnode, label={[labelbg]below:$a_{i+1}$}] {};

    \node (6) at (5,0) [invisible] {};
    \node (6a) at (5,0) [invisible, font=\normalsize] {$\ldots$};

    \node (7) at (6,0) [mnode, label={[labelbg]below:$a_{j-1}$}] {};
    \node (8) at (7,0) [mnode , label={[labelsubstruc]below:$a_{j}$}] {};
    \node (9) at (8,0) [mnode, label={[labelbg]below:$a_{j+1}$}] {};

    \node (10) at (9,0) [invisible] {};
    \node (10a) at (9,0) [invisible, font=\normalsize] {$\ldots$};
    \node (11) at (10,0) [mnode, label={[labelbg]below:$a_{n}$}] {};

    \node (t) at (5, -1)[mnode, label={[labelsubstruc]below:$t$}] {};

     \draw [mEdge] (1) to (2);
     \draw [mEdge] (2) to (3);
     \draw [mEdge] (3) to (4);
     \draw [mEdge] (4) to (5);
     \draw [mEdge] (5) to (6);
     \draw [mEdge] (6) to (7);
     \draw [mEdge] (7) to (8);
     \draw [mEdge] (8) to (9);
     \draw [mEdge] (9) to (10);
     \draw [mEdge] (10) to (11);

     \draw [insEdge] (s) to (4);
     \draw [insEdge] (8) to (t);

  \end{tikzpicture}
}

\newcommand{\picunaryqfb}{
  \begin{tikzpicture}[
      xscale=0.7,
      font=\large,
      show background rectangle
    ]

    \node (name) at (0,1.5) [nameNode, font=\large] {$\updateState{\prog}{\beta_2}{\state}$:};

    \node (s) at (5, 1)[mnode, label={[labelsubstruc]above:$s$}] {};
    \node (1) at (0,0) [mnode, label={[labelbg]below:$a_{1}$}] {};
    \node (2) at (1,0) [invisible] {};
    \node (2a) at (1,0) [invisible, font=\normalsize] {$\ldots$};

    \node (3) at (2,0) [mnode, label={[labelbg]below:$a_{i-1}$}] {};
    \node (4) at (3,0) [mnode , label={[labelsubstruc]below:$a_{i}$}] {};
    \node (5) at (4,0) [mnode, label={[labelbg]below:$a_{i+1}$}] {};

    \node (6) at (5,0) [invisible] {};
    \node (6a) at (5,0) [invisible, font=\normalsize] {$\ldots$};

    \node (7) at (6,0) [mnode, label={[labelbg]below:$a_{j-1}$}] {};
    \node (8) at (7,0) [mnode , label={[labelsubstruc]below:$a_{j}$}] {};
    \node (9) at (8,0) [mnode, label={[labelbg]below:$a_{j+1}$}] {};

    \node (10) at (9,0) [invisible] {};
    \node (10a) at (9,0) [invisible, font=\normalsize] {$\ldots$};
    \node (11) at (10,0) [mnode, label={[labelbg]below:$a_{n}$}] {};

    \node (t) at (5, -1)[mnode, label={[labelsubstruc]below:$t$}] {};

     \draw [mEdge] (1) to (2);
     \draw [mEdge] (2) to (3);
     \draw [mEdge] (3) to (4);
     \draw [mEdge] (4) to (5);
     \draw [mEdge] (5) to (6);
     \draw [mEdge] (6) to (7);
     \draw [mEdge] (7) to (8);
     \draw [mEdge] (8) to (9);
     \draw [mEdge] (9) to (10);
     \draw [mEdge] (10) to (11);

     \draw [insEdge] (s) to (8);
     \draw [insEdge] (4) to (t);

  \end{tikzpicture}
}

\newcommand{\piccliquea}{
  \begin{tikzpicture}[
      xscale=0.7,
      font=\large,
      show background rectangle
    ]

    \node (name) at (-0.5,3.5) [nameNode,font=\Large] {$G$:};
    
    \node (s) at (2.25, 3)[mnode, label={[labelsubstruc]above:$s$}] {};
    \node (t) at (2.25, -1.5)[mnode, label={[labelsubstruc]below:$t$}] {};

    \node (a1) at (0,1.5) [mnode, label={[labelbg]right:$a_1$}]{};
    \node (a2) at (1.5,1.5) [mnode, label={[labelbg]right:$a_2$}]{};
    \node (a3) at (3,1.5) [mnode, label={[labelbg]right:$a_3$}]{};
    \node (a4) at (4.5,1.5) [mnode, label={[labelbg]right:$a_4$}]{};

    \node (b1) at (-1,0) [mnode, label={[labelbg, label distance = 4pt]right:$b_1$}]{};
    \node (b2) at (1.33,0) [mnode, label={[labelbg]right:$b_2$}]{};
    \node (b3) at (3.66,0) [mnode, label={[labelbg]right:$b_3$}]{};
    \node (b4) at (6,0) [mnode, label={[labelbg]right:$b_4$}]{};

    \draw [mEdge] (s) to (a1);
    \draw [mEdge] (s) to (a2);
    \draw [mEdge] (s) to (a4);

    \draw [mEdge] (a2) to (b3);
    \draw [mEdge] (a3) to (b3);
    \draw [mEdge] (a3) to (b4);
    \draw [mEdge] (a4) to (b4);

    \draw [mEdge] (a2) to (b1);
    \draw [mEdge] (a2) to (b2);

    \draw [mEdge] (b1) to (t);
    \draw [mEdge] (b4) to (t);
  \end{tikzpicture}
}

\newcommand{\piccliqueb}{
  \begin{tikzpicture}[
      xscale=0.7,
      font=\large,
      show background rectangle
    ]

    \node (name) at (-0.5,3.5) [nameNode,font=\Large] {$G'$:};
    
    \node (s) at (2.25, 3)[mnode, label={[labelsubstruc, label distance = 3pt]above:$s = t$}] {};
    \node (t) at (2.25, -1.75)[invisible] {};

    \node (a1) at (0,1.5) [mnode, label={[labelbg]right:$a_1$}]{};
    \node (a2) at (1.5,1.5) [mnode, label={[labelbg]right:$a_2$}]{};
    \node (a3) at (3,1.5) [mnode, label={[labelbg]right:$a_3$}]{};
    \node (a4) at (4.5,1.5) [mnode, label={[labelbg]right:$a_4$}]{};

    \node (b1) at (-1,0) [mnode, label={[labelbg, label distance = 4pt]right:$b_1$}]{};
    \node (b2) at (1.33,0) [mnode, label={[labelbg]right:$b_2$}]{};
    \node (b3) at (3.66,0) [mnode, label={[labelbg]right:$b_3$}]{};
    \node (b4) at (6,0) [mnode, label={[labelbg]right:$b_4$}]{};

    \draw [mEdge] (s) to (a1);
    \draw [mEdge] (s) to (a2);
    \draw [mEdge] (s) to (a4);

    \draw [mEdge] (a2) to (b3);
    \draw [mEdge] (a3) to (b3);
    \draw [mEdge] (a3) to (b4);
    \draw [mEdge] (a4) to (b4);

    \draw [mEdge] (a2) to (b1);
    \draw [mEdge] (a2) to (b2);

    \draw [mEdge, bend left=50] (b1) to (s);
    \draw [mEdge, bend right=50] (b4) to (s);
  \end{tikzpicture}
}

\tikzstyle{textnode}=[
  fill=black!40, 
  draw=black!80,
  minimum size=4pt,
  rounded corners=2pt
]

\tikzstyle{relnode}=[
   font=\normalsize,
   opaque
]

\newcommand{\inittext}[1]{      \begin{minipage}{100pt}	\begin{center}
	  #1
	\end{center}
      \end{minipage}}

\newcommand{\picinitmodels}{
  \begin{tikzpicture}[
      xscale=2.2,
      yscale=0.7,
      font=\scriptsize,
      show background rectangle
    ]

    \node (eaa) at (2,2) [textnode] {
      \inittext{	empty initial database\\
	with arbitrary initialization
      }
    };
    \node (eee) at (2, 0) [textnode] {
      \inittext{	empty initial database\\
	with empty initialization \\
      }
    };

    \node (naa) at (0,2) [textnode] {
      \inittext{	non-empty initial database\\
	with arbitrary initialization
      }
    };
    \node (nei) at (0,0) [textnode] {
      \inittext{	non-empty initial database\\
	with invariant initialization      }
    };
    
    \draw[invisibleEdge] (eaa) -- node[relnode] {$=$}(naa);
    \draw[invisibleEdge] (eee) -- node[relnode] {$\subseteq$}(nei);

    \draw[invisibleEdge] (eee) -- node[relnode, rotate=90] {$\subseteq$}(eaa);
    \draw[invisibleEdge] (nei) -- node[relnode, rotate=90] {$\subseteq$}(naa);

     \node (general) at (0,-1) []{Theorem \ref{theorem:logicinit}};
     \node (binary) at (2,-1) []{Theorem \ref{theorem:binary}};

        \begin{pgfonlayer}{substructure}
       \def\f{0.55};

       \def\xa{1.6};
       \def\ya{1};
       \fill[fill=blue!20, draw=blue!30,inner sep=0.2cm,rounded corners=5pt]
	   ($(binary)+\f*(+\xa,0)$) 
	-- ($(binary)+\f*(+\xa,-\ya-0.2)$) 
	-- ($(general)+\f*(-\xa-0.1,-\ya-0.2)$) 
	-- ($(naa)+\f*(-\xa-0.1,\ya+0.1)$) 
	-- ($(eaa)+\f*(+\xa,\ya+0.1)$) 
	-- ($(binary)+\f*(+\xa,0)$); 
       \def\xa{1.61};
       \def\ya{1};
       \fill[fill=blue!40, draw=blue!50,inner sep=0.2cm,rounded corners=5pt]   
	   ($(general)+\f*(+\xa,0)$) 
	-- ($(general)+\f*(+\xa,-\ya)$) 
	-- ($(general)+\f*(-\xa,-\ya)$) 
	-- ($(nei)+\f*(-\xa,\ya)$) 
	-- ($(nei)+\f*(+\xa,\ya)$) 
	-- ($(general)+\f*(+\xa,0)$); 
    \end{pgfonlayer}

  \end{tikzpicture}
}

  \newcommand{\substructpic}[2]{
      \draw[substructure] (0,0) ellipse (0.8 and 0.8);
      \node (tmp) at (0,0.3) {#1};
      {[shift={(0,-0.3)}] #2}
   }
  \newcommand{\structpica}[1]{
       \draw[structure] (-1.0, -1.0) rectangle (1.2, 1.9);
      \node (tmp) at (0,1.3) {#1};
   }
  \newcommand{\structpicb}[1]{
      \node[structure, isosceles triangle, shape border rotate=90, minimum height=2.8cm, minimum width=3.2cm, anchor=lower side, isosceles triangle stretches
] at (0,-1.0) {};
      \node (tmp) at (0,1.3) {#1};
   }

  \newcommand{\substructpicu}[2]{
      \draw[substructureu] (0,0) ellipse (0.8 and 0.8);
      \node (tmp) at (0,0.3) {#1};
      {[shift={(0,-0.3)}] #2}
   }

  \newcommand{\structpicua}[1]{
      \draw[structureu] (-1.0, -1.0) rectangle (1.2, 1.9);
      ] at (0,-1.0) {};
      \node (tmp) at (0,1.3) {#1};
   }
  \newcommand{\structpicub}[1]{
     \node[structureu, isosceles triangle, shape border rotate=90, minimum height=2.8cm, minimum width=3.2cm, anchor=lower side, isosceles triangle stretches
] at (0,-1.0) {};
      \node (tmp) at (0,1.3) {#1};
   }

\newcommand{\picsubstructure}{
    \begin{tikzpicture}[
       xscale=0.95,
       yscale=0.9,
    ]
    \node[invisible] (tmp) at (-2,0) {};

   {[shift={(0,4.5)}] 
      \structpica{$S$}
      \node (tmp) at (-1.5,1.6){$\calS$};
      \substructpic{$A$}{\node (lo) at (0.2,0)[mnode,label={left:$\vec a$}] {};}
    }
   {[shift={(5,4.5)}] 
      \structpicb{$T$}
      \node (tmp) at (+1.2,1.6){$\calT$};
      \substructpic{$B$}{\node (ro) at (-0.4,0)[mnode,label={right:$\pi(\vec a)$}] {};}
    }
    \draw [dEdge, very thick] (lo) to node[above]{$\cong$}node[below]{$\pi$}(ro);

      \draw [dEdge](0, 3.5) -- node[right]{$\alpha = \delta(\vec a)$}(0, 2);
      \draw [dEdge](5, 3.5) -- node[right]{$\beta = \delta(\pi(\vec a))$}(5, 2); 

  {[shift={(0,0)}] 
    \structpicua{$S$}
    \node (tmp) at (-1.7,1.6){$\updateState{\prog}{\alpha}{\calS}$};
    \substructpicu{$A$}{\node (lu) at (0.2,0)[invisible]{};}
  }

  {[shift={(5,0)}] 
    \structpicub{$T$}
    \node (tmp) at (+1.2,1.6){$\updateState{\prog}{\beta}{\calT}$};
    \substructpicu{$B$}{\node (ru) at (-0.3,0)[invisible]{};}
  }
    
      \draw [dEdge, very thick] (lu) to node[above]{$\cong$}node[below]{$\pi$}(ru);

  \end{tikzpicture}
} 

\begin{document}

  \title{
    On the quantifier-free dynamic complexity of Reachability\footnote{An extended abstract of this article appeared in Proceedings of the conference Mathematical Foundations of Computer Science 2013 (MFCS 2013).} $^,$\footnote{Both authors acknowledge the financial support by the German DFG under grant SCHW 678/6-1.} }

\author{Thomas Zeume and Thomas Schwentick} 

  \maketitle

  \begin{abstract}
The dynamic complexity of the reachability query is studied in the
dynamic complexity framework of Patnaik and Immerman, restricted to quantifier-free
update formulas. 

It is shown that, with this restriction, the reachability query  cannot be
dynamically maintained,  neither with binary auxiliary relations
nor with unary auxiliary functions, and that
ternary auxiliary relations are more powerful with respect to graph
queries than binary auxiliary relations. 

Further inexpressibility results are given for the reachability query
in a different setting as well as for a syntactical restriction
of quantifier-free update formulas. Moreover inexpressibility results for some other queries are presented.
   \end{abstract}

  \section{Introduction}\label{section:intro}
    \makeatletter{}In modern data management scenarios data is subject to frequent changes. In order to avoid  costly re-computations of queries from scratch after each small modification of the data, one can try to (re-)use auxiliary data structures that have been already computed before. However, these auxiliary data structures need to be updated dynamically whenever the data changes.

The descriptive dynamic complexity framework (short: dynamic complexity) introduced by Patnaik and Immerman \cite{PatnaikI94} models this setting. It was mainly inspired by updates in relational databases. Within this framework, for a relational data\-base  subject to change, auxiliary relations are maintained with the intention to help answering a \mbox{query $\query$}. When a modification to the database, an insertion or deletion of a tuple, occurs, every auxiliary relation is updated through a first-order query (or, equivalently, through a core SQL query) that can refer to the database as well as to the auxiliary relations. A particular auxiliary relation shall always represent the answer \mbox{to  $\query$}. The class of all queries maintainable  in this way 
 is called $\DynFO$.
Beyond query or view maintenance in databases we consider it an important goal to understand the dynamic complexity of fundamental algorithmic problems. 
Reachability in directed graphs is the most intensely investigated problem in dynamic complexity (and also much studied in dynamic algorithms and other dynamic contexts) and the main query studied in this paper.  It is one of the  simplest inherently recursive queries and thus serves as a kind of drosophila in the study of the dynamic maintainability of recursive queries by non-recursive means.
It can be maintained with first-order update formulas supplemented by counting quantifiers on general graphs \cite{Hesse01} and with plain first-order update formulas on both acyclic graphs and undirected graphs \cite{PatnaikI94}. However, it is not known whether Reachability on general graphs is maintainable with first-order updates. This is one of the major open questions in dynamic complexity. 

All attempts to show that Reachability \emph{cannot} be maintained in $\DynFO$ have failed so far. In fact, there are no general inexpressibility results for $\DynFO$ at all.\footnote{Of course, a query maintainable in $\DynFO$ can be evaluated in polynomial time and thus queries that cannot be evaluated in polynomial time cannot be maintained in $\DynFO$ either.} 
This seems to be due to  a lack of understanding of the underlying mechanisms of $\DynFO$. To improve the understanding of dynamic complexity, mainly two kinds of restrictions of  $\DynFO$ have been studied: (1) limiting the information content of the auxiliary data by restricting the arity of auxiliary relations and functions and (2) reducing the amount of quantification in update formulas. 

The study of bounded arity auxiliary relations was started in \cite{DongS98} and it was shown that unary auxiliary relations are not sufficient to maintain the reachability query with first-order updates. Further inexpressibility results for unary auxiliary relations were shown and an arity hierarchy for auxiliary relations was established. However, to separate level $k$ from higher levels, database relations of arity larger than $k$ were used. Thus, a strict hierarchy has not yet been established for queries on graphs.  In \cite{DongLW03} it was shown that unary auxiliary relations are not sufficient to maintain Reachability for update formulas of any logic with certain locality properties. The proofs strongly use the ``static'' weakness of local logics and do not  fully exploit the dynamic setting, as they only require modification sequences of constant length.

The second line of research was initiated by Hesse \cite{Hesse03}. He invented and studied the class  $\DynProp$ of queries maintainable with quantifier-free update formulas. He proved that Reachability on deterministic graphs (i.e.\ graphs of unary functions) can be maintained with quantifier-free first-order update formulas. 

There is still no proof that Reachability on general graphs cannot be maintained in $\DynProp$. However, \emph{some} inexpressibility results for $\DynProp$ have been shown in \cite{GeladeMS12}: the alternating reachability query (on graphs with $\land$- and $\lor$-nodes) is not maintainable in $\DynProp$. Furthermore, on strings, $\DynProp$ exactly captures the regular languages (as Boolean queries on strings).

\paragraph{Contributions} The high-level goal of this paper is to achieve a better understanding of the dynamic maintainability of Reachability and {dynamic complexity} in general. Our main result is that the reachability query cannot be dynamically maintained by quantifier-free updates with binary auxiliary relations. This result is weaker than that of \cite{DongS98} in terms of the logic (quantifier-free vs.\ general first-order)  but it is stronger with respect to the information content of the auxiliary data (binary relations vs.\ unary relations). We  establish a strict hierarchy within  $\DynProp$ for unary, binary and ternary auxiliary relations (this is still open for $\DynFO$).

We further show that Reachability is not maintainable with unary auxiliary \emph{functions} (plus unary auxiliary relations). Although unary functions provide less information content than binary relations, they offer a very weak form of quantification in the sense that more elements of the domain can be taken into account by update formulas. 

All these results hold in the setting of Patnaik and Immerman where modification sequences start  from an empty database as well as in the setting that starts from an arbitrary database, where the auxiliary data is initialized by an arbitrary function. We show that if, in the latter setting, the initialization mapping is permutation-invariant, quantifier-free updates cannot maintain Reachability even with auxiliary functions and relations of arbitrary arity. Intuitively a permutation-invariant initialization mapping maps isomorphic databases to isomorphic auxiliary data. A particular case of permutation-invariant  initialization mappings, studied  in \cite{GraedelS12}, is when the initialization is specified by logical formulas. In this case, lower bounds for first-order update formulas have been obtained for several problems \cite{GraedelS12}.

We transfer many of our inexpressibility results to the  $\clique{k}$ query, for fixed $k \geq 3$, and the colorability query  $\colorability{k}$, for fixed $k \geq 2$. 

In \cite{ZeumeS13reachmfcs} it was shown that every query in \DynProp can be maintained by a program with negation-free quantifier-free formulas only as well as by a program with disjunction-free quantifier-free formulas only. Thus lower bounds for those syntactic fragments immediately yield lower bounds for \DynProp itself.  Here, we show that Reachability cannot be maintained by $\DynProp$ programs with update formulas that are disjunction- \emph{and} negation-free.  

A preliminary version of this work appeared in \cite{ZeumeS13reachmfcs}. It was without most of the proofs and did not contain the lower bound for disjunction- \emph{and} negation-free $\DynProp$ programs. The proofs of the normal form results obtained in \cite{ZeumeS13reachmfcs} will be included in the long version of \cite{ZeumeS14CQicdt}. The latter work establishes normal forms for variants of dynamic conjunctive queries, complementing the normal forms for \DynProp.

\paragraph{Related Work} We already described the most closely related work. As mentioned before, the reachability query has been studied in various dynamic frameworks, one of which is the Cell Probe model. In the Cell Probe model, one aims for lower bounds for the number of memory accesses of a RAM machine  for static and dynamic problems. For dynamic Reachability, lower bounds of order $\log n$ have been proved \cite{PatrascuD04}.

\paragraph{Outline} In Section \ref{section:preliminaries} we fix our notation and in Section \ref{section:setting} we define our dynamic setting more precisely. The lower bound results for Reachability are presented  in Section \ref{section:reach} (for auxiliary relations) and in Section \ref{section:dynqf} (for auxiliary functions). In Section \ref{section:moreproblems} we transfer the lower bounds to other queries. Finally, we establish a lower bound for a syntactical fragment of  $\DynProp$ in \mbox{Section \ref{section:normalforms}}. 
\paragraph{Acknowledgement} We thank Ahmet Kara and Martin Schuster for careful proofreading.

  \section{Preliminaries}\label{section:preliminaries}
    \makeatletter{}In this section, we repeat some basic notions and fix some of our notation.

A \textit{domain} is a finite set. For $k$-tuples, $\vec a = (a_1, \ldots, a_k)$ and $\vec b = (b_1, \ldots, b_k)$ over some domain $\domain$, the $2k$-tuple obtained by concatenating  $\vec a$ and $\vec b$ is denoted by $(\vec a, \vec b)$. The tuple $\vec a$ is \textit{$\norder$-ordered} with respect to an order $\norder$ of $\domain$, if $a_1 \norder \ldots \norder a_k$.  If $\pi$ is a function\footnote{Throughout this work all functions are total.} on $\domain$, we denote   $(\pi(a_1), \ldots, \pi(a_k))$ by $\pi(\vec a)$. We slightly abuse set theoretic notations and write $c \in \vec a$ if  $c = a_i$ for some $c\in\domain$ and some $i$, and $\vec a \cup \vec b$ for the set $\{a_1, \ldots, a_k, b_1\ldots, b_k\}$. 
A (relational) \emph{schema (or signature)} $\schema$ consists of a set $\relSchema$ of relation symbols and a set $\conSchema$ of constant symbols together with an arity function $\arity: \relSchema \rightarrow \N$. A \emph{database} $\db$ of schema $\schema$ with domain $\domain$ is a mapping that assigns to every relation symbol $R \in \relSchema$ a relation of arity $\arity(R)$ over $\domain$ and to every constant symbol $c \in \conSchema$ a single element (called \textit{constant}) from $\domain$. The \emph{size of a database} is the size of its domain. Unless otherwise stated (as, e.g., in \mbox{Section \ref{section:dynqf}}), we always consider relational schemas. 

A  $\schema$-\emph{structure} $\struc$ is a pair $(\domain, \db)$ where $\db$ is a database with schema $\schema$ and domain $\domain$. Sometimes we omit the schema when it is clear from the context. If $\struc$ is a structure over domain $\domain$ and $\domain'$ is a subset of $\domain$ that contains all constants of $\struc$, then the substructure of $\struc$ induced by $\domain'$ is denoted by $\restrict{\struc}{D'}$.

Let $\calS$ and $\calT$ be two structures of schema $\schema$ and over domains $S$ and $T$, respectively. A mapping \mbox{$\pi: S \mapsto T$} \textit{preserves} a relation symbol $R \in \schema$ of arity $m$, when  $\vec a \in R^\calS$ if and only if  $\pi(\vec a) \in R^\calT$, for all $m$-tuples $\vec a$. It preserves a constant symbol $c \in \schema$, if $c^\calT = \pi(c^\calS)$. The mapping is $\schema$-preserving, if it preserves all relation symbols and all constant symbols from $\schema$. Two $\schema$-structures $\calS$ and $\calT$ are \textit{isomorphic via $\pi$}, denoted by $\calS \isomorphVia{\pi} \calT$, if $\pi$ is a bijection from $S$ to $T$ which is  $\schema$-preserving. We define \mbox{$\swap{\vec a}{\vec b}: S \rightarrow S$} to be the bijection that maps, for every $i$, $a_i$ to $b_i$ and  $b_i$ to $a_i$,  and maps all other elements to themselves.

An \emph{atomic formula} is a formula of the form $R(z_1, \ldots, z_l)$ where $R$ is a relation symbol and each $z_i$ is either a variable or a constant symbol. The $k$-ary atomic type  \type{\struc}{\vec a} of a tuple $\vec a = (a_1, \ldots, a_k)$ over $\domain$ with respect to a $\schema$-structure $\struc$ is the set of all atomic formulas $\varphi(\vec x)$ with \mbox{$\vec x = (x_1, \ldots, x_k)$} for which $\varphi(\vec a)$ holds in $\struc$, where $\varphi(\vec a)$ is short for the substitution of $\vec x$ by $\vec a$ in $\varphi$. We note that the atomic formulas can use constant symbols. As we only consider atomic types in this paper, we will often simply say type instead of atomic type. The $\sigma$-type \stype{\state}{\vec a}{\sigma} is the set of atomic formulas of \type{\struc}{\vec a}  with relation symbols {from $\sigma$}. If $\norder$ is a linear order on $\domain$ we call a subset $\domain'\subseteq\domain$ \textit{$\norder$-homogeneous} (or \emph{homogeneous}, if $\norder$ is clear from the context) if, for every $l$, the type of all $\norder$-ordered $l$-tuples over $\domain'$ is the same, that is if $\type{\state}{\vec a} = \type{\state}{\vec b}$ for all ordered $l$-tuples $\vec a$ and $\vec b$.  It is easy to observe, that a set $\domain'$ is already $\norder$-homogeneous if the condition holds for every $l$ up to the maximal arity of $\schema$.

An \emph{\stgraph} is a graph $G = (V, E)$ with two distinguished nodes $s$ \mbox{and $t$}.  A \emph{$k$-layered \stgraph} $G$ is a directed graph $(V, E)$ in which
  $V-\{s,t\}$ is partitioned into $k$ layers $A_1, \ldots, A_{k}$ such that every  edge is from $s$ to $A_1$, from $A_k$ to $t$ or from $A_i$ to $A_{i+1}$, for some $i\in\{1,\ldots,k-1\}$. The \emph{reachability query} $\reachQ$ on graphs is defined as usual, that is $(a,b)$ is in $\reachQ(G)$ if $b$ can be reached from $a$ in $G$. The \emph{$s$-$t$-reachability query} $\streachQ$ is a Boolean query that is true for an \stgraph $G$, if and only if $(s,t) \in \reachQ(G)$.

Formally, an \stgraph is a structure over a schema with one binary relation symbol (interpreted by the set of edges $E$) and two constant symbols (interpreted by the two distinguished nodes $s$ and $t$).

  \section{Dynamic Queries and Programs}\label{section:setting}
    \makeatletter{}The following presentation follows \cite{WeberS07} \mbox{and \cite{GeladeMS12}}.

Informally a \emph{dynamic instance} of a static query $\query$ is a pair $(\db, \alpha)$, where $\db$ is a database and $\alpha$ is a sequence of modifications, i.e.\ a sequence of tuple insertions and deletions into $\db$. The dynamic query \dynProb{$\query$}  yields as result the relation that is obtained by first applying the modifications from $\alpha$ to $\db$ and evaluating query $\calQ$ on the resulting database. We formalize this as follows.

\begin{definition}(Abstract and concrete modifications)
  The set $\abstrUpd$ of \emph{abstract modifications} of a schema $\schema$
  contains the terms $\ins_R$ and $\del_R$, for every relation
  symbol\footnote{In this work we do not allow modification of constants,
    for simplicity.}  $R \in \schema$.  For a database $\db$ over
  schema 
                      $\schema$ with domain $\domain$, a
  \emph{concrete modification} is a term of the form $\ins_R(\vec a)$ or  $\del_R(\vec a)$ where $R \in \schema$ is a
  $k$-ary relation symbol and $\vec a$ is a $k$-tuple of elements
  from $\domain$.
\end{definition}

\textit{Applying a modification} $\ins_R(\vec a)$ to a database $\db$ replaces relation $R^\db$ by \mbox{$R^\db \cup \{\vec a\}$}. Analogously, applying a modification $\del_R(\vec a)$ replaces $R^\db$ by \mbox{$R^\db \setminus\{\vec a\}$}. 
All other relations remain unchanged. The database resulting from applying a modification $\delta$ to a database $\db$ is denoted by $\delta(\db)$. The result $\updateDB{\alpha}{\db}$ of applying a sequence of modifications $\alpha = \delta_1 \ldots \delta_m$ to a database $\db$ is defined by $\updateDB{\alpha}{\db} \df \updateDB{\delta_m}{\ldots (\updateDB{\delta_1}{\db})\ldots}$.

\begin{definition} (Dynamic Query) 
A \emph{dynamic instance} is a pair  $(\db, \alpha)$ consisting of an \emph{input database} $\db$ and a \emph{modification sequence} $\alpha$. For a  static query $\query$ with schema $\schema$, the \emph{dynamic query} \dynProb{$\query$}  is the mapping that yields $\query(\updateDB{\alpha}{\db})$, for every dynamic instance
$(\db, \alpha)$. \end{definition}

Our main interest in this work is the dynamic version $\dynProb{\streachQ}$ of the $s$-$t$-reacha\-bility query.

Dynamic programs, to be defined next, consist of an initialization mechanism and an update\footnote{In previous work (by us as well as by others) there was usually no terminological distinction between the changes that are applied to the structure at hand (e.g., database or graph) and are considered as input to an update program and the changes that are applied by an update program to the auxiliary data after such a change. Both types of changes usually have been termed \emph{updates}. In this article, we use the term \emph{modification} for changes of the database or structure and reserve the term \emph{update} for the respective change applied to the auxiliary data by the actual update program.} program. The former  yields, for every database $\db$  an 
initial state with initial auxiliary  data (and possibly with further built-in data). The latter defines the new state, for each possible modification $\delta$. The following formal definitions are illustrated in Example \ref{example:emptylist} at the end of this section.

An \emph{dynamic schema} is a triple \mbox{$(\inpSchema, \auxSchema,\builtinSchema)$} of schemas of the input database, the auxiliary database, and the built-in database and respectively. We always let $\tau\df\inpSchema\cup\auxSchema\cup\builtinSchema$. Throughout the paper, $\inpSchema$ has to be relational. In our basic setting we also require $\auxSchema$ to be relational (this will be relaxed in \mbox{Section \ref{section:dynqf}}). 

A note on the role of the built-in database is in order: as opposed to the auxiliary database, the built-in database never changes throughout a ``computation''. Our standard classes are defined over schemas without built-in databases (that is, with empty built-in schema). Built-in databases are only used to strengthen some results in one of two possible ways, (1) by showing upper bounds in which (some) auxiliary relations or functions need not be updated or (2) by showing inexpressibility results that hold for auxiliary schemas of bounded arity but with built-in relations of unbounded arity.
 In general, built-in data can be ``simulated'' by auxiliary data. However, this need not hold, e.g., if the auxiliary schema is more restricted than the built-in schema.

\begin{definition}(Update program)\label{def:updateprog}
  An \emph{update program} $P$ over dynamic schema
  \mbox{$(\inpSchema,\auxSchema,\builtinSchema)$}   is a set of first-order formulas (called \textit{update formulas} in the following) that contains,  for every $R \in \auxSchema$ and every
  abstract modification $\delta$ of some $S \in \inpSchema$, an update formula  $\uf{R}{\delta}{\vec x}{\vec y}$ over the schema $\schema$ where $\vec x$ and $\vec y$ have the same arity as $S$ and $R$, respectively.
\end{definition}

A \emph{program state} $\state$ over dynamic schema \mbox{$(\inpSchema, \auxSchema,\builtinSchema)$} is a structure $(\domain, \inp,  \aux, \builtin)$ where $D$ is the domain, $\inp$ is a database over the input schema (the \emph{current database}), $\aux$ is a database over the auxiliary schema (the \emph{auxiliary database}) and  $\builtin$ is a database over the built-in schema (the \emph{built-in database}).

The \emph{semantics of update programs} is as follows. For a modification $\delta(\vec a)$ and program state $\state=(\domain, \inp,\aux, \builtin)$ we denote by $P_\delta(\state)$ the state $(\domain, \delta(\inp), \aux', \builtin)$, where $\aux'$ consists of relations $R'\df\{\vec b \mid \state \models \uf{R}{\delta}{\vec a}{\vec b}\}$. The effect $P_\alpha(\state)$ of a modification sequence $\alpha = \delta_1 \ldots \delta_m$ to a state $\state$ is the state $\updateState{P}{\delta_m}{\ldots (\updateState{\prog}{\delta_1}{\state})\ldots}$.

\begin{definition}(Dynamic program) \label{definition:dynprog}
  A \emph{dynamic program} is a triple $(P,\init,Q)$, where
    \begin{itemize}
    \item $P$ is an update program over some dynamic schema
      \mbox{$(\inpSchema,\builtinSchema,\auxSchema)$},
    \item   the tuple $\init=(\auxInit,\builtinInit)$ consists of a function
      $\auxInit$ that maps $\inpSchema$-databases to
      $\auxSchema$-databases and a function $\builtinInit$ that maps domains
      to $\builtinSchema$-databases, and
    \item $Q\in\auxSchema$ is a designated \emph{query symbol}.
    \end{itemize}
\end{definition}

 A dynamic program $\prog=(P,\init,Q)$ \emph{maintains}  a dynamic query  \dynProb{$\query$} if, for every dynamic instance $(\db,\alpha)$, the relation $\query(\alpha(\db))$ coincides with the query relation $Q^\state$ in the state \mbox{$\state \df P_\alpha(\state_\init(\db))$}, where $\state_\init(\db)$ is the initial state, i.e.\ $\state_\init(\db) \df (\domain, \db,  \auxInit(\db), \builtinInit(D))$.

Several dynamic settings and restrictions of dynamic programs have been studied in the literature \cite{PatnaikI94, Etessami98, GraedelS12, GeladeMS12}. Possible parameters are, for instance:
\begin{itemize}
\item  the logic in which update formulas are expressed;
\item whether in dynamic instances $(\db,\alpha)$, the initial data\-base $\db$ is always empty;
\item whether the initialization mapping $\init$ is \emph{permu\-tation-invariant} (short: \textit{invariant}) in the sense that $\pi(\auxInit(\db))=\auxInit(\pi(\db))$ and \linebreak[4] $\pi(\builtinInit(\domain))=\builtinInit(\pi(\domain))$ hold, for every data\-base $\db$, domain $\domain$ and permutation $\pi$ of the domain; and
\item whether there are any built-in relations at all.
\end{itemize}

In \cite{PatnaikI97}, Dyn-FO is defined as the class of (Boolean) queries that can be maintained for empty initial data\-bases with first-order update formulas,  first-order definable initialization mapping and without built-in data. Furthermore, a larger class with polynomial-time computable initialization mapping was considered. {Also \cite{Etessami98} considers empty initial databases without built-in data.} In \cite{GraedelS12}, general instances (with non-empty initial databases) are allowed, but the initialization mapping has to be defined by logical formulas and is thus always invariant; and there is no built-in data. {In \cite{GeladeMS12} update formulas are restricted to be quantifier-free, the initial database is empty and a built-in order is available.}

In this article, the main dynamic classes do not allow built-in data. We call a dynamic schema \textit{normal} if it has an empty built-in schema $\builtinSchema$. 

We consider the following basic dynamic complexity classes.
\begin{definition}(\DynFO, \DynProp)
  $\DynFO$ is the class of all dynamic queries maintainable by dynamic
  programs with first-order update formulas over normal dynamic schemas. $\DynProp$ is the
  subclass of $\DynFO$, where update formulas do not use
  quantifiers. A dynamic program is \emph{$k$-ary} if the arity
  of its auxiliary relation symbols is at most $k$. By $k$-ary
  $\DynProp$ (resp. $\DynFO$) we refer to dynamic queries that can be
  maintained with $k$-ary dynamic programs. 
\end{definition}

At times we also consider dynamic programs with non-empty relational built-in schemas. We denote the extension of a dynamic class by programs with non-empty built-in schemas by a superscript $^*$, as in $\DynPropbi$. We note that the arity restrictions in the above definition  do not apply to the built-in relations. 

In our basic setting the initialization mappings can be arbitrary. We will explicitly state when we relax this most general setting. 
Now we sketch important relaxations. Figure \ref{figure:initmodels} illustrates the relationships between the various settings.
    \begin{figure}[t]
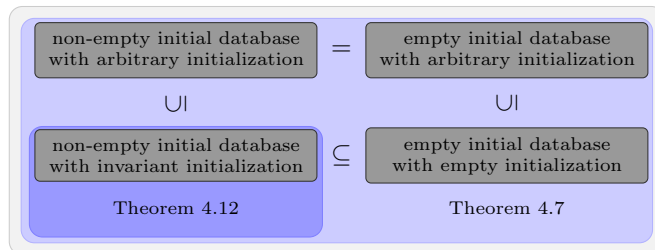
 
      \begin{center}
      \picinitmodels
      \caption{Relationship between different dynamic settings considered in the literature. Inclusion is with respect to the class of queries that can be maintained for a fixed (arbitrary) update language. Theorem \ref{theorem:binary} holds for all settings, Theorem \ref{theorem:logicinit} only for the lower left setting. \label{figure:initmodels}}
      \end{center}\vspace{-7mm}
    \end{figure}

First we note that for arbitrary initialization mappings, the same queries can be maintained regardless whether one starts from an empty or from a non-empty initial database.\footnote{The initialization for a non-empty database can be obtained as the auxiliary relations obtained after inserting all tuples of the database into the empty one.} Restricting the setting for non-empty initial databases to invariant auxiliary data initialization leads to the initialization used in \cite{GraedelS12} (called \textit{invariant initialization} in the following). For empty initial databases, allowing empty initial auxiliary data only 
leads to the initialization model of \cite{PatnaikI97, Etessami98} (called \textit{empty initialization} in the following). 

It is easy to see that applying an invariant initialization mapping to an empty database is pretty much useless, as, all tuples with the same constants at the same positions are treated in the same way. Therefore, queries maintainable in \DynFO or \DynProp with empty initial database and invariant initialization can also be maintained with empty initialization\footnote{We do not formally prove this here.}. This statement also holds in the presence of arbitrary built-in relations.

From now on we restrict our attention to quantifier-free update programs. Next, we give an example of such a program.

\begin{example}\label{example:emptylist}
    We provide a $\DynProp$-program $\prog$ for the dynamic variant of the Boolean query \problem{NonEmptySet}, where, for a unary relation $U$ subject to insertions and deletions of elements, one asks whether $U$ is empty. Of course, this query is trivially expressible in first-order logic, but not without quantifiers.    The program $\prog$ illustrates a technique to maintain lists with quantifier-free dynamic programs, introduced in \cite[Proposition 4.5]{GeladeMS12}, which is used in some of our upper bounds. 

   The program $\prog$ is over auxiliary schema  $\auxSchema = \{Q, \First, \Last, \List\}$, where $Q$ is the query bit (i.e.\ a $0$-ary relation symbol), $\First$ and $\Last$ are unary relation symbols, and $\List$ is a binary relation symbol.  The idea is to store in a program state $\state$ a list of all elements currently in $U$. The list structure is stored in the binary relation $\List^\state$ such that $\List^\state(a,b)$ holds for all elements $a$ and $b$ that are adjacent in the list. The first and last element of the list are stored in $\First^\state$ and $\Last^\state$, respectively. We note that the order in which the elements of $U$ are stored in the list depends on the order in which they are inserted into the set.  
  For a given instance of \problem{NonEmptySet} the initialization mapping initializes the auxiliary relations accordingly.

  \insertdescr{U}{a}{
    A newly inserted element is attached to the end of the list\footnote{For simplicity we assume that only elements that are not already in $U$ are inserted, the formulas given can be extended easily to the general case. Similar assumptions are made whenever necessary.}. Therefore the $\First$-relation does not change except when the first element is inserted into an empty set $U$. Furthermore, the inserted element is the new last element of the list and has a connection to the former last element. Finally, after inserting an element into $U$, the query result is 'true': 
    \begin{align*}
      \uf{\First}{\ins}{a}{x} &\df (\neg Q \mand a = x) \mor (Q \mand \First(x)) \\
      \uf{\Last}{\ins}{a}{x} &\df a = x \\
      \uf{\List}{\ins}{a}{x,y} &\df \List(x,y) \mor (\Last(x) \mand a = y)  \\
      \uf{Q}{\ins}{a}{} &\df \top. 
    \end{align*}  }  \deletedescr{U}{a}{
    How a deleted element $a$ is removed from the list, depends on whether $a$ is the first element of the list, the last element of the list or some other element of the list. The query bit remains 'true', if $a$ was not the first \emph{and} last element of the list.
    \begin{align*}
      \uf{\First}{\del}{a}{x} &\df (\First(x) \mand a \neq x) \mor (\First(a) \mand \List(a,x)) \\
      \uf{\Last}{\del}{a}{x} &\df (\Last(x) \mand a \neq x)  \mor (\Last(a) \mand \List(x,a)) \\
      \uf{\List}{\del}{a}{x,y} &\df x \neq a \mand y \neq a \mand \big(\List(x,y) \mor (\List(x, a) \mand \List(a, y))\big)\\      
      \uf{Q}{\del}{a}{} &\df \neg(\First(a) \wedge \Last(a)) 
    \end{align*}
  }
  
\end{example}

  \section{Lower Bounds for Dynamic Reachability}\label{section:reach}
    \makeatletter{}In this section we prove lower bounds for the maintainability of the dynamic $s$-$t$-reachability query \mbox{\dynstReachQ} with quantifier-free update formulas. 

First we introduce a tool for proving lower bounds for quantifier-free formulas. Afterwards we prove that 
\begin{itemize}
  \item \dynstReachQ is not in binary \DynPropbi; and
  \item \mbox{\dynstReachQ} is not in \DynPropbi with invariant initialization mappings.
\end{itemize}

The first result is used to obtain an arity hierarchy up to arity three for quantifier-free updates and binary queries.

The proofs use the following tool which is a slight variation of Lemma 1 from \cite{GeladeMS12}. The intuition is as follows. When updating an auxiliary tuple $\vec c$ after an insertion or deletion of a tuple $\vec d$, a quantifier-free update formula has access to $\vec c$, $\vec d$, and the constants only. Thus, if a sequence of modifications changes only tuples from a substructure $\calA$ of $\calS$, the auxiliary data of $\calA$  is not affected by information outside $\calA$. In particular, two isomorphic substructures $\calA$ and $\calB$ should remain isomorphic, when corresponding modifications are applied to them.

We formalize the notion of corresponding modifications as follows. Let $\pi$ be an isomorphism from a structure $\calA$ to a structure $\calB$. Two modifications $\delta(\vec a)$ on $\calA$ and $\delta(\vec b)$ on $\calB$ are said to be \textit{$\pi$-respecting} if  $\vec b = \pi(\vec a)$. Two sequences $\alpha=\delta_1\cdots\delta_m$ and $\beta=\delta'_1\cdots\delta'_m$ of modifications respect $\pi$ if, for every $i\le m$, $\delta_i$ and $\delta'_i$ are $\pi$-respecting.

  \begin{figure}[t]
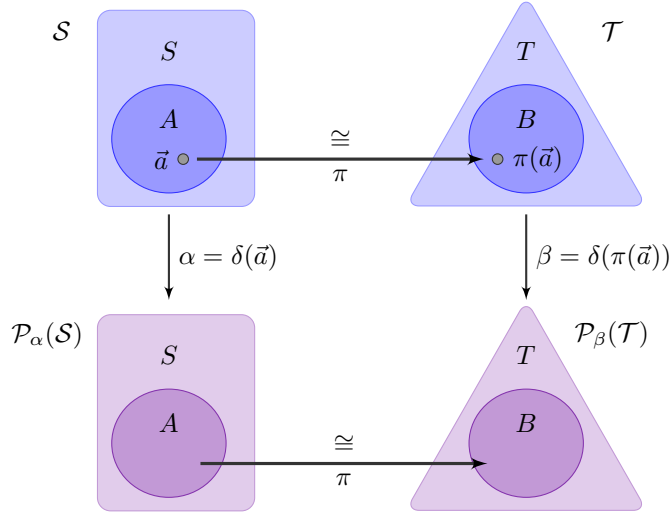
 
    \begin{center}
    \scalebox{1.0}{
     \picsubstructure
    }
    \caption{The statement of the substructure lemma. \label{figure:lemma:substructure}}
    \end{center}  \end{figure}

\begin{lemma}[Substructure lemma for \DynPropbi]\label{lemma:substruclemma}
  Let $\prog$ be a  \DynPropbi program and $\calS$ and $\calT$ states of $\prog$ with domains $S$ and $T$, respectively. Further, let $A \subseteq S$ and $B \subseteq T$ such that $\restrict{\calS}{A}$ and $\restrict{\calT}{B}$ are isomorphic via $\pi$. Then $\restrict{\updateState{P}{\alpha}{\calS}}{A}$ and $\restrict{\updateState{P}{\beta}{\calT}}{B}$ are isomorphic via  $\pi$ for all $\pi$-respecting modification sequences $\alpha$, $\beta$ on $A$ and $B$.
\end{lemma}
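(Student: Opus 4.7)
The plan is to proceed by induction on the length $m$ of the modification sequences $\alpha$ and $\beta$. The base case $m=0$ is exactly the hypothesis that $\restrict{\calS}{A}\isomorphVia{\pi}\restrict{\calT}{B}$. So the work is in the inductive step: assuming the statement holds for $\alpha=\delta_1\cdots\delta_m$ and $\beta=\delta'_1\cdots\delta'_m$, I would establish it for their one-step extensions $\alpha\delta$ and $\beta\delta'$, where $\delta=\delta(\vec a)$ with $\vec a$ a tuple over $A$ and $\delta'=\delta(\pi(\vec a))$ is its $\pi$-respecting counterpart.

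Let $\calS'\df P_\alpha(\calS)$ and $\calT'\df P_\beta(\calT)$. By the induction hypothesis, $\restrict{\calS'}{A}\isomorphVia{\pi}\restrict{\calT'}{B}$. I would split the argument into three parts corresponding to the three components of a program state. \emph{Input relations:} a $\pi$-respecting pair of modifications affects only the tuples $\vec a$ and $\pi(\vec a)$, which lie in $A$ and $B$ respectively, so $\pi$ still preserves $\inpSchema$ on the restrictions. \emph{Built-in relations:} these are never changed by the update program and are already preserved by $\pi$ by hypothesis. \emph{Auxiliary relations:} this is the step where quantifier-freeness enters essentially. For any $R\in\auxSchema$ of arity $k$ and any $k$-tuple $\vec c$ over $A$, by the semantics of update programs
\[
\vec c\in R^{P_\delta(\calS')} \iff \calS'\models \uf{R}{\delta}{\vec a}{\vec c}.
\]
Since $\uf{R}{\delta}{\vec x}{\vec y}$ is quantifier-free, its truth under any assignment depends only on the atomic $\tau$-type of that assignment together with the constants. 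Because $A$ contains all constants and $\vec a,\vec c$ lie in $A$, this type is identical in $\calS'$ and in $\restrict{\calS'}{A}$. Transporting along $\pi$, the corresponding atomic type of $(\pi(\vec a),\pi(\vec c))$ is the same in $\restrict{\calT'}{B}$ and hence in $\calT'$, so $\calT'\models\uf{R}{\delta}{\pi(\vec a)}{\pi(\vec c)}$, which yields $\pi(\vec c)\in R^{P_{\delta'}(\calT')}$. Thus $\pi$ preserves every auxiliary relation on the restrictions after the additional modification.

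The main conceptual point, and the only place where the restriction to \DynPropbi is used, is the invocation of quantifier-freeness to reduce the evaluation of $\uf{R}{\delta}{\vec a}{\vec c}$ to the atomic type of $(\vec a,\vec c)$ inside the substructure; with a single existential quantifier the formula could query elements outside $A$, and the argument would break. The only bookkeeping obstacle is making sure that constants (which may appear inside the update formulas and in atomic types) are treated consistently — this is handled by the standing convention that the domain of a substructure contains all constant interpretations, so that $\restrict{\calS'}{A}$ and $\restrict{\calT'}{B}$ are genuine $\tau$-structures and $\pi$ preserves constants. The induction then closes and the lemma follows.
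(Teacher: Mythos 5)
Your proof is correct and follows essentially the same route as the paper's: induction on the length of the modification sequences, reduction to a single pair of $\pi$-respecting modifications, and the observation that a quantifier-free update formula evaluated on tuples over $A$ (which contains all constants by the convention on induced substructures) depends only on the atomic type inside the substructure and is therefore transported by $\pi$. Your explicit three-way split into input, built-in, and auxiliary relations is slightly more detailed bookkeeping than the paper's one-paragraph argument, but it is the same proof.
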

The substructure lemma is illustrated in Figure \ref{figure:lemma:substructure}.
\begin{proof}
  The lemma can be shown by induction on the length of the modification sequences. To this end, it is sufficient to prove the claim for a pair of $\pi$-respecting modifications $\delta(\vec a)$ and $\delta(\vec b)$ on $A$ and $B$. We abbreviate  $\restrict{\calS}{A}$ and $\restrict{\calT}{B}$ by $\calA$ and $\calB$, respectively.

   Since $\pi$ is an isomorphism from $\calA$ to $\calB$, we know that $R^{\calA}(\vec d)$ holds if and only if  $R^{\calB}(\pi(\vec d))$ holds, for every $m$-tuple $\vec d$ over $A$ and every relation symbol $R \in \schema$. Therefore, $\varphi(\vec x)$ evaluates to true in $\calA$ under $\vec d$  if and only if it does so in $\calB$ under $\pi(\vec d')$, for every quantifier-free formula $\varphi(\vec x)$ over schema $\schema$. Thus all update formulas from $\prog$ yield the same result for corresponding tuples $\vec d$ and $\pi (\vec d)$ from $A$ and $B$, respectively. Hence $\restrict{\updateState{P}{\delta(\vec a)}{\calS}}{A}$ is isomorphic to $\restrict{\updateState{P}{\delta(\pi(\vec a))}{\calS}}{B}$. This proves the claim.
\end{proof}

The following corollary is implied by Lemma \ref{lemma:substruclemma}, since the $0$-ary auxiliary relations of two isomorphic structures coincide.

\begin{corollary}\label{lemma:substruccor}
Let $\prog$ be a \DynPropbi-program with designated Boolean query symbol $Q$, and let  $\calS$ and $\calT$ be states of $\prog$ with domains $S$ and $T$. Further let $A \subseteq S$ and $B \subseteq T$ such that $\restrict{\calS}{A}$ and $\restrict{\calT}{B}$ are isomorphic via $\pi$. Then $Q$ has the same value in $\updateState{P}{\alpha}{\calS}$ and $\updateState{P}{\beta}{\calT}$ for all $\pi$-respecting sequences $\alpha$, $\beta$ of modifications on $A$ and $B$.
\end{corollary}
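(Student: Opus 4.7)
The plan is to derive the corollary directly from Lemma \ref{lemma:substruclemma} by exploiting that the designated query symbol $Q$ is $0$-ary, so its interpretation is insensitive to the underlying domain and, in particular, coincides in any structure and in any of its substructures.

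First, I would apply Lemma \ref{lemma:substruclemma} to the given data. Since $\restrict{\calS}{A}\isomorphVia{\pi}\restrict{\calT}{B}$ and $\alpha,\beta$ are $\pi$-respecting modification sequences on $A$ and $B$, the lemma yields
\[
  \restrict{\updateState{P}{\alpha}{\calS}}{A} \isomorphVia{\pi} \restrict{\updateState{P}{\beta}{\calT}}{B}.
\]
In particular, both restricted structures agree on every $0$-ary auxiliary relation symbol, because $\pi$ is $\schema$-preserving and a $0$-ary symbol has a unique (and empty) tuple to evaluate.

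Next, I would observe that for a $0$-ary symbol $Q \in \auxSchema$, its value in a state $\state$ only depends on $\state$ through the set $Q^\state \subseteq \{()\}$; restricting the domain to a subset $A$ (that still contains all constants) leaves $Q^\state$ unchanged. Hence $Q^{\updateState{P}{\alpha}{\calS}} = Q^{\restrict{\updateState{P}{\alpha}{\calS}}{A}}$ and, symmetrically, $Q^{\updateState{P}{\beta}{\calT}} = Q^{\restrict{\updateState{P}{\beta}{\calT}}{B}}$. Combining these two equalities with the isomorphism above gives $Q^{\updateState{P}{\alpha}{\calS}} = Q^{\updateState{P}{\beta}{\calT}}$, proving the corollary.

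There is no real obstacle here; the only small subtlety is to make sure that the restrictions $\restrict{\calS}{A}$ and $\restrict{\calT}{B}$ are well-defined as substructures, which requires that $A$ and $B$ contain all constants of $\calS$ and $\calT$, respectively. This is implicit in the hypothesis that the restrictions are isomorphic via $\pi$ (since $\pi$ must preserve constants), so no additional work is needed.
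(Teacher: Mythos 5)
Your proposal is correct and matches the paper's argument, which simply notes that the corollary follows from Lemma \ref{lemma:substruclemma} because the $0$-ary auxiliary relations of two isomorphic structures coincide. Your additional remark that restricting the domain does not change the interpretation of a $0$-ary symbol is exactly the (implicit) bridging step the paper relies on, so nothing further is needed.
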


The Substructure Lemma can be applied along the following lines to prove that \dynstReachQ cannot be maintained in some settings with quantifier-free updates. Towards a contradiction, assume that there is a quantifier-free program \mbox{$\prog=(P, \init, Q)$} that maintains \dynstReachQ. Then, find 
\begin{itemize}
 \item two states $\calS$ and $\calT$ occurring as states\footnote{I.e.\ $\calS = \updateState{\prog}{\alpha}{\state_\init(G)}$ for some \stgraph $G$ and modification sequence $\alpha$, and likewise \mbox{for $\calT$.}} of $\prog$ with current graphs $G_\calS$ \mbox{and $G_\calT$;}
 \item substructures $\restrict{\calS}{S'}$ and $\restrict{\calT'}{T'}$ of $\calS$ and $\calT$ isomorphic via $\pi$; and
 \item two $\pi$-respecting modification sequences $\alpha$ and $\beta$ on $S'$ and $T'$ such that $\alpha(G_\calS)$ is in $\streachQ$ and $\beta(G_\calT)$  is not in $\streachQ$.
\end{itemize}
This yields the desired contradiction, since $Q$ has the same value in $\updateState{P}{\alpha}{\calS}$ and $\updateState{P}{\beta}{\calT}$ by the substructure lemma. 

How such states $\calS$ and $\calT$ can be obtained depends on the particular setting. Yet, Ramsey's theorem and Higman's lemma often prove to be useful for this task. Next, we present the variants of these theorems used in our proofs. 

\begin{theorem}[Ramsey's Theorem for Struc\-tures]\label{coro:ramsey}
 For every  schema $\schema$ and all natural numbers $k$ and $n$ there exists a number $\ramsey{\schema, k}{n}$ such that, for every $\schema$-structure $\struc$ with domain $A$ of size $\ramsey{\schema,k}{n}$, every $\vec d \in A^k$ and every order $\norder$ on $A$, there is a subset $B$ of $A$ of size $n$ with $B \cap \vec d = \emptyset$, such that,  for every $l$, the type of $(\vec a, \vec d)$ in $\struc$ is the same, for all $\norder$-ordered $l$-tuples $\vec  a$ over $B$. 
\end{theorem}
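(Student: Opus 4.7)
The plan is to reduce the statement to the classical Ramsey theorem for colorings of $l$-element subsets, applied iteratively for each $l$ up to the maximum arity of $\schema$.

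First, I would observe (using the remark made in \mbox{Section \ref{section:preliminaries}}) that it suffices to find a subset $B$ that is homogeneous (with respect to $\vec d$) only for $l$ up to the maximum arity $m$ of relation symbols in $\schema$: any atomic formula involves at most $m$ variables, so if the type of $(\vec a, \vec d)$ is fixed for all $\norder$-ordered $l$-tuples $\vec a$ over $B$ and all $l\le m$, it is fixed for arbitrary $l$. Next, for each $l\le m$, the number $t_l$ of possible atomic types of $(l+k)$-tuples over the schema $\schema$ is finite, since $\schema$ has finitely many relation and constant symbols of bounded arity.

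Now, given $A$, the tuple $\vec d$ and the order $\norder$, remove $\vec d$ from $A$ and, for each $l \in \{1,\ldots,m\}$, define a coloring $c_l$ of $l$-element subsets of $A\setminus\vec d$: assign to a subset $\{a_1,\ldots,a_l\}$ with $a_1\prec\cdots\prec a_l$ the atomic type $\type{\struc}{a_1,\ldots,a_l,d_1,\ldots,d_k}$. This is a coloring using at most $t_l$ colors. I then apply the classical Ramsey theorem for $m$-uniform hypergraphs to obtain a $c_m$-monochromatic subset $B_m\subseteq A\setminus\vec d$, then apply the classical Ramsey theorem inside $B_m$ to obtain a $c_{m-1}$-monochromatic $B_{m-1}\subseteq B_m$ (which remains $c_m$-monochromatic since monochromaticity is inherited by subsets), and so on until $B_1\subseteq\cdots\subseteq B_m$ with $|B_1|=n$. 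Setting $B\df B_1$ yields the required homogeneous set. The Ramsey number $\ramsey{\schema,k}{n}$ is then obtained by unfolding this chain of Ramsey bounds, starting from $n$ and moving upward through $l=1,\ldots,m$, and finally adding $k$ to account for the removal of $\vec d$.

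The main technical content lies entirely in the classical hypergraph Ramsey theorem; the contribution of the statement itself is merely to package it for structures over an arbitrary schema with a distinguished parameter tuple $\vec d$. The only mildly delicate point is to ensure that homogeneity for types of $(\vec a,\vec d)$ is preserved as one passes to progressively smaller subsets for lower arities, but this is immediate since the chosen sets form a decreasing chain and each coloring depends only on the $l$-subset together with the fixed tuple $\vec d$.
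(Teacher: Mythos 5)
Your proof is correct and follows essentially the same route as the paper: iterate the classical hypergraph Ramsey theorem once per arity $l\le m$, using atomic types as colors, and observe that homogeneity up to the maximal arity of $\schema$ suffices. The only (inessential) difference is that you fold the parameter tuple $\vec d$ directly into the coloring of $l$-subsets, whereas the paper first proves the case without $\vec d$ and constants and then reduces the general case to it by encoding the types of $(\vec a,\vec d)$ into fresh relation symbols on $A\setminus C$.
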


The proof of Theorem \ref{coro:ramsey} uses the well-known Ramsey theorem for hypergraphs (see, e.g., \cite[p. 7]{GrahamRS1990}) and is based on the proof of Observation 1' in \cite[p. 11]{GeladeMS12}. For the sake of completeness, the proof is presented in the following.

A \textit{$k$-hypergraph} $G$ is a pair $(V, E)$ where $V$ is a set and $E$ is a set of $k$-element subsets of $V$. If $E$ contains all $k$-element subsets of $V$, then $G$ is called \textit{complete}. A $k$-hypergraph $G'=(V',E')$ is a \textit{sub-$k$-hypergraph} of a $k$-hypergraph $G=(V,E)$, if $V' \subseteq V$ and $E'$ contains all edges $e \in E$ with $e \subseteq V'$. A \textit{$C$-coloring} $\col$ of $G$, where $C$ is a finite set of colors, is a mapping that assigns to every edge in $E$ a color from $C$, that is,  $\col: E \rightarrow C$. A \textit{$C$-colored $k$-hypergraph} is a pair $(G, \col)$ where $G$ is a $k$-hypergraph and $col$ is a $C$-coloring of $G$. If the name of the $C$-coloring is not important we also say \textit{$G$ is $C$-colored}.

\begin{theorem}(Ramsey's Theorem for Hypergraphs) \label{theorem:ramsey}
  For every set  $C$ of colors and natural numbers $n$ and $k$ there exists a number $\ramsey{C}{n}$ such that, if the edges of a complete $k$-hypergraph of size $\ramsey{C}{n}$ are $C$-colored, then the hypergraph contains a complete sub-$k$-hypergraph with $n$ nodes whose edges are all colored with the same color.
\end{theorem}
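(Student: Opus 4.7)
The plan is to prove the theorem by induction on the uniformity $k$, following the classical Erdős--Rado construction. The base case $k=1$ is pigeonhole: any $C$-coloring of $|C|(n-1)+1$ singletons monochromatically colors at least $n$ of them, so this bound suffices.

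For the inductive step, assume the statement for $k-1$. Given $n$, I would build iteratively a sequence of vertices $v_1, v_2, \ldots, v_M$, nested subsets $V_0 \supseteq V_1 \supseteq \ldots \supseteq V_{M-1}$ of the original vertex set, and auxiliary colors $c_1, \ldots, c_M \in C$ as follows. Starting from a sufficiently large $V_0$, at step $i$ I pick an arbitrary $v_i \in V_{i-1}$ and consider the induced $C$-coloring of the $(k-1)$-subsets of $V_{i-1} \setminus \{v_i\}$ that sends each $(k-1)$-edge $e$ to the color of the original $k$-edge $e \cup \{v_i\}$. Applying the induction hypothesis to this $(k-1)$-uniform coloring on a sufficiently large subset, I obtain a monochromatic set $V_i \subseteq V_{i-1} \setminus \{v_i\}$, and I set $c_i$ to be its common color.

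The key observation is that for any indices $i_1 < i_2 < \ldots < i_k$, all of $v_{i_2}, \ldots, v_{i_k}$ lie in $V_{i_1}$ by construction, so the $k$-edge $\{v_{i_1}, \ldots, v_{i_k}\}$ consists of $v_{i_1}$ together with a $(k-1)$-subset of $V_{i_1}$ and therefore has color $c_{i_1}$. Choosing $M = |C|(n-1)+1$ and applying pigeonhole to $c_1, \ldots, c_M$ yields a color $c$ that is attained at least $n$ times, and the $n$ corresponding vertices induce a complete monochromatic sub-$k$-hypergraph of color $c$.

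The only technical work is bookkeeping the sizes: set the target size at step $M$ to $1$ and work backwards via $|V_{i-1}| \geq R^{(k-1)}_{C}(|V_i|) + 1$, where $R^{(k-1)}_C$ denotes the Ramsey number provided by induction on the previous uniformity. This forces $|V_0|$ to be an $M$-fold iterate of $R^{(k-1)}_C$, which is the source of the tower-type bound on $R_C(n)$. The main obstacle, if one can call it that, is purely this parameter tracking; the combinatorial core of the argument is immediate once the nested construction is in place.
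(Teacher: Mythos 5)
Your argument is correct: it is the standard Erd\H{o}s--Rado induction on the uniformity $k$, with the nested sets $V_0\supseteq V_1\supseteq\cdots$, the per-vertex colors $c_i$, and a final pigeonhole step, and your size bookkeeping ($|V_{i-1}|\geq R^{(k-1)}_C(|V_i|)+1$, $M=|C|(n-1)+1$) is the right recursion. Note, however, that the paper does not prove this theorem at all --- it states it as a known result and cites Graham, Rothschild and Spencer --- so there is no in-paper proof to compare against; the proof you supply is essentially the textbook argument that the citation points to.
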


\begin{proofof}{Theorem \ref{coro:ramsey}}
    Given a schema $\schema$ and natural numbers $k$, $n$. Let $\ramsey{\schema, k}{n}$ be chosen sufficiently large     with respect to $k$, $n$, and $\schema$ such that the following argument works. Further let $\struc$ be a $\schema$-structure with domain $A$ of size greater than  $\ramsey{\schema, k}{n}$ and $\norder$ an arbitrary order on $A$. Denote by $m$ the maximal arity in $\schema$ and by $\vec c$ the constants of $\struc$ in some order. Further denote by $C$ the set of all constants and all elements occurring in $\vec d$. 

    Observe that proving the claim for $l \leq m$ is sufficient.

    We first prove the claim for $|C|=0$, by constructing inductively sets $B_l$ that satisfy the condition for $l$ with $l \leq m$. Let $B_0 = A$. The set $B_{l}$, $l \leq m$, is obtained from $B_{l-1}$ as follows. From $B_{l-1}$ a coloring $\col$ of the complete $l$-hypergraph $G$ with node set $B_{l-1}$ is constructed. The coloring $\col$ uses $l$-ary $\schema$-types as colors. An edge $e = \{e_1, \ldots, e_l\}$ with $e_1 \norder \ldots \norder e_l$ is colored by the type $\type{\struc}{e_1, \ldots, e_l}$. Because $B_{l-1}$ is large, it has, by Ramsey's theorem, a subset $B_{l}$ such that all edges $e \subseteq B_{l}$ of size $l$ are colored with the same color by $\col$. But then, by the definition of $\col$, all $\norder$-ordered $l$-tuples over $B_{l}$  have the same type in $\struc$. By this construction we obtain a set $B_m$ such that for every $l \leq m$ the type of all $\norder$-ordered $l$-tuples over $B_m$ is the same. Setting $B := B_m$ proves the claim for $|C| = 0$.

    The idea for the case $|C| \neq 0$ is to construct from $\struc$ a new structure $\struc'$ of an extended schema over domain $A' = A \setminus C$ such that $\struc'$ encodes all information about $C$ contained in $\struc$ and then use the case $|C|=0$ for $\struc'$.

    The structure $\struc'$ is of schema $\schema \cup \schema'$, where $\schema'$ contains for every $l \leq m$ and every $(l+|C|)$-ary $\schema$-type $t$, an $l$-ary relation symbol $R_t$. An $l$-tuple $\vec a$ is in $R_t^{\struc'}$ if and only if $t$ is the $\schema$-type of $(\vec a, \vec C)$.  Application of the case $|C|=0$ to $\struc'$ yields a huge homogeneous subset $B'$ with respect to $\norder$ and schema  $\schema \cup \schema'$. Then, for every $l \leq m$, the type of $(\vec a, \vec C)$ in $\struc$ is the same, for all $\norder$-ordered $l$-tuples $\vec  a$ over $B'$. This proves the claim.
\end{proofof}

Now we state the variant of Higman's Lemma that will be used later. A word $u$ is a \emph{subsequence} of a word $v$, in symbols $u \subseq v$, if $u = u_1\ldots u_k$ and $v = v_0u_1v_1\ldots v_{k-1}u_kv_k$ for some words $u_1,\ldots,u_k$ and $v_0,\ldots,v_k$. 

\begin{theorem}[Higman's Lemma]
  For every infinite sequence $(w_i)_{i \in \N}$ of words  over an alphabet $\Sigma$ there are $l$ and $k$ such that $l < k$ and $w_l \subseq w_k$.
\end{theorem}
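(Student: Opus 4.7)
The plan is to prove Higman's Lemma by the classical minimal bad sequence argument of Nash-Williams. Call an infinite sequence of words \emph{bad} if no indices $l < k$ satisfy $w_l \subseq w_k$. Assuming for contradiction that a bad sequence exists, I would first construct a \emph{minimal} bad sequence $(v_i)_{i \in \N}$ inductively: having chosen $v_1, \ldots, v_{n-1}$ so that some bad sequence starts with this prefix, pick $v_n$ of minimum possible length among all words $w$ for which $v_1, \ldots, v_{n-1}, w$ extends to a bad sequence. The construction is well defined because at each stage the set of admissible lengths is a nonempty subset of $\N$ and hence has a minimum. A short argument then shows that $(v_i)$ itself must be bad: otherwise some $v_l \subseq v_k$ with $l < k$, contradicting the badness of any sequence extending the prefix $v_1, \ldots, v_k$ (which exists by construction).

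Next, since the empty word is a subsequence of every word, no $v_i$ is empty, and so every $v_i$ has a first letter. Because $\Sigma$ is finite, the pigeonhole principle yields a letter $a \in \Sigma$ that is the first letter of $v_i$ for infinitely many indices $i_1 < i_2 < \cdots$. Writing $v_{i_j} = a u_j$ for the corresponding tails, I would form the hybrid sequence
\[
\sigma \df v_1, v_2, \ldots, v_{i_1 - 1}, u_1, u_2, u_3, \ldots
\]
and argue that $\sigma$ is itself bad.

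The verification amounts to ruling out three kinds of subsequence relation inside $\sigma$: a relation $v_l \subseq v_m$ for $l < m < i_1$ is impossible by badness of $(v_i)$; a relation $u_j \subseq u_k$ for $j < k$ would give $v_{i_j} = a u_j \subseq a u_k = v_{i_k}$, again contradicting badness of $(v_i)$; and a relation $v_l \subseq u_j$ with $l < i_1$ would give $v_l \subseq a u_j = v_{i_j}$, the same contradiction. Since $\sigma$ is bad and its $i_1$-th element $u_1$ is strictly shorter than $v_{i_1}$, we have produced a bad sequence extending the prefix $v_1, \ldots, v_{i_1-1}$ whose $i_1$-th element is shorter than $v_{i_1}$. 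This contradicts the minimal choice of $v_{i_1}$ and finishes the proof.

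The main obstacle is formulating the minimality principle precisely and carefully checking the three mixed subsequence comparisons inside $\sigma$; once the bookkeeping with the indices $i_j$ is handled, the argument proceeds essentially automatically. An alternative route is to appeal to Ramsey's theorem (already stated as Theorem~\ref{theorem:ramsey}) applied to the $2$-coloring of pairs $\{l,k\}$ by whether $w_l \subseq w_k$, but making this work requires an additional case distinction, so the Nash-Williams argument seems cleaner.
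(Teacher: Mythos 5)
Your proof is correct. Note first that the paper does not actually prove this statement: Higman's Lemma is stated there as a known classical fact, and only the quantitative strengthening that follows it is given a reference (to Schmitz--Schnoebelen), so there is no in-paper argument to compare against. Your Nash--Williams minimal-bad-sequence argument is the standard self-contained proof and all the steps check out: the minimal bad sequence is well defined (at each stage the admissible set of words is nonempty by the inductive invariant, so the set of lengths has a minimum), no term of a bad sequence can be empty, the pigeonhole step uses finiteness of $\Sigma$, and the three case distinctions showing the hybrid sequence $\sigma$ is bad are exactly the right ones -- in particular the mixed case correctly uses $v_l \subseq u_j \subseq a u_j = v_{i_j}$ with $l < i_1 \le i_j$. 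The contradiction with the minimal choice of $v_{i_1}$ (since $|u_1| < |v_{i_1}|$ and $\sigma$ is bad with prefix $v_1,\ldots,v_{i_1-1}$) closes the argument. One small caution about your aside: the $2$-coloring Ramsey route does not merely need ``an additional case distinction'' -- the homogeneous set could be an infinite antichain, and excluding infinite antichains is essentially the content of the lemma itself, so that route is genuinely insufficient as described. Since you discard it in favour of Nash--Williams, this does not affect the proof.
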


We will actually make use of the following stronger result. See e.g.\ \cite[Proposition 2.5, page 3]{SchmitzS11} for a proof.

\begin{theorem}
  For every alphabet of size $c$ and function $g: \N \rightarrow \N$ there is a natural number $H(c)$ such that in every sequence $(w_i)_{1 \leq i \leq H(c)}$ of $H(c)$ many words with $|w_i| \leq g(i)$ there are $l$ and $k$ with $l < k$ and $w_l \subseq w_k$.
\end{theorem}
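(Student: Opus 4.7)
The plan is to derive this finitary strengthening from the classical Higman's lemma (the preceding theorem) by a compactness argument via König's lemma. First I would assume, towards a contradiction, that for the given alphabet $\Sigma$ of size $c$ and the given function $g:\N\rightarrow\N$ no such bound $H(c)$ exists. Then for every $N\in\N$ there is a \emph{bad} sequence of length $N$, i.e.\ words $w^{(N)}_1,\ldots,w^{(N)}_N$ with $|w^{(N)}_i|\leq g(i)$ for all $i\leq N$ and with no indices $l<k$ satisfying $w^{(N)}_l\subseq w^{(N)}_k$.

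Next I would organise all bad sequences into a rooted tree $T$: the nodes at depth $i$ are the bad sequences $(w_1,\ldots,w_i)$ over $\Sigma$ with $|w_j|\leq g(j)$ for every $j\leq i$, and the children of a node $(w_1,\ldots,w_i)$ are its bad one-step extensions $(w_1,\ldots,w_i,w_{i+1})$ with $|w_{i+1}|\leq g(i+1)$. Two properties of $T$ are essential: (a) $T$ has nodes at every finite depth $N$, since the witnesses supplied by the contradiction hypothesis populate depth $N$; and (b) $T$ is finitely branching, since each child at depth $i+1$ is a word over $\Sigma$ of length at most $g(i+1)$, and there are at most $\sum_{k=0}^{g(i+1)} c^k$ such words.

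Finally, I would apply König's lemma to the infinite, finitely branching tree $T$ to extract an infinite branch. This branch is an infinite sequence $(w_i)_{i\in\N}$ over $\Sigma$ with $|w_i|\leq g(i)$ and with no earlier word a subsequence of any later one, directly contradicting the classical Higman's lemma stated just above. The only mild obstacle is the bookkeeping in the tree construction — one must ensure that each bad sequence of length $N$ from the hypothesis really sits at depth $N$ in $T$, which follows because every prefix of a bad sequence is itself bad — but once finite branching and unboundedness of depth are in hand, König's lemma and classical Higman finish the job.
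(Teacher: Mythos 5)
Your argument is correct: the length control $|w_i|\leq g(i)$ is exactly what makes the tree of bad sequences finitely branching, every prefix of a bad sequence is indeed bad, and K\"onig's lemma then turns the assumed failure of a uniform bound into an infinite bad sequence contradicting the infinitary Higman's lemma stated just before. Note, however, that the paper does not prove this statement at all --- it simply cites Schmitz and Schnoebelen (Proposition 2.5) --- so your proof is genuinely different from what the paper relies on. The cited work establishes the result constructively, via so-called length function theorems that give explicit (multiply-recursive) upper bounds on the maximal length of a $g$-controlled bad sequence; your compactness route is shorter and self-contained but yields only the existence of $H(c)$ with no explicit bound. For the lower-bound arguments in this paper, bare existence of $H(c)$ is all that is ever used (the sets $A'$, $B_3$, etc.\ just need to be ``sufficiently large''), so your proof fully suffices for the intended application; the quantitative version would only matter if one wanted explicit size bounds on the counterexample structures. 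One cosmetic remark: the notation $H(c)$ suppresses the dependence on $g$, which your proof makes visible --- the bound you obtain depends on both $c$ and $g$, as it must.
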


In the following we will refer to both results as Higman's Lemma.

\subsection{A Binary Lower Bound}

As already mentioned in the introduction, the proof that 
\dynstReachQ is not in unary \DynFO in \cite{DongS98}  uses constant-length modification sequences, and is mainly an application of a locality-based static lower bound for monadic second order logic. This technique does not seem to generalize to binary \DynFO. We prove the first unmaintainability result for \dynstReachQ with respect to binary auxiliary relations.
We recall that binary $\DynPropbi$ can have  built-in relations of arbitrary arity. 

\begin{theorem}\label{theorem:binary}
  $\dynProb{\streachQ}$ is not in binary $\DynPropbi$.
\end{theorem}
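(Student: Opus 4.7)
The plan is to derive a contradiction via Corollary~\ref{lemma:substruccor}, by constructing two states of a putative maintaining program whose restrictions to suitable subdomains are isomorphic via some bijection $\pi$, and then applying a pair of $\pi$-respecting modification sequences that produce different answers to \streachQ.

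Assume, for contradiction, that some binary \DynPropbi{} program $\prog$ maintains \dynstReachQ. I would work with $3$-layered \stgraphs built from two intermediate layers $A=\{a_1,\ldots\}$ and $B=\{b_1,\ldots\}$: every path from $s$ to $t$ must traverse some $a\in A$ followed by some $b\in B$, and the edge $a\to b$ is present only for certain pairs. For each $m$, I build such a graph incrementally from the empty graph, performing a long insertion sequence that places $m$-many $A$-nodes and a much larger pool of $B$-nodes and activates specific $a\to b$ edges. The point is that pairwise information between an $A$-node and a $B$-node is exactly what a binary auxiliary relation can encode, so this is the natural hard instance for binary auxiliary data.

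After each prefix of this insertion sequence I would apply Ramsey's Theorem for structures (Theorem~\ref{coro:ramsey}) to the schema $\tau = \inpSchema \cup \auxSchema \cup \builtinSchema$, together with the constants $s,t$, in order to cut down to a large homogeneous pool of $B$-nodes. Because $\auxSchema$ is binary and $\inpSchema$ is binary, only $2$-types need to be controlled among homogeneous elements; built-in relations of higher arity are absorbed uniformly by Theorem~\ref{coro:ramsey}. Iterating this produces a sequence of program states $\calT_1,\calT_2,\calT_3,\ldots$, each carrying a large homogeneous $B$-pool. I would then encode each $\calT_m$ as a word over a finite alphabet recording, for each $A$-node, its $2$-type relative to the homogeneous pool, and apply Higman's Lemma to obtain indices $k<l$ with $w_k\subseq w_l$. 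This subword embedding induces a bijection $\pi$ from a substructure of $\calT_k$ onto a substructure of $\calT_l$, and by homogeneity and the subword property the two substructures are isomorphic via $\pi$. Crucially, $\calT_l$ has additional $A$- or $B$-nodes outside the image of $\pi$ that carry an additional $s$-to-$t$ path absent from the image.

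The final step is a $\pi$-respecting batch of deletions inside both substructures that severs every $s$-to-$t$ path lying entirely inside the substructure: in $\calT_k$ the whole graph lies within the substructure, so $s$ is no longer connected to $t$; in $\calT_l$ the surviving node outside the image of $\pi$ still provides an $s$-to-$t$ path. By Corollary~\ref{lemma:substruccor} the query bit $Q$ must agree in both resulting states, contradicting the assumption that $\prog$ maintains \dynstReachQ.

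The main obstacle is calibrating the incremental construction so that the Ramsey step after each insertion still returns a sufficiently large homogeneous pool, and so that the resulting word encodings have length bounded by a function of $m$ for which Higman's Lemma applies. This requires a careful stratification of Ramsey parameters, since every new insertion potentially refines the $2$-types across the remaining $B$-pool; in addition, one has to argue that the homogeneity obtained after the \emph{insertion} phase persists after the final \emph{deletion} phase used to break reachability, which is exactly what the Substructure Lemma (Lemma~\ref{lemma:substruclemma}) delivers.
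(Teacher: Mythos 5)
Your high-level strategy (homogenize with Ramsey, find an embeddable pair with Higman, conclude with the Substructure Lemma) is the paper's strategy, but the proposal is missing the one device that makes the binary case actually go through, and the obstacle you defer as ``calibrating the incremental construction'' is not a calibration issue but a genuine circularity. Ramsey's theorem must be applied \emph{once}, to the auxiliary data of the state reached \emph{after} all insertions, and you cannot predict which homogeneous subset $B_3$ of the $B$-pool it will return; yet the edges you insert must be targeted at exactly that subset, since the Higman word has to be indexed by the neighbours of a single fan-out vertex inside the homogeneous set. The paper resolves this by working in a single state containing $2^{n_2}$ vertices $a_X$ in the $A$-layer, one for \emph{every} subset $X$ of a pre-homogenized pool $B_2$, with $a_X$ wired to exactly $X$. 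Then, whatever homogeneous $B_3\subseteq B_2$ Ramsey returns after the insertions, every nested prefix $X_i$ of $B_3$ already has its dedicated vertex $a_{X_i}$ in that same state, the word $w_i$ records the binary types of $(a_{X_i},b_j)$ for $b_j\in X_i$, and the Higman embedding $w_k\subseq w_l$ maps the neighbourhood of $a_{X_k}$ \emph{strictly into} that of $a_{X_l}$ --- which is precisely why deleting all $k$ edges at $a_{X_k}$ versus the corresponding $k$ of the $l$ edges at $a_{X_l}$ (and then inserting $(s,a_{X_k})$, resp.\ $(s,a_{X_l})$; the final step is not deletions only) yields different answers.

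Your alternative --- a separate insertion sequence for each $m$, a fresh Ramsey extraction after each prefix, and a word indexed by $A$-nodes --- breaks at two further points. First, homogeneity is a per-state property: if $\calT_k$ and $\calT_l$ arise from unrelated insertion histories, the common type of ordered pairs inside $\calT_k$'s pool need not equal the one inside $\calT_l$'s pool, and the types of pairs involving $s$ and $t$ need not match either, so the bijection induced by the subword embedding is not an isomorphism of substructures as Corollary~\ref{lemma:substruccor} requires (this is partly repairable by tagging letters, but only if both configurations live in, or descend from, one common state, which is what the paper arranges). Second, a word whose positions are $A$-nodes does not make the Higman embedding witness a strict inclusion of the edge-neighbourhood of a \emph{single} vertex, which is the property the final $\pi$-respecting modification exploits; with several $A$-nodes inside the substructure you would additionally have to control all binary types among $A$-nodes, for which your sketch provides no mechanism.
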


The proof of Theorem \ref{theorem:binary} will actually show that binary \DynPropbi cannot even maintain \dynstReachQ on 2-layered \stgraphs. 
These restricted graphs will then help us to show that binary \DynPropbi does not capture ternary \DynProp. This separation shows that the lower bound technique for binary \DynProp does not immediately transfer to ternary \DynProp (or ternary \DynPropbi). At the moment we do not know whether it is possible to adapt the technique to full \DynProp.

Before proving Theorem \ref{theorem:binary}, we show the following corresponding result for unary \DynPropbi whose proof uses the same techniques in a simpler setting.

\begin{proposition} \label{proposition:unary}
  The  dynamic \streachabilityquery is not in unary \DynPropbi, not even for $1$-layered \stgraphs.
\end{proposition}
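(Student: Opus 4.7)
The plan is to assume, towards a contradiction, that a unary \DynPropbi program $\prog$ maintains \dynstReachQ on 1-layered \stgraphs, and derive a contradiction via Corollary \ref{lemma:substruccor}. I would combine Ramsey's Theorem for Structures (to neutralize the built-in relations of unbounded arity) with Higman's Lemma (to control the unary auxiliary types produced by $\prog$).

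First I would fix a large ordered domain $(D, \prec)$ and apply Theorem \ref{coro:ramsey} to the built-in structure with constants $s, t$ to obtain a $\prec$-homogeneous set $D_0 = \{a_1 \prec a_2 \prec \cdots \prec a_N\} \subseteq D \setminus \{s, t\}$, with $N$ chosen large enough for the Higman step below. Starting from the empty graph I would execute the modification sequence that inserts, for $i = 1, \ldots, N$, the edges $(s, a_i)$ and $(a_i, t)$, yielding a chain of states $\state_1, \state_2, \ldots, \state_N$. Each $\state_i$ is a 1-layered \stgraph with $i$ parallel $s$-$t$-paths, so the query bit is true throughout. For $1 \le h \le i$ let $\tau^h_i$ be the atomic $1$-type of $a_h$ in $\state_i$ (incorporating unary auxiliary and built-in relations and atomic formulas involving constants); form the word $w_i \df \tau^1_i \cdots \tau^i_i$ over the finite alphabet of $1$-types. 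Applying Higman's Lemma to $w_1, \ldots, w_N$ (with $|w_i| = i$) yields indices $i < j$ and $i_1 < \cdots < i_i \le j$ with $\tau^h_i = \tau^{i_h}_j$ for every $h$.

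Set $A \df \{s, t, a_1, \ldots, a_i\}$ and $B \df \{s, t, a_{i_1}, \ldots, a_{i_i}\}$, and define $\pi : A \to B$ by $\pi(s) = s$, $\pi(t) = t$, $\pi(a_h) = a_{i_h}$. Combining (i) the Higman step matching unary auxiliary facts, (ii) the built-in homogeneity of $D_0$ matching all built-in atomic facts on ordered tuples from $D_0 \cup \{s, t\}$, and (iii) the observation that the restriction of $E$ to $A$ (respectively $B$) consists exactly of the edges $(s, a_h), (a_h, t)$ (respectively $(s, a_{i_h}), (a_{i_h}, t)$), one verifies that $\pi$ is an isomorphism from $\restrict{\state_i}{A}$ to $\restrict{\state_j}{B}$. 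Now let $\alpha$ be the deletion sequence on $A$ that removes all edges $(s, a_h), (a_h, t)$, and let $\beta$ be the corresponding $\pi$-respecting deletion sequence on $B$. In $\updateState{P}{\alpha}{\state_i}$ the current graph has no edges at all, so its query bit is false; in $\updateState{P}{\beta}{\state_j}$ the current graph still contains a path $s \to a_{j'} \to t$ for any $j' \in \{1, \ldots, j\} \setminus \{i_1, \ldots, i_i\}$ (nonempty since $j > i$), so its query bit is true. This contradicts Corollary \ref{lemma:substruccor}.

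The main obstacle I anticipate is verifying that $\pi$ preserves all built-in atomic facts despite the built-in relations having unbounded arity: one must argue that the atomic type of every (not necessarily $\prec$-ordered) tuple over $D_0 \cup \{s, t\}$ is determined by its constant positions and the $\prec$-ordering of its $D_0$-entries, so that Ramsey-homogeneity for ordered tuples transfers to arbitrary tuples along $\pi$.
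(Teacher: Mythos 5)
Your proposal is correct and follows essentially the same route as the paper's proof: Ramsey's Theorem for Structures to homogenize the built-in relations around $s$ and $t$, a family of states containing increasing numbers of parallel $s$-$t$-paths, Higman's Lemma applied to the words of unary auxiliary types, and the substructure corollary applied to two $\pi$-respecting deletion sequences that leave one graph disconnected and the other still connected. The only immaterial differences are that you obtain your states as prefixes of a single insertion sequence (the paper first inserts all edges and then produces its states $\state'_i$ by deletions from a common state) and that you delete both edge types rather than only the $(s,\cdot)$-edges; the obstacle you anticipate about non-$\prec$-ordered built-in tuples is resolved exactly as you describe, since the atomic type of an ordered tuple determines that of all its permutations and repetitions.
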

\begin{proof}
  Towards a contradiction,  assume that  $\prog = (P, \init, \querys)$ is a dynamic program over schema $\tau = (\inpSchema, \auxSchema, \builtinSchema)$ with unary schema $\auxSchema$ that maintains the \mbox{\streachabilityquery} for 1-layered \stgraphs. Let $n'$ be sufficiently large\footnote{Explicit numbers are given at the end of the proof.} with respect to $\schema$ and $n$ be sufficiently large with respect to $n'$. Further let $m$ be the highest arity of a relation symbol from $\builtinSchema$.

    Let $G = (V, E)$  be a 1-layered \stgraph such that \mbox{$V = \{s,t\} \cup A$} with $n = |A|$ and $E = \emptyset$. Further let $\state = (V, E, \aux, \builtin)$ be the state obtained by applying $\init$ to $G$. 

Here and in the following, we do not explicitly represent the constants $s$ and $t$ in $\state$, as they never change during the application of a modification sequence (but, of course, tuples containing constants might change in the graph and in the auxiliary relations).

    First, we identify a subset of $A$ on which the built-in relations are homogeneous. By Ramsey's Theorem for structures (choosing $\vec d = (s, t)$) and because $n=|A|$ is sufficiently large with respect to $n'$ there is a set $A' \subseteq A$ of size $n'$ and an order $\norder$ on $A'$ such that all $\norder$-ordered $m$-tuples $\vec a_1$ and $\vec a_2$ over $A'$ are of equal $\builtinSchema$-type.

  Let $\state' \df (V, E', \aux', \builtin)$ be the state of $\prog$  that is reached from $\state$ after application of the following modifications to $G$ (in some arbitrary order):
    \begin{itemize}
      \item[$(\alpha)$] For every node $a \in A'$, insert edges $(s,a)$ and $(a, t)$.
    \end{itemize}
  We observe that the built-in data has not changed, but the auxiliary data might have changed.

   Let $a_1 \norder \ldots \norder a_{n'}$ be an enumeration of  the elements of  $A'$. For every $i\in\{1,\ldots,n'\}$, we define $\alpha_i$ to be the modification sequence that deletes the edges $(s, a_{n'})$, $(s, a_{{n'}-1}), \ldots, (s, a_{i+1})$, in this order. Let $\state'_{i}$ be the state reached by applying $\alpha_i$ to $\state'$. Thus, in state $\state'_{i}$ only nodes $a_1, \ldots, a_i$ have edges to node $s$. For every $i$,  we construct a word $w_i$ of length $i$, that has a letter for every node $a_1, \ldots, a_{i}$ and captures all relevant information about those nodes in $\state'_i$. The words $w_i$ are over the set of all unary types of $\auxSchema$. More precisely,    the $j$th letter $\sigma_i^j$ of $w_i$ is the unary $\auxSchema$-type of $a_j$ in $\state'_i$. We recall that the unary type of $a_j$  captures all information about the tuple $(s, a_j, t)$. 

  Since $n' = |A'|$ was chosen sufficiently large with respect to $\schema$, it follows by Higman's Lemma, that there are $k$ and $l$ such that $k < l$ and $w_k \subseq w_l$, that is, \mbox{$w_k = \sigma_k^1 \sigma_k^2 \ldots \sigma_k^{k} =  \sigma_l^{i_1}\sigma_l^{i_2} \ldots \sigma_l^{i_{k}}$} for suitable numbers $i_1 <  \ldots < i_{k}$.

  We argue that the structures $\restrict{\state'_k}\{s,t,a_1, \ldots, a_k\}$ and $\restrict{\state'_l}\{s,t,a_{i_1}, \ldots, a_{i_k}\}$ are isomorphic via the mapping $\pi$ with $\pi(a_j) = a_{i_j}$ for all $j$, $\pi(s) = s$ and $\pi(t) = t$. By definition of $A'$ and because built-in relations do not change, the mapping $\pi$ preserves $\builtinSchema$. The schema $\auxSchema$ is preserved since $a_j$ and $a_{i_j}$ are of equal unary type, by the definition of $w_k$ and $w_l$. Thus $\pi$ is indeed an isomorphism. We refer to Figure \ref{figure:proposition:unary} for an illustration.

  \begin{figure}[t]
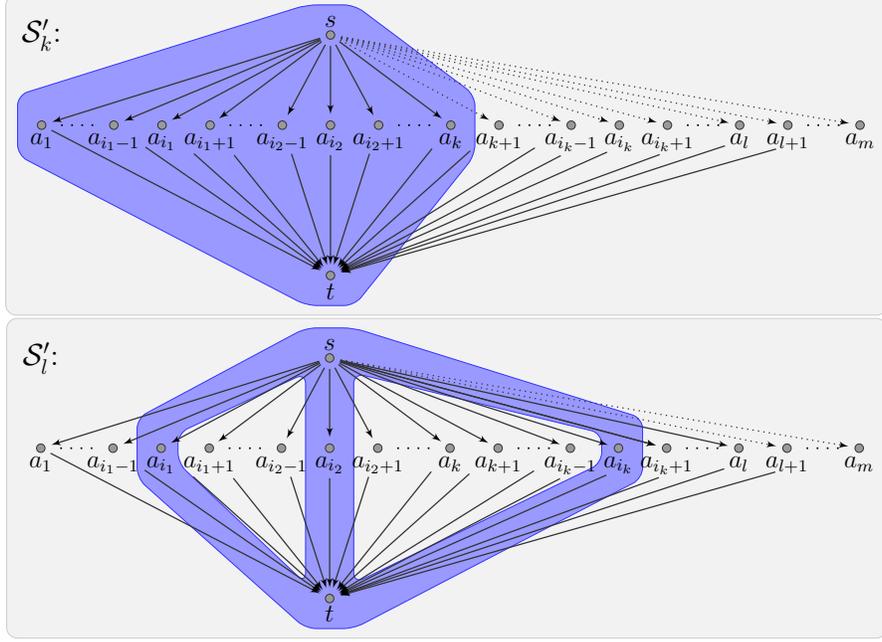
 
      \begin{center}
      \scalebox{0.8}{
    \picunarysk
      }
      \scalebox{0.8}{
	  \picunarysl
	  }
    \end{center}
      \caption{The structures $\state'_k$ and $\state'_l$ from the proof of Proposition \ref{proposition:unary}. Deleted edges are dotted. The isomorphic substructures are highlighted in blue. \label{figure:proposition:unary}}
  \end{figure}   

  Therefore, by Corollary \ref{lemma:substruccor}, the program $\prog$ computes the same query result for the following $\pi$-respecting modification sequences $\beta_1$ and $\beta_2$:
    \begin{itemize}
      \item[($\beta_1$)] Delete edges $(s, a_1), \ldots, (s, a_k)$ from $\state'_k$. 
      \item[($\beta_2$)] Delete edges $(s, a_{i_1}), \ldots, (s, a_{i_k})$ from $\state'_l$.
    \end{itemize}
  However, applying the modification sequence $\beta_1$ yields a graph where $t$ is not reachable from $s$, whereas by $\beta_2$ a graph is obtained where $t$ is reachable \mbox{from $s$} since $k < l$, the desired contradiction.

  We now specify the numbers $n$ and $n'$ that were chosen  in the beginning of the proof. In order to apply Higman's Lemma, the set $A'$ needs to be of size at least $n'\df H(|n''|)$ where $n''$ is the number of unary types of $\schema$. Therefore, the set $A$ has to be of size $n \df \ramsey{\schema}{n'}$. 
\end{proof}

Now we prove Theorem \ref{theorem:binary}, i.e.\ that \dynstReachQ is not in binary \DynPropbi. In the proof, we will again first choose a homogeneous subset with respect to the built-in relations. The notation introduced next and the following lemma prepare this step. 

We refine the notion of homogeneous sets. Let $\struc$ be a structure of some schema $\schema$ and $A$, $B$ disjoint subsets of the domain of $\struc$. We say that $B$ is \textit{$A$-$\norder$-homogeneous up to arity $m$}, if for every $l\le m$, all tuples $(a,\vec b)$, where $a \in A$ and $\vec b$ is an $\norder$-ordered $l$-tuple over $B$, have the same type. We may drop the order $\norder$  from the notation if it is clear from the context, and we may drop $A$ if $A = \emptyset$. We observe that if the maximal arity of $\schema$ is $m$ and $B$ is $A$-homogeneous up to arity $m$, then $B$ is $A$-homogeneous up to arity $m'$ for every $m'$. In this case we simply say $B$ is \textit{$A$-homogeneous}.

\begin{lemma}\label{lemma:homramsey}
  For every schema $\schema$ and natural number $n$, there is a natural number $\homramsey{\schema}{n}$ such that for any two disjoint subsets $A$, $B$ of the domain of a  $\schema$-structure $\struc$ with $|A|, |B| \geq \homramsey{\schema}{n}$, there are subsets $A' \subseteq A$ and $B' \subseteq B$ such that $|A'|, |B'| = n$ and $B'$ is $A'$-homogeneous in $\struc$.
\end{lemma}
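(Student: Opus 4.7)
The plan is to combine Ramsey's theorem for structures (Theorem \ref{coro:ramsey}) with a pigeonhole argument. Let $m$ be the maximal arity in $\schema$ and $\norder$ an order on the domain of $\struc$. First, I would fix an arbitrary subset $A^* \subseteq A$ of size $N$ (to be chosen below), enumerate it as $\vec{a} = (a_1, \ldots, a_N)$ in $\norder$-order, and apply Theorem \ref{coro:ramsey} to the substructure of $\struc$ induced by $A^* \cup B$ together with the constants, taking $\vec{d} := \vec{a}$. This yields a subset $B_1 \subseteq B$ of size $n$ such that for every $l \leq m$ the type $\type{\struc}{(\vec{b}, \vec{a})}$ is the same for all $\norder$-ordered $l$-tuples $\vec{b}$ over $B_1$. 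This step needs $|B| \geq \ramsey{\schema, N}{n}$ (up to a constant shift accounting for the constant symbols).

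Second, I would observe that for each fixed $a_i \in A^*$ and each $l \leq m$, the atomic type $T_i^l := \type{\struc}{(a_i, \vec{b})}$ is independent of the $\norder$-ordered tuple $\vec{b}$ over $B_1$: indeed, for any atomic formula $\varphi(z_0, z_1, \ldots, z_l)$, the truth value $\varphi(a_i, \vec{b})$ equals $\chi(\vec{b}, \vec{a})$, where $\chi(x_1, \ldots, x_l, y_1, \ldots, y_N) := \varphi(y_i, x_1, \ldots, x_l)$ is itself atomic, and so its truth value is determined by the (fixed) larger type. Each $a_i \in A^*$ therefore carries a well-defined type vector $\vec{T}(a_i) := (T_i^0, T_i^1, \ldots, T_i^m)$, ranging over a finite set of size $M$ that depends only on $\schema$. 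Setting $N := n \cdot M$ and applying pigeonhole to $A^*$ yields a subset $A' \subseteq A^*$ of size $n$ on which $\vec{T}$ is constant.

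Finally, taking $B' := B_1$: for every $a \in A'$ and every $\norder$-ordered $l$-tuple $\vec{b}$ over $B'$ with $l \leq m$, the type $\type{\struc}{(a, \vec{b})}$ equals the shared $l$-th coordinate of the common type vector, and is therefore constant across all such $(a, \vec{b})$. Hence $B'$ is $A'$-$\norder$-homogeneous up to arity $m$, and the bound $\homramsey{\schema}{n} := \max(n \cdot M, \ramsey{\schema, n \cdot M}{n})$ suffices. The only delicate point is the variable-renaming argument establishing that $T_i^l$ does not depend on $\vec{b}$; once this is in hand, the remaining argument is routine bookkeeping of sizes.
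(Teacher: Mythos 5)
Your proposal is correct and follows essentially the same route as the paper's proof: fix a large tuple from $A$, apply Ramsey's theorem for structures with that tuple as the distinguished parameter $\vec d$ to extract a homogeneous $B'\subseteq B$, and then pigeonhole the elements of the $A$-tuple according to their types relative to $B'$ to extract $A'$. Your explicit variable-renaming argument showing that $\type{\struc}{(a_i,\vec b)}$ is independent of $\vec b$ is a point the paper's proof uses implicitly, so the write-up is if anything slightly more careful.
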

\begin{proof}
   Let $\schema$ be a schema with maximal arity $m$. Choose $k'$ to be a large number\footnote{\label{fn:again}Again, explicit numbers can be found at the end of the proof.} with respect to $\schema$ and $n$; and let $k$ be a large number with respect to $k'$. In particular $k$ is large with respect to the number of constant symbols in $\schema$. Further let $A$, $B$ be disjoint subsets of the domain of a $\schema$-structure $\struc$ with $|A|, |B| > k$. Since  $k$ is large with respect to the number of constants in $\struc$, we assume, without loss of generality, that neither $A$ nor $B$ contains a constant.

  Fix a $k'$-tuple $\vec a = (a_1, \ldots, a_{k'})$ of $A$. Further let $\norder$ be an arbitrary order on $B$. Because $|B|$ is large with respect to $k'$, $n$ and $\schema$, and by Ramsey's theorem on structures (choose $\vec d = \vec a$), there is a subset $B'$ of $B$ of size $n$ such that  for every $l \leq m$ the type of $(\vec a, \vec b)$ in $\struc$ is the same, for all $\norder$-ordered $l$-tuples $\vec b$ \mbox{over $B'$}. 

  Since $k'$ is large with respect to $\schema$ and because there is only a bounded number of $(m+1)$-ary $\schema$-types, there is an increasing sequence $i_1, \ldots, i_n$ such that for all $l \leq m$ the $\schema$-types of tuples $(a_{i_j}, \vec b)$ are equal, for all $\norder$-ordered $l$-tuples $\vec b$ over $B'$ and $j \in \{1, \ldots, n\}$. We choose $A' := \{a_{i_1}, \ldots, a_{i_n}\}$. Then $B'$ is $A'$-homogeneous up to arity $m$ and therefore $A'$-homogeneous. 

  It remains to give explicit numbers. For the sequence $i_1, \ldots, i_n$ to exist in $1, \ldots, k'$, the number $k'$ has to be at least $n M + 1$ where $M$ is the number of $(m+1)$-ary $\schema$-types. Thus $k$ has to be at least $\ramsey{\schema,k'}{k'}+c$ where $c$ is the number of constants in $\schema$. Define $\homramsey{\schema}{n} \df k$.
\end{proof}

\begin{proofof}{Theorem \ref{theorem:binary}}
  Let us assume, towards a contradiction, that the dynamic program $(P, \init, \querys)$ over schema $\tau = (\inpSchema, \auxSchema, \builtinSchema)$ with binary $\auxSchema$ maintains the dynamic $s$-$t$-reachability query for 2-layered \stgraphs. We choose numbers $n$, $n_1$, $n_2$ and $n_3$ such that $n_3$ is sufficiently large
    with respect to $\schema$, $n_2$ is sufficiently large with respect to $n_3$, $n_2$ is sufficiently large with respect to $n_1$ and $n$ is sufficiently large with respect to $n_1$.  

  Let $G = (V, E)$ be a $2$-layered \stgraph with layers $A$, $B$, where $A$ and $B$ are both of size $n$ and \mbox{$E = \{(b, t) \mid b \in B\}$}. Further, let $\state = (V, E, \aux, \builtin)$ be the state obtained by applying $\init$ to $G$. 

  We will first choose homogeneous subsets. By Lemma \ref{lemma:homramsey} and because $n$ is sufficiently large, there are subsets $A_1$ and $B_1$ such that  $|A_1| = |B_1| = n_1$ and $B_1$ is $A_1$-$\norder$-homogeneous in $\state$, for some order $\norder$.  Next, let $A_2$ and $B_2$ be arbitrarily chosen subsets of $A_1$ and $B_1$, respectively, of size $|B_2| = n_2$ and $|A_2| = 2^{|B_2|}$, respectively. We note that $B_2$ is still $A_2$-homogeneous. In particular, $B_2$ is still $A_2$-homogeneous with respect to schema $\builtinSchema$.  We associate with every subset $X\subseteq B_2$ a unique vertex $a_X$ from $A_2$ in an arbitrary fashion. 

  Now,we define the modification sequence $\alpha$ as follows.
  \begin{itemize}
   \item[($\alpha$)] For every subset $X$ of $B_2$ and every $b \in X$ insert an edge $(a_X,b)$, in some arbitrarily chosen order.
  \end{itemize}
  Let $\state' \df (V, E', \aux', \builtin)$ be the state of $\prog$ after applying $\alpha$ to $\state$, i.e.\ $\state' = \updateState{P}{\alpha}{\state}$. We observe that the built-in data has not changed, but the auxiliary data might have changed. In particular, $B_2$ is not necessarily $A_2$-homogeneous with respect to schema $\auxSchema$ in state $\state'$.

  Our plan is to exhibit two sets $X,X'$ such that $X\subsetneq X'\subseteq B_2$ such that the restriction of $\state'$ to $\{s,t,a_{X'}\} \cup X'$ contains an isomorphic copy of $\state'$ restricted to $\{s,t,a_X\} \cup X$. Then the substructure lemma will easily give us a contradiction.

  By Ramsey's theorem and because $|B_2|$ is sufficiently large with respect to $n_2$, there is a subset $B_3 \subseteq B_2$ of size $n_3$ such that $B_3$ is $\norder$-homogeneous  in $\state'$. Let \mbox{$b_1 \norder \ldots \norder b_{n_3}$} be an enumeration of the elements of $B_3$  and let \mbox{$X_i \df \{b_1, \ldots, b_i\}$}, for every $i \in \{1, \ldots, n_3\}$.

  Let $\state'_i$ denote the restriction of $\state'$ to $X_i \cup \{s, t, a_{X_i}\}$. For every $i$,  we construct a word $w_i$ of length $i$, that has a letter for every node in $X_i$ and captures all relevant information about those nodes in $\calS'_i$. More precisely, \mbox{$w_i\df \sigma_i^1\cdots  \sigma_i^i$}, where for every $i$ and $j$, $\sigma_i^j$ is the binary type of $(a_{X_i}, b_j)$.

  Since $B_3$ is sufficiently large with respect to $\auxSchema$, it follows, by Higman's lemma, that there are $k$ and $l$ such that $k < l$ and $w_k \subseq w_l$, that is \linebreak[4] \mbox{$w_k = \sigma_k^1 \sigma_k^2 \ldots \sigma_k^{k} =  \sigma_l^{i_1}\sigma_l^{i_2} \ldots \sigma_l^{i_{k}}$} for suitable numbers $i_1 < \ldots < i_{k}$. Let \linebreak[4] \mbox{$\vec b \df (b_1, \ldots, b_k)$} and $\vec b' \df (b_{i_1}, \ldots, b_{i_k})$. Further, let $\calT_k \df \restrict{\state'_k}{T_k}$ where \mbox{$T_k = \{s,t,a_{X_k}\} \cup \vec b$}, and $\calT_l \df \restrict{\state'_l}{T_l}$ where $T_l \df \{s,t,a_{X_l}\} \cup \vec b'$.  We refer to Figure \ref{figure:theorem:binary} for an illustration of the substructures $\calT_k$ and $\calT_l$ of $\state'$.

  We show that $\calT_k \isomorph_\pi \calT_l$, where $\pi$ is the isomorphism that maps $s$ and $t$ to themselves, $a_{X_k}$ to $a_{X_l}$ and $b_j$ to $b_{i_j}$ for every $j \in \{1, \ldots, k\}$. We argue that $\pi$ fulfills the requirements of an isomorphism, for every relation symbol $R$ from $\inpSchema \cup \builtinSchema \cup \auxSchema$:
  \begin{itemize}
    \item   For the input relation $E$ this is obvious. In $\state'$ there are no edges from $s$ to nodes in $A_2$ and all nodes from $B_2$ have an edge to $t$. Further $X_l$ is connected to all nodes in $\vec b$ and $X_k$ is connected to all nodes \mbox{in $\vec b'$}.
    \item For $R \in \builtinSchema$, the requirement follows because $B_2$ is $A_2$-homo\-geneous for schema $\builtinSchema$.
    \item For $R \in \auxSchema$ of arity $2$ and two 2-tuples $\vec c$ and $\pi(\vec c)$ we distinguish two cases. First, if $\vec c$ and $\pi(\vec c)$ contain elements from $B_3$ only, then $\vec c \in R^{\calT_k}$ if and only if $\pi (\vec c) \in R^{\calT_l}$ because $B_3$ is homogeneous in $\state'$. Second, if $\vec c$ contains $s$, $t$ or $A_{X_l}$, then $\vec c \in R^{\calT_k}$ if and only if $\pi (\vec c) \in R^{\calT_l}$ because of the construction of $w_k$ and $w_l$.
  \end{itemize} 

  \begin{figure}[t]
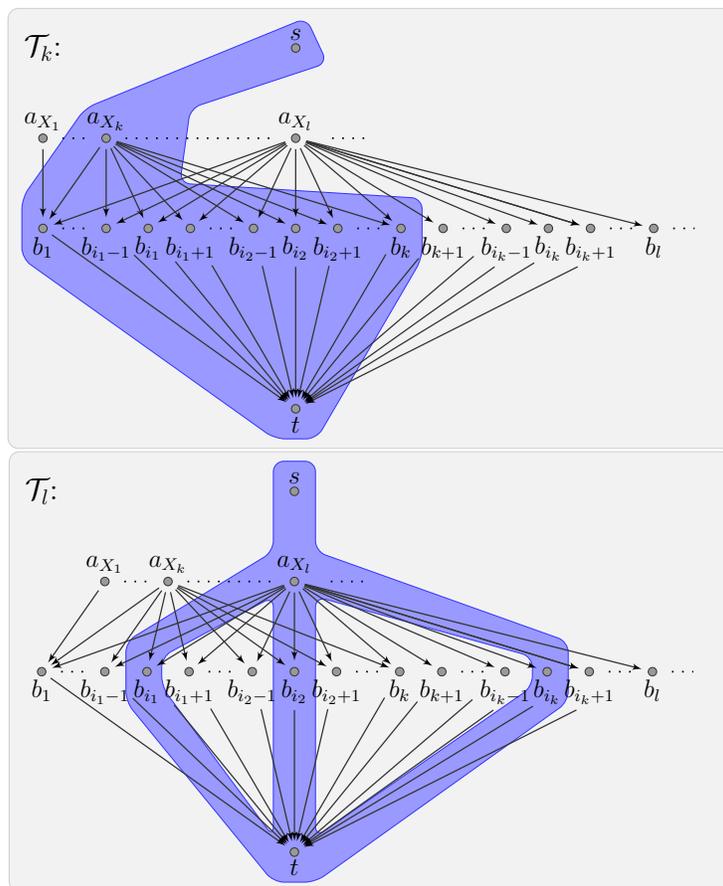
 
    \begin{center}
      \scalebox{0.8}{
      \picbinarytk
      }
      \scalebox{0.8}{
      \picbinarytl
      }
      \caption{The structure $\state'$ from the proof of Theorem \ref{theorem:binary}. The isomorphic substructures $\calT_k$ and $\calT_l$ are highlighted in blue. \label{figure:theorem:binary}}
    \end{center}  \end{figure}

  Thus, by the substructure lemma, application of the following two modification sequences to $\state'$ results in the same query result:
  \begin{itemize}
    \item[($\beta_1$)] Deleting edges $(a_{X_k}, b_1), \ldots, (a_{X_k}, b_k)$ and adding an edge $(s, a_{X_k})$.
    \item[($\beta_2$)] Deleting edges $(a_{X_l}, b_{i_1}), \ldots, (a_{X_l}, b_{i_k})$ and adding an edge $(s, a_{X_l})$.
  \end{itemize}
  However, applying $\beta_1$ yields a graph in which $t$ is not reachable from $s$, whereas by applying $\beta_2$ a graph is obtained in which $t$ is reachable from $s$. This is the desired contradiction.

  It remains to specify the sizes of the sets. To apply Higman's lemma, $|B_3|$ has to be of size at least \mbox{$n_3 \df H(m)$} where $m$ is the number of binary types \mbox{over $\auxSchema$}. Hence, for applying Ramsey's theorem, $|B_2|$ has to be of size $n_2 \df \ramsey{\schema}{n_3}$. Thus it is sufficient if $|B_1|$ and $|A_1|$ contain $n_1 \df 2^{n_2}$ elements. Therefore, by \mbox{Lemma \ref{lemma:homramsey}}, the sets $A$ and $B$ can be chosen of size $n \df \homramsey{\schema}{n_1}$.
\end{proofof}

\subsection{Separating Low Arities}

An arity hierarchy for \DynFO  was established in \cite{DongS98}. The dynamic queries $\query_{k+1}$ used to separate $k$-ary and $(k+1)$-ary \DynFO can already be maintained in $(k+1)$-ary \DynProp, thus the hierarchy transfers to \DynProp immediately. However, $\query_{k+1}$ is a $k$-ary query and has an input schema of arity $6k+1$ (improved to $3k+1$ in \cite{DongZ00}). Here we establish a strict arity hierarchy between unary, binary and ternary \DynProp for Boolean queries and binary input schemas.

We use the following problems \stTwoPath and \sTwoPath

\problemdescr{\stTwoPath}{An \stgraph $G = (V, E)$.}{Is there a path of length two from $s$ to $t$?}

\problemdescr{\sTwoPath}{A graph $G = (V, E)$ with one distinguished node $s \in V$.}{Is there a path of length two starting from $s$?}

\begin{proposition}\label{prop:unarybinary}
  The dynamic query \dynstTwoPath is in binary \DynProp, but not in unary \DynPropbi.
\end{proposition}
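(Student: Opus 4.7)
The plan has two parts. For the upper bound, I would maintain, in binary \DynProp, a linked list of \emph{witnesses} for an $s$-$t$-path of length two, where a witness is a vertex $v$ with both $E(s,v)$ and $E(v,t)$. This reuses the list-maintenance technique of Example \ref{example:emptylist}: auxiliary relations $\List$ (binary), $\First$, $\Last$ (unary), together with the $0$-ary query bit $Q$ that records whether the list is nonempty. When an edge $(a,b)$ is inserted, the only ways a new witness can be created are (i) $a=s$ and $E(b,t)$ already holds, in which case $b$ becomes a witness, or (ii) $b=t$ and $E(s,a)$ already holds, in which case $a$ becomes a witness. The (at most one) new witness is appended to the end of the list by formulas analogous to those of Example \ref{example:emptylist}. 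Deletions are handled symmetrically: if $a=s$ (resp.\ $b=t$) then $b$ (resp.\ $a$) ceases to be a witness and must be removed from the list if it was present. All update formulas can be written quantifier-free with binary auxiliary relations, so this yields a binary \DynProp-program for \dynstTwoPath.

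For the lower bound, I would reduce to Proposition \ref{proposition:unary}. The key observation is that on $1$-layered \stgraphs every $s$-$t$-path has length exactly two, so the Boolean queries \streachQ and \stTwoPath coincide on this class. Moreover, the graphs and modification sequences constructed in the proof of Proposition \ref{proposition:unary} produce only $1$-layered \stgraphs (edges are inserted only between $s$ and $A$ or between $A$ and $t$, and only deletions of $(s,a_i)$-edges are applied afterwards). Hence a unary \DynPropbi-program for \dynstTwoPath would restrict to a unary \DynPropbi-program for \dynstReachQ on $1$-layered \stgraphs, contradicting Proposition \ref{proposition:unary}.

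The only genuine subtlety lies in the upper bound bookkeeping: one must guarantee that each vertex is added to the witness list at most once and removed precisely when it ceases to be a witness. Because the insertion of a single edge $(a,b)$ flips exactly one of $E(s,b)$ or $E(a,t)$ for the potentially new witness, the ``not already in the list'' check is automatic, as in Example \ref{example:emptylist}. What remains is a routine case distinction to write down the quantifier-free update formulas for $\List$, $\First$, $\Last$ and $Q$, entirely analogous to the formulas displayed there.
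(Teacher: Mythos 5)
Your proposal is correct and follows essentially the same route as the paper: the lower bound is the same reduction to Proposition~\ref{proposition:unary} via the observation that \streachQ and \stTwoPath coincide on $1$-layered \stgraphs, and the upper bound maintains a witness list exactly as in Example~\ref{example:emptylist}. The only (cosmetic) difference is that the paper introduces auxiliary unary relations $\In$ and $\Out$ to record which vertices have an edge from $s$ resp.\ to $t$, whereas you test $E(s,a)$ and $E(b,t)$ directly in the update formulas; both are legitimate quantifier-free tests since the relevant elements are among $a$, $b$, $s$, $t$.
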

\begin{proofsketch}
 That \dynstTwoPath is not in unary \DynPropbi follows immediately from Proposition \ref{proposition:unary} as such a program would also maintain the dynamic $s$-$t$-reachability query for $1$-layered graphs. 

  In order to prove that \dynstTwoPath is in binary \DynProp, we sketch a \DynProp-program $(P, \init, Q)$ whose auxiliary schema contains unary relation symbols $\In$, $\Out$, $\First$, and $\Last$ and a binary relation symbol $\List$. The idea is to store, in a program state $\state$, a list of all nodes $a$ such that $(s, a, t)$ is a path in $E^\state$. The relation $\In^\state$ contains all nodes with an incoming edge from $s$, and $\Out^\state$ contains all nodes with an outgoing edge to $t$. The relations $\First^\state$, $\Last^\state$, $\List^\state$ maintain the actual list, similarly to Example \ref{example:emptylist}. The current query bit is maintained in $Q^\state$.

  For a given instance of \stTwoPath the initialization mapping initializes the auxiliary relations accordingly.

  \insertdescr{E}{(a,b)} We note that edges $(a,b)$ where $a\not=s$ and $b\not=t$ can be ignored, as they cannot contribute to any path of length 2 from $s$ to $t$. Furthermore, paths of length 2 involving only nodes $s$ and $t$ can be easily handled by $\DynProp$ formulas, and therefore will be ignored as well.

   If $a = s$ and $b\not=t$, then $b$ is inserted into $\In$, otherwise if $a\not=s$ and $b = t$ then $a$ is inserted into $\Out$. 

    Afterwards $a$ or $b$ is inserted into $\List$, if it is now     contained in both $\In$ and $\Out$.
    In that case the query  bit is set true. 

    Formally:
    \begin{align*}
      \uf{\In}{\ins}{a,b}{x} &= \In(x) \vee (x = b \mand a=s \mand b\not=s \mand b\not=t ) \\
      \uf{\Out}{\ins}{a,b}{x} &= \Out(x) \vee  (x = a \mand a\not=s \mand a\not=t \mand b=t)  \\
      \uf{\First}{\ins}{a,b}{x} &= \First(x)  \mor (\neg Q \mand \varphi_n(x))\\
      \uf{\Last}{\ins}{a,b}{x}& = (\Last(x) \mand \neg\varphi_{n}(a) \mand \neg\varphi_{n}(b)) \mor \varphi_n(x)\\
      \uf{\List}{\ins}{a,b}{x,y} &=  (\List(x,y) \mand \neg\varphi_{n}(a) \mand \neg\varphi_{n}(b))  \mor (\Last(x) \mand \varphi_n(y))\\
      \uf{Q}{\ins}{a,b}{} &= Q \vee \varphi_{n}(a) \vee \varphi_{n}(b) 
    \end{align*}
   Here, $\varphi_{n}(x)$ is an abbreviation for $$\uf{\In}{\ins}{a,b}{x} \mand \uf{\Out}{\ins}{a,b}{x} \mand(\neg \In(x) \mor\neg\Out(x))$$ expressing that $x$ is becoming newly inserted into \List.

  \deletedescr{E}{(a,b)}
    First, if $a = s$, then $b$ is removed from $\In$. Further if $b = t$ then $a$ is removed from $\Out$.

    Afterwards $a$ or $b$ is removed from $\List$, if it has been removed from $\In$ or $\Out$. If $\List$ is empty now, then the query bit is set to false. The precise formulas are along the lines of the formulas of Example \ref{example:emptylist}.
\end{proofsketch}

\begin{proposition} \label{prop:binaryternary}
  The dynamic query \dynsTwoPath is in ternary \DynProp, but not in binary \DynPropbi.
\end{proposition}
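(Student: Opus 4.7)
The plan is to prove the two halves separately: the upper bound by constructing an explicit ternary $\DynProp$-program, and the lower bound by reusing the construction from the proof of Theorem~\ref{theorem:binary} almost verbatim.

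For the upper bound, the essential new obstacle compared to $\dynstTwoPath$ is that after a deletion $\del_E(a,b)$ with $a\in N^+(s)$, the program must decide whether $a$ still has \emph{some} out-neighbor, which is naturally a quantifier. My plan is to sidestep this by storing, for every vertex $u$, the list of all out-neighbors of $u$ in a ternary relation $\List(u,x,y)$ together with binary relations $\First(u,x)$ and $\Last(u,x)$, and maintaining every slice $\List(u,\cdot,\cdot)$ by the list-technique of Example~\ref{example:emptylist}. These auxiliaries allow a unary relation $O(u)$ meaning ``$u$ has an out-edge'' to be updated quantifier-freely: $O(u)$ becomes false after $\del_E(u,b)$ exactly when $\First(u,b) \mand \Last(u,b)$ held before. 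The unary set $W \df \{a \mid (s,a) \in E \mand O(a)\}$ of possible first vertices on a length-$2$ path from $s$ is then maintained itself as a list, using only binary relations, so that the $0$-ary query bit ``$W \neq \emptyset$'' can be updated exactly as in Example~\ref{example:emptylist}. Since each of the four modification types changes $W$ by at most one element whose identity is known from the modification and the current auxiliary data, every update formula is a Boolean combination of existing auxiliary atoms and equalities, hence quantifier-free.

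For the lower bound, my plan is to rerun the proof of Theorem~\ref{theorem:binary} on essentially the same setup, omitting the vertex $t$ and the layer of edges into $t$. On vertex set $\{s\} \cup A \cup B$ with $|A|=2^{|B|}$ and vertices $a_X \in A$ parametrizing the subsets $X \subseteq B$, I start from the empty graph, choose the homogeneous subsets $A_1 \supseteq A_2$ and $B_1 \supseteq B_2$ exactly as there, and apply the modification sequence $\alpha$ that, for every $X \subseteq B_2$ and every $b \in X$, inserts the edge $(a_X,b)$. The resulting state $\state'$ has no outgoing edges at $s$, so the query is false. Then, just as in Theorem~\ref{theorem:binary}, Ramsey gives a homogeneous $B_3 \subseteq B_2$, and encoding the binary $\auxSchema$-type of $(a_{X_i},b_j)$ as the $j$th letter of a word $w_i$ and applying Higman's lemma provides $k<l$ and indices $i_1<\cdots<i_k$ with $w_k \subseq w_l$. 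The substructures of $\state'$ on $\{s,a_{X_k}\} \cup \{b_1,\ldots,b_k\}$ and $\{s,a_{X_l}\} \cup \{b_{i_1},\ldots,b_{i_k}\}$ are then isomorphic via the natural map $\pi$ sending $s \mapsto s$, $a_{X_k} \mapsto a_{X_l}$ and $b_j \mapsto b_{i_j}$. The $\pi$-respecting modification sequences $\beta_1$, which inserts $(s,a_{X_k})$ and deletes $(a_{X_k},b_j)$ for each $j\leq k$, and $\beta_2$, which inserts $(s,a_{X_l})$ and deletes $(a_{X_l},b_{i_j})$ for each $j\leq k$, yield graphs in which $a_{X_k}$ has no remaining out-neighbor while $a_{X_l}$ still has out-neighbors in $X_l \setminus \{b_{i_1},\ldots,b_{i_k}\} \neq \emptyset$; since in either graph no vertex besides $a_{X_k}$ respectively $a_{X_l}$ has an in-edge from $s$, the query values differ, contradicting Corollary~\ref{lemma:substruccor}.

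The harder step of the two is the upper bound: the lower bound just copies the earlier argument because the substructure lemma is indifferent to which $0$-ary query is being maintained. The main obstacle in the upper bound is orchestrating the two layers of list-maintenance (per-vertex out-neighbor lists and the global $W$-list) consistently and checking quantifier-freeness in every corner case, in particular when a vertex's out-neighbor list becomes empty or becomes a singleton.
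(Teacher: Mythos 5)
Your proposal is correct and matches the paper's proof in both halves: the upper bound uses the same nested-list construction (a ternary relation holding, for each vertex, the list of its out-neighbors, feeding a unary ``has an out-edge'' relation and a global list of witnesses for the query bit), and the lower bound rests on the same Ramsey/Higman argument. The only difference is presentational: the paper obtains the lower bound as a two-line reduction to the graphs already used in the proof of Theorem~\ref{theorem:binary} (where a length-two path from $s$ exists iff $t$ is reachable, since every $B$-node has an edge to $t$), whereas you re-run that argument directly on graphs with $t$ omitted.
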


\begin{proofsketch}
  For proving that \dynsTwoPath is not in binary \DynPropbi, assume to the contrary that there is a binary \DynPropbi-program $\prog = (P, \init, Q)$ for \dynsTwoPath. With the help of $\prog$ one can, for the graphs from the proof of Proposition \ref{proposition:unary}, maintain whether there is a path from $s$ to some node of $B$. However, this yields a correct answer for \streachQ for those graphs, since in the proof all nodes of $B$ have an edge \mbox{to $t$}.

   In order to prove that \dynsTwoPath is in ternary \DynProp, we sketch a \DynProp-program $(P, \init, Q)$ whose auxiliary schema contains unary relation symbols $\In$, $\Out$, $\First_1$, $\Last_1$ and $\Empty_1$, binary relation symbols  $\List_1$, $\First_2$, $\Last_2$ and $\Empty_2$, and a ternary relation symbol $\List_2$. The idea is that in a state $\state$, the binary relation $\List^\state_1$ contains a list of all nodes $a$ on a path $(s, a, b)$ in $E^\state$, for some node $b$. The relation $\In^\state$ contains all nodes with an incoming edge from $s$,  and $\Out^\state$ contains all nodes with an outgoing edge. In order to update $\Out^\state$, the projection $\List^\state_2(a,\cdot, \cdot)$ of the ternary relation $\List^\state_2$ stores a list of nodes $b$ with \mbox{$(a, b) \in E^\state$}, for every node $a$. The lists $\List^\state_1$ and  $\List^\state_2(a, \cdot, \cdot)$ are maintained by using the technique from Example \ref{example:emptylist} and by using the auxiliary relations 
stored in $\First^\state_1$, $\Last^\state_1$, $\Empty^\state_1$, $\First^\state_2$, $\Last^\state_2$ and $\Empty^\state_2$. The current query bit is maintained in $Q^\state$.

  For a given instance of \sTwoPath the initialization mapping initializes the auxiliary relations accordingly.

  \insertdescr{E}{(a,b)}
    First, if $a = s$ then $b$ is inserted into $\In$. Otherwise, $a$ is inserted into $\Out$ and $b$ is inserted into $\List_2(a, \cdot, \cdot)$.

    Afterwards $a$ or $b$ is inserted into $\List_1$, if it is now contained in both $\In$ and $\Out$. If one of them is inserted, then the query  bit is set true.

  \deletedescr{E}{(a,b)}
    First, if $a = s$ then $b$ is removed from $\In$.  Otherwise, $b$ is removed from $\List_2(a, \cdot, \cdot)$ and if $\List_2(a, \cdot, \cdot)$ is empty afterwards, then $a$ is removed from $\Out$. 

    Afterwards $a$ or $b$ is removed from $\List_1$, if it has been removed from $\In$ or $\Out$. The query bit is set to false, if the list $\List_1$ is empty now.
\end{proofsketch}

\subsection{Invariant Initialization}

We now turn to the setting with invariant initialization. Recall that an initialization mapping $\init$ with $\init = (\auxInit, \builtinInit)$ is invariant if $$\pi(\init_{aux}(\db))=\init_{aux}(\pi(\db)) \text{ and } \pi(\builtinInit(\domain))=\builtinInit(\pi(\domain))$$ for every database $\db$, domain $\domain$ and permutation $\pi$ of the domain.
The condition $\pi(\builtinInit(\domain))=\builtinInit(\pi(\domain))$ implies that a built-in relation contains either all tuples or no tuple at all. Therefore $\DynProp$ and $\DynPropbi$ with invariant initialization mapping coincide. 

First-order logic, second-order logic and other logics considered in computer science can only define queries, i.e.\ mappings that are invariant under permutations. Therefore the following result applies, in particular, for all initialization mappings defined in those logics.

\begin{theorem}\label{theorem:logicinit}
 $\dynProb{\streachQ}$ cannot be maintained in
  \DynProp with invariant initialization mapping. This holds even for 1-layered \stgraphs.
\end{theorem}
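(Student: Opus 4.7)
Assume towards contradiction that a $\DynProp$-program $\prog=(P,\init,Q)$ with invariant $\init$ maintains $\dynProb{\streachQ}$ on 1-layered \stgraphs. The plan is to mirror the proof of Proposition \ref{proposition:unary}, but with invariance of $\init$ playing the role that Ramsey's theorem played there for built-in relations: invariance will supply the homogeneity of the auxiliary data that we need, which is crucial here since the auxiliary schema may have arbitrary arity.

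First I would take the 1-layered \stgraph $G=(V,E)$ with $V=\{s,t\}\cup A$, $A=\{a_1,\ldots,a_n\}$ for $n$ sufficiently large, and $E=\{(s,a_i),(a_i,t) : a_i\in A\}$. Every permutation $\pi$ of $A$ (extended to fix $s$ and $t$) is an automorphism of $G$, so invariance of $\init$ yields $\pi(\state_\init(G))=\state_\init(\pi(G))=\state_\init(G)$; that is, $\state_0\df\state_\init(G)$ admits every element of $\operatorname{Sym}(A)$ as an automorphism. For each $k\in[1,n]$ let $\alpha_k$ be the modification sequence deleting $(s,a_{k+1}),\ldots,(s,a_n)$ in this order, and set $\state_k\df\updateState{P}{\alpha_k}{\state_0}$. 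The graph of $\state_k$ has edges $(s,a_i)$ for $i\le k$ and all edges $(a_i,t)$, so $t$ is reachable from $s$ iff $k\ge 1$. Any $\pi\in\operatorname{Sym}(\{a_1,\ldots,a_k\})$ (fixing everything outside) is an automorphism of $\state_0$ and leaves each modification of $\alpha_k$ pointwise fixed, hence is an automorphism of $\state_k$; in particular $\{a_1,\ldots,a_k\}$ is homogeneous in $\state_k$.

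The central step is to find $k<l$ for which $\restrict{\state_k}{\{s,t,a_1,\ldots,a_k\}}$ equals $\restrict{\state_l}{\{s,t,a_1,\ldots,a_k\}}$ as structures. Let $M$ be the maximal arity of the auxiliary schema. For $k\ge M$, homogeneity of $\{a_1,\ldots,a_k\}$ in $\state_k$ forces the auxiliary value of every tuple over $\{s,t,a_1,\ldots,a_k\}$ to depend only on the \emph{shape} of the tuple (the sequence of labels from $\{s,t,*\}$ assigned to its positions, together with an equivalence relation on the $*$-positions). Encoded as a ``lookup table'' $L_k$, this information ranges over a finite set of size bounded by a constant $C$ depending only on the schema. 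Choosing $n>M+C$ and applying pigeonhole to $k\in[M,n]$ then yields $k<l$ with $L_k=L_l$. The input relations on $\{s,t,a_1,\ldots,a_k\}$ coincide in $\state_k$ and $\state_l$ by direct inspection (both consist of the full bipartite structure), so the two restrictions are actually equal and hence isomorphic via the identity on $\{s,t,a_1,\ldots,a_k\}$.

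The argument concludes by applying the identity-respecting modification sequence $\beta$ that deletes $(s,a_j)$ for each $j\in[1,k]$. In $\updateState{P}{\beta}{\state_k}$ no edge out of $s$ remains and $t$ is unreachable; in $\updateState{P}{\beta}{\state_l}$ the edges $(s,a_j)$ for $j\in[k+1,l]$ survive, so $t$ is reachable via $s\to a_{k+1}\to t$. By Corollary \ref{lemma:substruccor} the Boolean query symbol $Q$ must take the same value in both resulting states, contradicting correctness of $\prog$. I expect the main obstacle to be precisely the step of isolating a finite invariant (the lookup table of shapes) that captures the substructure on the modified side despite the unbounded arity of the auxiliary schema; once that invariant is in place, pigeonhole combined with the substructure lemma closes the argument just as in Proposition \ref{proposition:unary}.
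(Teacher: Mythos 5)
Your proof is correct, and its skeleton is the paper's: invariance of $\init$ supplies full symmetry, symmetry collapses the auxiliary data on the relevant substructure to a bounded profile, pigeonhole produces two comparable states, and the substructure lemma closes the argument. The genuine difference is where the family of states comes from. The paper initializes a separate complete $1$-layered graph $G_i$ for each size $i$ and compares the \emph{initial} states $\state_i=\state_\init(G_i)$; invariance then applies directly, so homogeneity of all diverse tuples is immediate and nothing has to be propagated through updates. You initialize one large graph and generate the family $\state_k$ by deletions, so you must additionally argue that the symmetry survives those deletions; your justification --- each $\pi\in\operatorname{Sym}(\{a_1,\ldots,a_k\})$ fixes every modified tuple, so Lemma \ref{lemma:substruclemma} applied with $\calS=\calT=\state_0$ and $A=B$ the whole domain keeps $\pi$ an automorphism of $\state_k$ --- is exactly the extra ingredient your route needs, and it is sound. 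Your shape/lookup-table bookkeeping (label pattern over $\{s,t,*\}$ plus an equality type on the $*$-positions, well defined once $k$ is at least the maximal arity) is the single-instance analogue of the paper's count of atomic types of diverse tuples. What your route buys is that everything happens within one dynamic instance and the two compared substructures are literally equal, so Corollary \ref{lemma:substruccor} applies with the identity map; what the paper's route buys is brevity, since no automorphism has to be carried through modifications. (Incidentally, your reachable/unreachable outcomes for $\beta$ on $\state_k$ versus $\state_l$ are stated the right way around; the paper's prose swaps the two at the corresponding point.)
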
\begin{proof}
  Towards a contradiction, assume that the dynamic program $(P, \init, Q)$ with schema \mbox{$\schema = \inpSchema \cup \auxSchema$} and invariant initialization mapping $\init$ maintains the \mbox{$s$-$t$-reachability} query for 1-layered \stgraphs. Let $n$ be the number of types of tuples of arity up to $m$ for \mbox{$\auxSchema \cup \{E\}$} where $m$ is the highest arity of relation  symbols in \mbox{$\auxSchema \cup \{E\}$}.

  We consider the 1-layered \stgraphs $G_i = (V_i, E_i)$, for every $i$ from \mbox{$1,\ldots,n+1$},  with $V_i = \{s,t\} \cup A_i$ where $A_i = \{a_0, \ldots, a_{i}\}$ and $E = \{s\} \times A_i \cup A_i \times \{t\}$. Further, we let $\state_i = (V_i, E_i, \aux_i)$ be the state obtained by applying $\init$ to $G_i$.

  Our goal is to find $\state_k$ and $\state_l$ with $k<l$ such that $\state_k$ is isomorphic to $\restrict{\state_l}{V_k}$ (see Figure \ref{figure:theorem:logicinit} for an illustration). Then, by the substructure lemma, the program $\prog$ computes the same query result for the following modification sequences:
    \begin{itemize}
      \item[($\beta_1$)] Delete edges $(s, a_{0}), \ldots, (s, a_{k})$ from $\state_k$. 
      \item[($\beta_2$)] Delete edges $(s, a_{0}), \ldots, (s, a_{k})$ from $\state_l$.
    \end{itemize}
  However, applying the modification sequence $\beta_1$ yields a graph where $t$ is reachable from $s$, whereas by $\beta_2$ a graph is obtained where $t$ is not reachable from $s$,  a contradiction.

  Thus it remains to find such states $\state_k$ and $\state_l$. A tuple is \textit{diverse}, if all components are pairwise different. For arbitrary $m' \leq m$, diverse tuples \mbox{$\vec a, \vec b \in A^{m'}$} and $i \leq n$, we observe that $G_i \isomorphVia{\swap{\vec a}{\vec b}} G_i$ where $\swap{\vec a}{\vec b}$ is the bijection that maps $a_i$ to $b_i$, $b_i$ to $a_i$ and every other element from $S$ to itself. Therefore $\state_i \isomorphVia{\swap{\vec a}{\vec b}} \state_i$ by the invariance of $\init$. Thus $\type{\state_i}{\vec a} = \type{\state_i}{\vec b}$, and therefore all diverse $m'$ tuples are of the same type in $\state_i$.

  Since $n$ is the number of types up to arity $m$, there are two states $\state_k$ and $\state_l$ such that, for every $m' \leq m$, all diverse $m'$-tuples are of the same type in $\state_k$ and $\state_l$. But then  $\state_k \isomorph \restrict{\state_l}{V_k}$.
    \begin{figure}[t]
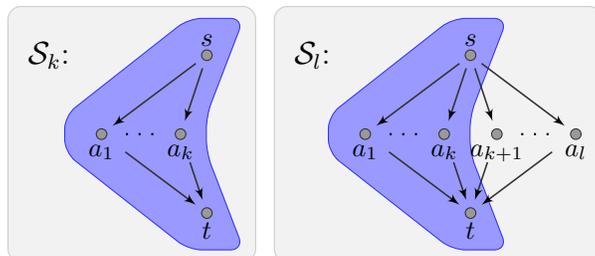
 
  \begin{center}
      \piclogicinitsk
      \piclogicinitsl

      \caption{The structures $\state_k$ and $\state_l$ from the proof of Theorem \ref{theorem:logicinit}. The isomorphic substructures are highlighted in blue.  \label{figure:theorem:logicinit}}
  \end{center}
    \end{figure}
\end{proof}

The proof of the previous result does not extend to \DynFO, since reachability in graphs of depth three is expressible even in (static) predicate logic. The proof fails, because the substructure lemma does not hold for $\DynFO$-programs.  At first glance, layered graphs with many layers look like a good candidate for proving that \DynFO cannot maintain \streachp in this setting. However, in \cite{GraedelS12} it is shown that $\DynFO$ with FO+TC-definable initialization mappings can express \streachp for arbitrary acyclic graphs.

  \section{Lower Bounds with Auxiliary Functions}\label{section:dynqf}
      \makeatletter{}In this section we consider the extension of the quantifier-free update formalism by auxiliary functions. Recall that $\DynProp$-update formulas can only access  the inserted or deleted tuple $\vec a$ and the currently updated tuple $\vec b$ of an auxiliary relation. With auxiliary functions further elements might be accessed via function terms over $\vec a$ and $\vec b$. Thus, in a sense, auxiliary functions can be seen as adding weak quantification to quantifier-free formulas. The class  of dynamic queries that can be maintained with quantifier-free update formulas and auxiliary functions is denoted \DynQF.

After the formal definition of \DynQF and adapting the substructure lemma to it, we prove that
\begin{itemize}
  \item \dynstReachQ is not in unary \DynQF; and
  \item \dynstReachQ is not in \DynQF with invariant initialization.
\end{itemize} 

When full first-order updates are available, auxiliary functions can be simulated in a straight forward way by auxiliary relations. However, without quantifiers this is not possible. Auxiliary functions are quite powerful. While only regular languages can be maintained in \DynProp, all Dyck languages, among other non-regular languages, can be maintained in \DynQF \cite{GeladeMS12}. Furthermore, undirected reachability can be maintained in \DynQF with built-in relations \cite{Hesse03}.

We extend our definition of schemata to allow also function symbols. Within this section, a \emph{schema (or signature)} $\schema$ consists of a set $\relSchema$ of relation symbols, a set $\funSchema$ of function symbols and a set $\conSchema$ of constant symbols together with an arity function $\arity: \relSchema\cup\funSchema \mapsto \N$. A schema is \emph{relational} if $\funSchema=\emptyset$.  A \emph{database} $\db$ of schema $\schema$ with domain $\domain$ is a mapping that assigns to every relation symbol $R \in \relSchema$ a relation of arity $\arity(R)$ over $\domain$,  to every $k$-ary function symbol $f \in \funSchema$ a $k$-ary function, and to every constant symbol $c \in \conSchema$ a single element (called \textit{constant}) from $\domain$. 

In the following, we extend our definition of update programs for the
case of auxiliary schemas with functions\footnote{We also allow functions in built-in schemas. As they are not updated they do not need any further particular definitions.}. It is straightforward to
extend the definition of update formulas for auxiliary relations: they
simply can make use of function terms. 
However,  following the spirit
of \DynProp, we allow a more powerful update mechanism for auxiliary
functions that allows case distinctions in addition to composition of
function terms.

The following definitions are adapted from \cite{GeladeMS12}. 
\begin{definition}(Update term)
  \textit{Update terms} are inductively defined by the following.
  \begin{itemize}
  \item[(1)] Every variable and every constant is an update term.
  \item[(2)] If $f$ is a $k$-ary function symbol and $t_1, \dots, t_k$
    are update terms, then $f(t_1, \ldots, t_k)$ is an update term.
  \item[(3)] If $\phi$ is a quantifier-free update formula
    (possibly using update terms) and $t_1$ and $t_2$ are update
    terms, then $\ite{\phi}{t_1}{t_2}$ is an update term.
  \end{itemize} 
The semantics of update terms associates with every
  update term $t$ and interpretation $I=(\state,\beta)$, where $\state$
  is a state and $\beta$ a variable assignment, a value $\sem{t}{I}$ from
  $S$.  
  The semantics of (1) and (2) is straightforward. If $\state\models\phi$ holds, then $\sem{\ite{\phi}{t_1}{t_2}}{I}$
  is  $\sem{t_1}{I}$, otherwise $\sem{t_2}{I}$.
\end{definition}

The extension of the notion of update programs for auxiliary schemas with function symbols is now straightforward. An update program still has an update \mbox{formula $\ufwa{R}{\delta}$} (possibly using terms built from function symbols) for every relation symbol $R \in \auxSchema$ and every
  abstract modification $\delta$. Furthermore, it has, for every
  abstract modification $\delta$ and  every function symbol $f \in \auxSchema$, an update term $\ut{f}{\delta}{\vec x}{\vec y}$. For a concrete modification $\delta(\vec a)$ it redefines $f$ for each tuple $\vec b$ by evaluating  $t^f_\delta(\vec a;\vec b)$ in the current state.  
 
  \begin{definition}(\DynQF)
    \DynQF is the class of queries maintainable by quantifier-free
    update programs with (possibly) auxiliary functions. The class $k$-ary \DynQF is defined via update programs that use  auxiliary functions and relations 
    of arity at most $k$.
  \end{definition}
We define $\DynQFbi$ as the extension of $\DynQF$ with built-in functions and relations of arbitrary arity.

Lists can be  represented by unary functions in a straightforward
way. Therefore, it is not surprising that the upper bound of
Proposition \ref{prop:unarybinary} already holds for unary \DynProp with unary built-in functions.

\begin{proposition} \label{proposition:onelayereddynqf}
  $\dynProb{\streachQ}$ on 1-layered \stgraphs can be maintained in unary  \DynQFbi with relational auxiliary schema and only unary built-in functions. In particular, $\dynProb{\streachQ}$ on 1-layered \stgraphs can be maintained in unary  \DynQF.
\end{proposition}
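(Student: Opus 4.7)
The plan is to maintain the set of ``good'' nodes---vertices $a$ with both $(s,a) \in E$ and $(a,t) \in E$---as a doubly-linked list, and to let the $0$-ary query bit $Q$ record whether this list is nonempty; on a $1$-layered \stgraph this is equivalent to $\streachQ$.

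Concretely, I would introduce unary auxiliary relations $\In$ and $\Out$ tracking the current neighbors of $s$ and of $t$, a unary auxiliary relation marking which vertices are currently good, and marker relations $\First$ and $\Last$ picking out the endpoints of the good-node list. Rather than storing the list in a binary auxiliary relation (which would exceed the arity bound), the list is threaded through the built-in unary functions: one may take built-in unary functions $\Succ$ and $\Pred$ that induce a fixed ordering of the domain, so that for each good node its two list-neighbors are accessible by a single function application. Upon insertion of $(s,a)$ or $(a,t)$ I would locally update $\In$ and $\Out$ at $a$; if this newly places $a$ in the good set, then $a$ is appended to the list, $\Last$ moves from the previous tail to $a$, $\First$ moves to $a$ exactly when the list was empty (detectable via $\neg Q$), and $Q$ is set to true. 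Deletions are symmetric: if the removed edge causes $a$ to leave the good set, then $a$ is spliced out using the built-in-function-based neighbors of $a$, the markers $\First$ and $\Last$ shift off $a$ if $a$ was an endpoint, and $Q$ flips to false precisely when $\First(a) \mand \Last(a)$ held, i.e., $a$ was the sole good node.

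The main subtlety---and the chief reason the approach works with only unary built-in functions together with relational auxiliary schema---is to verify that every splice touches only a constant number of positions reachable from the modified node $a$ via constant-depth built-in-function terms and the current first/last markers, so that all update formulas really are quantifier-free. This adapts the list-representation trick of Example~\ref{example:emptylist} and \cite[Proposition 4.5]{GeladeMS12}, now using unary built-in functions in place of the binary $\mtext{List}$ relation. The ``in particular'' clause is then immediate: in \DynQF the unary built-in functions are replaced by unary auxiliary functions, initialised to the same values and kept invariant by every update term (i.e., each update term for such a function $f$ is simply $f(y)$), yielding a \DynQF program that maintains $\dynProb{\streachQ}$ on $1$-layered \stgraphs.
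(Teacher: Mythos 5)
There is a genuine gap at exactly the point you flag as the main subtlety. A doubly-linked list of the currently good nodes is inherently \emph{dynamic} data: which good node follows which changes as edges are inserted and deleted. Built-in functions, however, are fixed at initialization and are never updated, so they can only encode a \emph{fixed} ordering of the whole domain. Under a fixed ordering, the list-neighbour of a good node $a$ (the nearest other good node in that ordering) can lie arbitrarily far from $a$, so it is \emph{not} reachable from $a$ by a term of constant nesting depth in $\Succ$ and $\Pred$; the splice step of your deletion update, and the relocation of $\First$/$\Last$ when an endpoint is deleted, therefore cannot be expressed quantifier-freely. The alternative --- storing the actual, insertion-order list pointers --- requires either a binary auxiliary relation (as in Example~\ref{example:emptylist}) or unary auxiliary \emph{functions} that get updated, both of which are excluded by the claim of a unary, purely relational auxiliary schema. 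Your reduction of the ``in particular'' clause to the main claim is fine in itself, but it inherits this flaw.

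The paper avoids lists altogether: it maintains a unary auxiliary relation $C$ holding the single element $i$, where $i$ is the \emph{number} of nodes connected to both $s$ and $t$, reading the domain as $\{0,\ldots,n-1\}$ via the built-in $\Succ$ and $\Pred$. The counter is updated pointwise --- e.g.\ on an insertion that creates a new good node, the new $C(x)$ is defined as $C(\Pred(x))$ --- so no element ever has to be \emph{located}, and the query bit is simply $\neg C(s)$ with $s=0$. If you want to rescue a list-based argument, you must move the pointers into updatable unary auxiliary functions (in the spirit of \cite{GeladeMS12}); that yields only the weaker ``in particular'' statement for unary \DynQF, not the version with relational auxiliary data and unary built-in functions.
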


\newcommand{\So}{\text{ConS}}
\newcommand{\Ta}{\text{ConT}}

\begin{proofsketch}
    We construct a $\DynQFbi$-program $\prog$ over relational auxiliary schema $\{Q, \So, \Ta, C\}$ and functional built-in schema $\{\Pred, \Succ\}$, where $Q$ is the query bit (i.e.\ a $0$-ary relation symbol), $\So$, $\Ta$ and $C$ are unary relation symbols and $\Pred$ and $\Succ$ are unary function symbols. 
    
  The basic idea is to interpret elements of $\domain$ as numbers according to their position in the graph of $\Succ$. For simplicity, but without loss of generality, we therefore assume that the domain is of the form $\domain = \{0, \ldots, n-1\}$ with $s = 0$ and $t= n-1$. For every state \state, the built-in function $\Succ^\state$ is then the standard successor function on $\domain$ (with $\Succ^\state(n-1) = n-1$) and $\Pred^\state$ is its corresponding predecessor function (with $\Pred^\state(0) = 0$).

 The second idea is to
store the current number $i$ of vertices connected to both $s$ and $t$ by letting $C^\state=\{i\}$. If an edge-insertion connects an element to $s$ and $t$ then $i$ is replaced by $i+1$ in $C^\state$ with the help of $\Pred^\state$ and $\Succ^\state$. Analogously $i$ is replaced by $i-1$ for edge-removals that disconnect an element from $s$ or $t$. The relations $\So^\state$ and $\Ta^\state$ store the elements currently connected to $s$ and $t$, respectively.

    For a given instance of the $s$-$t$-reachability query on 1-layered \stgraphs the initialization mapping initializes the auxiliary relations accordingly.

  \insertdescr{E}{(a,b)}{
    If $a=s$ then node $b$ is inserted into $\So$; if $b=t$ then node $a$ is inserted into $\Ta$. Further, if $a$ or $b$ is now in both $S$ and $T$ then the counter is incremented by $1$:
    \begin{align*}
      \uf{\So}{\ins}{a,b}{x} &\df (a = s \wedge x = b) \vee \So(x) \\
      \uf{\Ta}{\ins}{a,b}{x} &\df (b = t \wedge x = a) \vee \Ta(x) \\
      \uf{C}{\ins}{a,b}{x} &\df \big(a = s \wedge \Ta(b) \wedge C(\Pred(x))\big) \\
      &  \quad \quad \vee \big(b = t \wedge \So(a) \wedge C(\Pred(x))\big)  \\
      & \quad \quad \vee \big(a = s \wedge \neg \Ta(b) \wedge C(x)\big) \\
      & \quad \quad \vee \big(b = t \wedge \neg \So(a) \wedge C(x)\big)  \\
      \uf{Q}{\ins}{a,b}{} &\df \neg \uf{C}{\ins}{a,b}{s}
    \end{align*}
  }

  Deletions can be maintained in a similar way.
\end{proofsketch}

We refer to \cite[Section 4.3]{Hesse03} and \cite[Sections 4 and 6]{GeladeMS12} for more examples of \DynQF-programs.

In the following we work towards lower bounds for \DynQF. We first extend the substructure lemma to non-relational structures. If a modification changes a tuple from a substructure $\calA$ of a structure $\calS$, then the update of the auxiliary data of $\calA$ can depend on elements obtained from applying functions to elements in $\calA$. We formally capture these elements by the notion of neighborhood, defined next.

The\textit{ nesting depth} $\nd(t)$ of an update term $t$ is its
  nesting depth with respect to function symbols: If $t$ is a variable, then $\nd(t) = 0$; if $t$ is of
  the form $f(t_1, \ldots, t_k)$ then $\nd(t) = \max\{\nd(t_1),
  \ldots,\nd(t_k)\}+1$; and if $t$ is of the form
  $\ite{\phi}{t_1}{t_2}$ then $\nd(t) = \max\{\nd(\phi), \nd(t_1),
  \nd(t_2)\}$. The nesting depth $\nd(\phi)$ of $\phi$ is the
  maximal nesting depth of all update terms occurring in $\phi$. The \textit{nesting depth of $\prog$} is the maximal
  nesting depth of an update term occurring in $\prog$.

For a schema $\schema$, let $\Terms{\schema}{k}$ be the set of terms of nesting depth at most $k$ with function symbols from $\schema$. Informally, the $k$-neighborhood of a set $A$ is the set of all elements of $S$ that can be obtained by applying a term of nesting depth at most $k$ to a vector of elements from $A$.

\begin{definition}(Neighborhoods)
Let $\state$ be a state with domain $S$ over schema $\schema$ and $k\ge 0$. The \textit{$k$-neighborhood}
  $\nb{A}{\state}{k}$ of a set $A\subseteq S$ is the set 
\[
\{\sem{t}{(\state,\beta)}\mid t\in \Terms{\schema}{k}\text{ and } \beta(x)\in A, \text{for every variable $x$ in $t$}\}.
\]
A subset $A$
  of $S$ is \textit{closed} if $\nb{A}{\state}{1} = A$.
\end{definition}
The $k$-neighborhood of a tuple $\vec a$ or a single element $a$ is defined accordingly. We note that for a closed set $A$ it also holds $\nb{A}{\state}{k} = A$, for every $k$.

A bijection $\pi$ between (the domains  $S$ and $T$ of) two structures $\calS$ and $\calT$ over $\schema = \relSchema \cup \funSchema$ is an \emph{isomorphism}, if it preserves $\relSchema$ and \mbox{$\pi(f^\state(\vec a))= f^\calT(\pi(\vec a))$} for all $k$-ary function symbols $f \in \funSchema$ and $k$-tuples $\vec a$ over $S$. Two subsets $A \subseteq S$, $B \subseteq T$ are \textit{$k$-similar}, if there is a  bijection \mbox{$\pi: \nb{A}{\calS}{k} \rightarrow \nb{B}{\calT}{k}$} such that
\begin{itemize}
 \item the restriction of $\pi$ to $A$ is a bijection of $A$ and $B$,
  \item $\pi$ satisfies $\pi(t^\state(\vec a)) := t^\calT(\pi(\vec a))$ for all $t \in \Terms{\funSchema}{k}$ and $\vec a$ over $A$, and
  \item $\pi$ preserves $\relSchema$ on $\nb{A}{\calS}{k}$.
\end{itemize}
We write $A\approx_k^{\pi,\calS,\calT} B$ to indicate that $A$ and $B$ are $k$-similar via $\pi$ in $\calS$ and $\calT$.
We drop $\calS$ and $\calT$ from this notation if they are clear from the context, and we drop $\pi$ if the name is not important. We also write 
$(a_1,\ldots,a_p)\approx_k^{\calS,\calT} (b_1,\ldots,b_p)$ to indicate that $\{a_1,\ldots,a_p\}\approx_k^{\pi,\calS,\calT} \{b_1,\ldots,b_p\}$ via the isomorphism $\pi$ that maps $a_i$ to $b_i$, for every $i\in\{1,\ldots,p\}$.
Note that if $A\approx_0 B$, then $\restrict{\calS}{A}$ and $\restrict{\calT}{B}$ are $\relSchema$-isomorphic by the first and third property.

The following lemma is a slight generalization of Lemma 4 from \cite{GeladeMS12} and a generalization of the substructure lemma for \DynProp (Lemma \ref{lemma:substruclemma}) to $\DynQFbi$. Intuitively, the substructure lemma for \DynQFbi requires not only similarity of the substructures but of their neighborhoods as well.

\begin{lemma}[Substructure lemma for \DynQF]\label{lemma:substruclemmafun}
  Let $\prog$ be a \DynQFbi program with nesting depth $k$ and let $l$ be some number. Furthermore let $\calS$ and $\calT$ be states of $\prog$  with domains $S$ and $T$ and let $A$ and $B$ be subsets of $S$ and $T$, respectively. There is a number $m \in \N$ such that if $A\approx_{m}^{\pi,\calS,\calT} B$, then $A\approx_0^{\pi,\updateState{P}{\alpha}{\calS},\updateState{P}{\beta}{\calT}} B$, for all $\pi$-respecting modification sequences $\alpha$ and $\beta$  on $A$ and $B$ of length at most $l$.
\end{lemma}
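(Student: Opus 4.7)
The plan is to prove the lemma by induction on the length $l$ of the modification sequences, after isolating a single-step reduction that lies at the heart of the argument. The crucial observation is that every auxiliary function in the updated state is defined, element-wise, by an update term of nesting depth at most $k$ evaluated in the \emph{old} state. Consequently, any depth-$j$ composition of new-state functions over arguments in $A$ unfolds into an old-state term of nesting depth at most $jk$ (with the modification tuple $\vec a \subseteq A$ supplied as an extra parameter). This yields the key containment
\[
\nb{A}{\calS'}{j} \subseteq \nb{A}{\calS}{jk} \quad\text{and}\quad \nb{B}{\calT'}{j} \subseteq \nb{B}{\calT}{jk},
\]
where $\calS' \df \updateState{P}{\delta(\vec a)}{\calS}$ and $\calT' \df \updateState{P}{\delta(\pi(\vec a))}{\calT}$.

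With this inclusion I would establish the following \textbf{single-step claim:} for every $j \geq 0$, if $A \approx_{(j+1)k}^{\pi, \calS, \calT} B$ and $\delta(\vec a), \delta(\pi(\vec a))$ are $\pi$-respecting with $\vec a \subseteq A$, then the restriction of $\pi$ to $\nb{A}{\calS'}{j}$ witnesses $A \approx_j B$ in the updated states. I would verify the three clauses in turn. (i) Using the unfolding in both directions together with the old-state term-preservation guaranteed by $\approx_{(j+1)k}$, the restriction of $\pi$ is a bijection between $\nb{A}{\calS'}{j}$ and $\nb{B}{\calT'}{j}$. (ii) The new-term condition follows by the same unfolding, because a depth-$j$ new term evaluates to the same element as its depth-$\leq jk$ old counterpart, and $\pi$ commutes with such old terms by hypothesis. (iii) For relation preservation on $\nb{A}{\calS'}{j} \subseteq \nb{A}{\calS}{jk}$, input and built-in relations are either unchanged or affected only through the $\pi$-respecting modification of $\vec a$, so preservation is immediate; for an auxiliary relation $R$, the new value $R^{\calS'}(\vec b)$ is determined by the depth-$k$ update formula $\uf{R}{\delta}{\vec a}{\vec b}$, whose evaluation at $\vec b \subseteq \nb{A}{\calS}{jk}$ probes atomic information from $\nb{A}{\calS}{(j+1)k}$ --- precisely the neighborhood on which $\pi$ preserves relations and terms by hypothesis.

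Finally, I would iterate. Define $m(0) \df 0$ and $m(i+1) \df (m(i)+1)k$, so that $m(l) \leq k^{l+1}$. A straightforward induction on $l$, using the single-step claim with $j = m(l-1)$ on the first modification and the induction hypothesis on the remaining $l-1$ modifications, yields that $A \approx_{m(l)}^{\pi} B$ implies $A \approx_0^{\pi} B$ after any $\pi$-respecting modification sequence of length at most $l$. Setting $m \df m(l)$ proves the lemma.

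The main obstacle will be clause (i) of the single-step claim: showing that $\pi$ sends $\nb{A}{\calS'}{j}$ \emph{onto}, and not merely into, $\nb{B}{\calT'}{j}$. This relies on the fact that the new-function unfolding is purely syntactic---it produces the same old-state term on both sides of the isomorphism---combined with the bijectivity of $\pi$ on the old $(j+1)k$-neighborhoods. A secondary technical point is that the modified tuple $\vec a$ and its image $\pi(\vec a)$ appear as parameters in both unfoldings, which is automatic since the modifications are $\pi$-respecting and $\vec a \subseteq A$.
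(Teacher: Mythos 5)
Your proposal is correct and follows essentially the same route as the paper: the paper's proof likewise proceeds by induction on the sequence length, with a single-step claim that $A\approx_{r\cdot k+k}B$ in the old states implies $A\approx_{r}B$ in the updated states, established via exactly your two ingredients --- the unfolding of depth-$r$ new-state terms into depth-$(r\cdot k)$ old-state terms (so $\nb{A}{\calS'}{r}\subseteq\nb{A}{\calS}{r\cdot k}$, and $\pi$ commutes with them), and the observation that updated relation memberships on $\nb{A}{\calS}{r}$ are decided by depth-$k$ update formulas probing only $\nb{A}{\calS}{r+k}$. Your explicit attention to surjectivity of $\pi$ onto the new neighborhood, and your explicit recursion $m(i+1)=(m(i)+1)k$, only make precise details the paper leaves implicit.
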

\begin{proof}
  The proof is an extension of the proof of Lemma \ref{lemma:substruclemma}.  The lemma follows by an induction over the length $l$ of the modification sequence. For $l=0$ there is nothing to prove. The induction step follows easily using Claim (C) below.

  Let $\delta(\vec a)$ and $\delta(\vec b)$ be two $\pi$-respecting modifications on $A$ and $B$, respectively, i.e.\ $\vec b = \pi (\vec a)$. Let   $\calS'\df\updateState{P}{\delta(\vec a)}{\calS}$ and $\calT'\df\updateState{P}{\delta(\vec b)}{\calT}$. We prove the following claims for arbitrary $r \in \N$:
  \begin{enumerate}
   \item[(A)]     If $A\approx_{r+k}^{\pi,\calS,\calT} B$, then\footnote{Of course, the following two statements also hold for relation and function symbols \mbox{from $\builtinSchema$.}} for all $\vec c$ over $\nb{A}{\calS}{r}$:
      \begin{itemize}
      \item[(i)] $\vec c \in R^{\calS'}$ if and only if $\pi(\vec c) \in R^{\calT'}$ for all relation symbols $R \in \auxSchema$.
      \item[(ii)]  $f^{\calS'}(\vec c) \in \nb{A}{\calS}{r+k}$ and $\pi(f^{\calS'}(\vec c)) = f^{\calT'}(\pi(\vec c))$ for all function symbols $f \in \auxSchema$.
      \end{itemize}
    \item[(B)]    If $A\approx_{r \cdot k}^{\pi,\calS,\calT} B$, then $t^{\calS'}(\vec c) \in \nb{A}{\calS}{r\cdot k}$ and  $\pi(t^{\calS'}(\vec c)) = t^{\calT'}(\pi(\vec c))$ for all terms $t \in \Terms{\auxSchema\cup\builtinSchema}{r}$ and $\vec c$ over $S$.
    \item[(C)]  If $A\approx_{r \cdot k+k}^{\pi,\calS,\calT} B$, then $A\approx_{r}^{\pi,\calS',\calT'} B$.
  \end{enumerate}

 We prove Claim (A) first. We recall that $\vec c \in R^{\calS'}$ if and only if $\state\models\uf{R}{\delta}{\vec a}{\vec c}$, and that $f^{\calS'}(\vec c)$ is $\sem{\ut{f}{\delta}{\vec x}{\vec y}}{(\state,\gamma)}$, where $\gamma$ maps $(\vec x,\vec y)$ to $(\vec a, \vec c)$. Since $\vec a$ and $\vec c$ are tuples over $\nb{A}{\state}{r}$ it is sufficient to prove, for every tuple $\vec d$ over $\nb{A}{\state}{r}$, that (i) $\varphi(\vec d)$ holds in $\state$ if and only if $\varphi(\pi(\vec d))$ holds in $\calT$, for every quantifier-free formula $\varphi$ with nesting depth at most $k$,  and that\footnote{Here, we use $\vec d$ to denote the variable assignment mapping the free variables of $t$ to the components of $\vec d$.} (ii) \mbox{$\pi(\sem{t}{(\state,\vec d)}) = \sem{t}{(\calT,\pi(\vec d))}$}, for every update term $t$ with nesting depth at most $k$. 

  The proof is by induction on $k$. We start with the base case. If $k=0$, terms and update terms do not use any function symbols and therefore, (i) and (ii) hold trivially, because $\pi$ witnesses the $(r+k)$-similarity of $A$ and $B$ in $\state$ and $\calT$. 
  For the induction step, we consider update terms and update formulas with nesting depth $k'\in\{1,\ldots,k\}$. If an update term $t$ with $\nd(t) = k'$ is of the form $f(\vec s)$ with $\vec s = (s_1, \ldots, s_n)$, then, by induction hypothesis, 
\mbox{$\pi(\sem{s_i}{(\state,\vec e_i)}) = \sem{s_i}{(\calT,\pi(\vec e_i))}$} and $s_i^\state(\vec e_i) \in \nb{A}{\state}{r+k'-1}$ for every $i$ and vector $\vec e_i $ consisting of elements from $\vec d$. Thus,
$\pi(\sem{f(\vec s)}{(\state,\vec d)}) = \sem{f(\vec s)}{(\calT,\pi(\vec d))}$ because $A$ and $B$ are $(r+k)$-similar and $k' \leq k$. The other cases are analogous. This concludes the proof of \mbox{Claim (A)}.

  Claim (B) can be proved by an induction over the nesting depth of $t$. The induction step uses Claim (A ii).

  For Claim (C) we have to prove that $\pi$ is witnessing the $r$-similarity of $A$ and $B$ in $\calS'$ and $\calT'$. The first property of similarity is trivial and the second follows from Claim (B). For the third property let $\vec c$ be an arbitrary $m$-tuple over $\nb{\calS'}{A}{r}$ and $R$ some $m$-ary relation symbol. Then $\vec c = (\sem{t_1}{(\calS',\vec c_1)},\ldots,\sem{t_n}{(\calS',\vec c_n)})$ with $\vec c_i$ over $A$ and $t_i \in \Terms{\auxSchema}{r}$. Thus $\vec c$ is a tuple over $\nb{\state}{A}{r\cdot k}$, by Claim (B), and therefore   $R^{\calS'}(\vec c)$ if and only if $R^{\calT'}(\pi(\vec c))$, by Claim (A).
\end{proof}

We now prove that unary \DynQF cannot maintain $s$-$t$-reachability. Intuitively, unary functions cannot store the transitive closure relation of a directed path in such a way, that the information can be extracted by a quantifier-free formula. The proof is simplified by the following observation.

\begin{lemma}\label{lemma:funct}
   If an $l$-ary query $\query$ can be maintained by a \DynQF-program, then $\query$ can be maintained by a $k$-ary \DynQF-program with only one $l$-ary auxiliary relation (used for storing the query result) on databases with at least two elements.
\end{lemma}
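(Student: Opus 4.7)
The plan is to simulate each auxiliary relation (except the query symbol $Q$) of an arbitrary $k$-ary \DynQF-program $\prog=(P,\init,Q)$ maintaining $\query$ by an auxiliary function of the same arity that maps $\vec a$ to one of two distinguished domain elements depending on whether $\vec a$ belongs to the relation. This converts all auxiliary relations other than $Q$ into functions, leaving only the single $l$-ary auxiliary relation $Q$. The hypothesis $|D|\ge 2$ is needed precisely to supply two distinct marker elements for this encoding.

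Concretely, let the auxiliary schema of $\prog$ consist of relation symbols $R_1,\ldots,R_n$, with $R_n = Q$, and function symbols $f_1,\ldots,f_p$. I would build a new program $\prog'=(P',\init',Q)$ whose auxiliary schema keeps $Q$ and $f_1,\ldots,f_p$, adds for every $i<n$ a fresh function symbol $g_i$ of the same arity as $R_i$, and further adds two nullary function symbols $c_0$ and $c_1$. Since no arity grows, $\prog'$ is still $k$-ary and its only auxiliary relation is $Q$. Let $\tau$ denote the purely syntactic translation on quantifier-free formulas and update terms that replaces every atomic subformula $R_i(\vec t)$ with $i<n$ by $g_i(\vec t)=c_1$ (with subterms translated recursively) and leaves all other atoms untouched.

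For the initialization $\init'$ applied to an input database $\db$ with domain $D$ of size at least $2$, I would pick any two distinct elements $d_0, d_1 \in D$ and set $c_0 := d_0$ and $c_1 := d_1$; inherit the initial values of $Q$ and of each $f_j$ from $\init$; and set $g_i(\vec a) := c_1$ if $\vec a \in R_i$ in the initial state of $\prog$ and $g_i(\vec a) := c_0$ otherwise. The update program $P'$ would keep $c_0$ and $c_1$ fixed by using the trivial update terms $\utwa{c_0}{\delta} := c_0$ and $\utwa{c_1}{\delta} := c_1$ for every abstract modification $\delta$; update each original $f_j$ by $\tau(\utwa{f_j}{\delta})$; update $Q$ by $\tau(\ufwa{Q}{\delta})$; and update each $g_i$ via $\ite{\tau(\ufwa{R_i}{\delta})}{c_1}{c_0}$.

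Correctness follows by a straightforward induction on the length of the modification sequence, maintaining the invariant that in every reachable state $\state'$ of $\prog'$, the values of $c_0$ and $c_1$ are still the distinct elements $d_0$ and $d_1$, and for every $i<n$, $g_i(\vec a) = c_1$ holds iff $\vec a \in R_i$ in the corresponding state $\state$ of $\prog$. A subsidiary induction on the structure of quantifier-free formulas and update terms then shows that $\tau$ is semantics-preserving under this correspondence, which yields $Q^{\state'} = Q^{\state}$ and hence that $\prog'$ maintains $\query$. I do not expect a serious obstacle here: the only delicate point is keeping $c_0$ and $c_1$ distinct, which is ensured by choosing them distinct at initialization (possible precisely because $|D|\ge 2$) and never updating them thereafter.
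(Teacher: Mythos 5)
Your proposal is correct and follows essentially the same route as the paper's (sketched) proof: two distinct domain elements are stored as $0$-ary auxiliary functions, each auxiliary relation $R$ other than the query symbol is replaced by a function of the same arity with $\vec a \in R$ iff the function maps $\vec a$ to the ``true'' marker, and atoms $R(\vec t)$ in update formulas are rewritten as equalities with that marker. You merely spell out the translation $\tau$ and the correctness induction in more detail than the paper does.
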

The restriction to structures with at least two elements is harmless, as we only use this lemma in a context where structures indeed have at least two elements.\\ \vspace{-2mm}

\begin{proofsketch}
  In order to encode relations by functions, two constants (i.e., $0$-ary functions) $c_\bot$ and $c_\top$ are used. Those constants are initialized by two distinct elements of the domain. Then a $k$-ary relation $R$ can be easily encoded by a $k$-ary function $f_R$ via $(a_1, \ldots, a_{k}) \in R$ if and only if \mbox{$f_R(a_1, \ldots, a_{k})=c_\top$}. \end{proofsketch}

\begin{theorem}\label{theorem:unaryfun}
  $\dynProb{\streachQ}$ is not in unary \DynQF. 
\end{theorem}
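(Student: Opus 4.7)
The plan is to generalize the proof of Theorem \ref{theorem:binary} by exploiting the \DynQF substructure lemma while carefully controlling the $m$-neighborhoods that arise from the unary auxiliary functions. Suppose, for contradiction, that $\prog = (P, \init, Q)$ is a unary \DynQF-program that maintains $\dynProb{\streachQ}$, which I will try to refute already on $2$-layered \stgraphs. By Lemma \ref{lemma:funct} I may assume the auxiliary schema consists of the $0$-ary query symbol $Q$ together with a set $\{f_1, \ldots, f_r\}$ of unary functions. Let $d$ be the nesting depth of $\prog$ and let $m$ be the number provided by Lemma \ref{lemma:substruclemmafun} for modification sequences of length bounded by some $L$ to be fixed below. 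The crucial quantitative observation is that under $r$ unary functions the $m$-neighborhood of any single element contains at most $r^m + 1$ elements, so all neighborhoods appearing in our argument will have a bounded number of vertices.

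I then replay the construction of Theorem \ref{theorem:binary}: start from the empty $2$-layered \stgraph with layers $A$, $B$ of sufficient size, apply $\init$ to obtain a state $\state$, insert all edges $(b,t)$ for $b \in B$, and for every subset $X$ of a suitable $B_2 \subseteq B$ insert the edges $(a_X, b)$ with $b \in X$, where $a_X$ is a dedicated vertex in $A_2 \subseteq A$. Call the resulting state $\state'$. The main task is now to find a large $B_3 \subseteq B_2$ and a corresponding system of $a_X$'s such that (i)~the $m$-neighborhoods $\nb{\{b\}}{\state'}{m}$ for $b \in B_3$, together with the $m$-neighborhoods of the used $a_X$'s, are pairwise disjoint and avoid the constants $s,t$, and (ii)~for every arity $\le m$ the isomorphism type of ordered tuples picked from $B_3$ together with their neighborhoods is uniform, and similarly for joint tuples involving $s$, $t$ and a distinguished $a_X$. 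Property~(i) can be enforced by iterated greedy selection, since each new element contributes only boundedly many neighbors. Property~(ii) is then obtained by the variant of Ramsey's theorem for structures (Theorem \ref{coro:ramsey}), applied to the expanded signature in which neighborhoods are encoded as part of the atomic type, much as in the proof of Lemma~\ref{lemma:homramsey}.

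Once $B_3$ is in hand, I enumerate its elements as $b_1 \norder \cdots \norder b_{n_3}$, set $X_i := \{b_1, \ldots, b_i\}$, and form the words $w_i = \sigma_i^1 \cdots \sigma_i^i$ where $\sigma_i^j$ encodes the full $m$-neighborhood type of the pair $(a_{X_i}, b_j)$ in $\state'$, relative to $\{s,t\}$. The alphabet is finite because neighborhoods have bounded size and the signature is finite, so Higman's lemma yields $k < l$ with $w_k \subseq w_l$, say at positions $i_1 < \cdots < i_k$. By the choice of $B_3$ the substructures $\calT_k$ and $\calT_l$ of $\state'$, obtained by restricting to $\{s,t,a_{X_k},b_1,\ldots,b_k\}$ respectively $\{s,t,a_{X_l},b_{i_1},\ldots,b_{i_k}\}$ together with their $m$-neighborhoods, are $m$-similar via the obvious bijection $\pi$ mapping $s,t$ to themselves, $a_{X_k}$ to $a_{X_l}$, and $b_j$ to $b_{i_j}$. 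Applying the $\pi$-respecting modification sequences $\beta_1$ and $\beta_2$ from the proof of Theorem \ref{theorem:binary} (which delete the $a_{X}$-to-$B_3$ edges and insert $(s,a_X)$), and choosing $L$ to be an upper bound on their length, Lemma \ref{lemma:substruclemmafun} forces the query bit $Q$ to agree in $\updateState{P}{\beta_1}{\state'}$ and $\updateState{P}{\beta_2}{\state'}$. However $\beta_1$ produces a graph without an $s$-$t$-path while $\beta_2$ produces one with an $s$-$t$-path, the desired contradiction.

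The main obstacle is condition (i) together with (ii): unary functions may link arbitrary chosen witnesses back to constants or to already selected vertices, and they may identify neighborhoods of distinct witnesses, all of which destroys the required $m$-similarity. Controlling this requires an iterated selection argument in which each time a new candidate is added one prunes the remaining pool to those whose $m$-neighborhood avoids all previously committed vertices, and a Ramsey step is then used to make the resulting neighborhood shapes uniform. The size bookkeeping is routine because all relevant quantities are bounded in terms of $r$, $m$ and the (fixed) arity of the signature, so an appropriate initial domain size $n$ suffices to carry the argument through.
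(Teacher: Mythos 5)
Your route is genuinely different from the paper's: the paper does not revisit the $2$-layered gadget of Theorem~\ref{theorem:binary} at all, but instead works on a bare directed path $a_1\to a_2\to\cdots\to a_n$ with $s,t$ isolated, and looks for two path vertices $a_i,a_j$ ($i<j$) such that the \emph{transposition} swapping them is an $m$-similarity of $(a_i,a_j,s,t)$ and $(a_j,a_i,s,t)$; inserting $(s,a_i),(a_j,t)$ versus $(s,a_j),(a_i,t)$ then gives the contradiction via Lemma~\ref{lemma:substruclemmafun}. The combinatorial core there is a three-index Ramsey argument on equality types of term-neighborhood vectors (if $t_1(a_{i_1})=t_2(a_{i_2})$ then, via a third index $a_{i_3}$, also $t_1(a_{i_2})=t_2(a_{i_1})$), which sidesteps almost all of the neighborhood bookkeeping your plan has to fight with. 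So even if your approach were completed it would be a different (and in fact stronger, since it targets $2$-layered \stgraphs) proof.

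However, as written your argument has a concrete gap at its load-bearing step, condition~(i). You cannot in general enforce that the $m$-neighborhoods of the used vertices $a_{X_i}$ are disjoint from $B_3$ and from the neighborhoods of the selected $b_j$'s: a perfectly legal unary \DynQF program can maintain, say, $f(a_X)=$ the first (or last) $b\in X$ whose edge $(a_X,b)$ was inserted, so that in $\state'$ one has $f(a_{X_i})\in X_i\subseteq B_3$ for \emph{every} nested set $X_i$. No amount of greedy pruning of the pool of $b$'s can change this, because the offending element lies inside the very set $X_i$ that defines $a_{X_i}$. Once the neighborhoods overlap, the letterwise matching produced by Higman's lemma no longer assembles automatically into a single bijection $\pi$ on $\nb{T_k}{\state'}{m}$: you would have to encode the full overlap pattern (which $b_j$'s and which $a$-vertices each neighborhood hits, and how $E$ behaves on composed terms such as $E(g(x),y)$) both into the Higman alphabet and into the expanded signature used for the Ramsey step on $B_3$, and then prove a gluing lemma that letter-matching plus pair-homogeneity yields a consistent global $\pi$. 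That gluing lemma is the actual crux of your approach and is exactly what the paragraph beginning ``The main obstacle'' defers; it is not routine bookkeeping. (A smaller but real omission of the same kind: after Lemma~\ref{lemma:funct} the schema contains $0$-ary function symbols $c_\bot,c_\top$, which lie in every neighborhood and must be fixed by $\pi$, so they too have to be excluded from $B_2$ and from the relevant neighborhoods.) I would recommend either supplying that gluing argument in full or switching to the paper's path-based construction, where the only object to control is the equality type of the neighborhood vectors of two vertices and the constants.
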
  
\begin{proof}
   Towards a contradiction, we assume that  $\prog = (P, \init, Q)$ is a unary \DynQF-program that maintains $s$-$t$-reachability over schema $\schema = \inpSchema \cup \auxSchema$ with unary $\auxSchema$. By Lemma \ref{lemma:funct} we can assume that $\auxSchema$ contains only $0$-ary and unary function symbols and one 0-ary relation symbol $Q$ for storing the query result. The graphs used in this proof do not have self-loops and every node has at most one outgoing edge. Therefore we can assume, in order to simplify the presentation, that $\auxSchema$ contains a unary function symbol $e$, such that in every state $\state$ the function $e^\state$ encodes the edge relation $E$ as follows. If the single outgoing edge from $u$ is $(u,v)$ then  $e(u) = v$ and if $u$ has no outgoing edge then $e(u) = u$. 

Let $k$ be the nesting depth of $\prog$ and   let $n$ be chosen sufficiently large with respect to $\schema$ and $k$. Let $G = (V, E)$ be a graph where $V = \{s,t \} \cup A$ with $A = \{a_1, \ldots, a_n\}$ and
 $E = \{(a_i, a_{i+1}) \mid i \in \{1, \ldots, n-1\}\}$, i.e., $\restrict{G}{A}$ is a path of length $n-1$ from $a_1$ to $a_{n}$. Further, let $\state = (V, E, \aux)$ be the state obtained by applying $\init$ to $G$. 

   Our goal is to find $i$ and $j$ with $i<j$ such that for the two nodes $a \df a_i$ and $b\df a_j$ it holds $(a,b,s,t)\approx_{m}(b,a,s,t)$, where $m$ is the number from the substructure lemma for auxiliary functions  (Lemma \ref{lemma:substruclemmafun}), for modification sequences of length 2 and nesting depth $k$.

 Then, by Lemma \ref{lemma:substruclemmafun}, the program $\prog$ computes the same query result for the following two modification sequences:
    \begin{itemize}
      \item[($\beta_1$)] Insert edges $(s,a)$ and $(b, t)$. 
      \item[($\beta_2$)] Insert edges $(s,b)$ and $(a, t)$. 
    \end{itemize}

  However, applying the modification sequence $\beta_1$ yields a graph in which $t$ is reachable from $s$, whereas $\beta_2$ yields a graph in which $t$ is not reachable from $s$ (see Figure \ref{figure:theorem:unaryfun} for an illustration). This is the desired contradiction.

    \begin{figure}[!t]
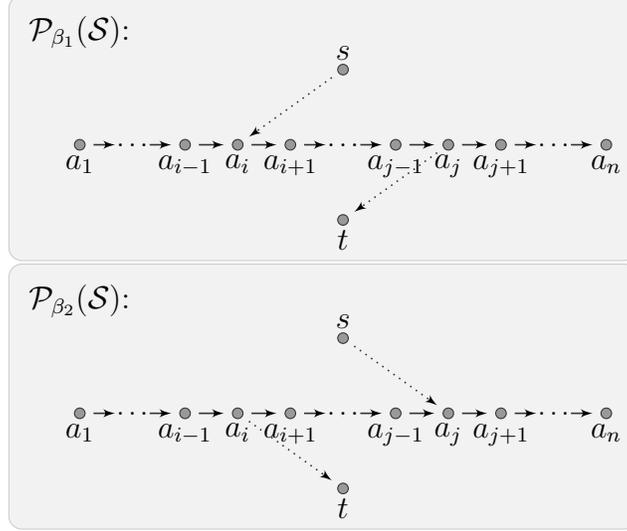
 
  \begin{center}
      \picunaryqfa
      \picunaryqfb
      \caption{The structure $\state$ from the proof of Theorem \ref{theorem:unaryfun}. Edges inserted by modification sequence $\beta_1$ and modification sequence $\beta_2$, respectively, are dotted. \label{figure:theorem:unaryfun}}
  \end{center}
    \end{figure}
  Thus it remains to show the existence of such $i$ and $j$. To this end, let $t_1, \ldots, t_l$ be the lexicographic enumeration of $\Terms{\schema}{k}$ with respect to some fixed order of the function symbols.   Let the \textit{$k$-neighborhood vector} $\nbv{c}{\state}{k}$ of an
  element $c$ in $\calS$ be the tuple
  $(c, t_1(c), \ldots, t_l(c))$. For a tuple $\vec
  c = (c_1, \ldots, c_m)$, the $k$-neighborhood vector $\nbv{\vec
   c}{\state}{k}$ of $\vec c$ is the tuple $(\nbv{c_1}{\state}{k},
  \ldots, \nbv{c_m}{\state}{k})$.  The number of equality types of such neighborhood vectors is finite and bounded by a number that only depends on $m$, $k$ and $\auxSchema$.

By applying Ramsey's theorem on the graph over $\{1,\ldots,n\}$, where each pair $(i,j)$ with $i<j$ is colored by the equality type of  $\nbv{a_i, a_j, s, t}{\state}{m+1}$, we obtain numbers $i_1<i_2<i_3$ such that  the equality types of $\nbv{a_{i_1},a_{i_2},s,t}{\state}{m+1}$, $\nbv{a_{i_1},a_{i_3},s,t}{\state}{m+1}$, and $\nbv{a_{i_2},a_{i_3},s,t}{\state}{m+1}$ are equal. In particular, as all function symbols are unary, the equality types of $\nbv{a_{i_1},s,t}{\state}{m+1}$, and $\nbv{a_{i_2},s,t}{\state}{m+1}$ and finally those of  $\nbv{a_{i_1},a_{i_2},s,t}{\state}{m+1}$ and $\nbv{a_{i_2},a_{i_1},s,t}{\state}{m+1}$ are equal. 

For the latter conclusion, we show the following claim: if  for two terms $t_1$ and $t_2$ of depth at most $m+1$ it holds $t_1(a_{i_1})=t_2(a_{i_2})$ then also $t_1(a_{i_2})=t_2(a_{i_1})$. We observe that if $t_1(a_{i_1})=t_2(a_{i_2})$  then also $t_1(a_{i_1})=t_2(a_{i_3})$ and $t_1(a_{i_2})=t_2(a_{i_3})$ (since $\nbv{a_{i_1},a_{i_2},s,t}{\state}{m+1}$, $\nbv{a_{i_1},a_{i_3},s,t}{\state}{m+1}$, and $\nbv{a_{i_2},a_{i_3},s,t}{\state}{m+1}$ have the same equality type). Hence,   $t_1(a_{i_2})=t_2(a_{i_2})$ and therefore $t_1(a_{i_2})=t_2(a_{i_2})=t_1(a_{i_1})=t_2(a_{i_1})$. The latter equality follows as the equality types of $\nbv{a_{i_1},s,t}{\state}{m+1}$, and $\nbv{a_{i_2},s,t}{\state}{m+1}$ are equal. This concludes the proof of the claim.

  To prove $(a,b,s,t)\approx_{m}(b,a,s,t)$ it only remains to show that $(u,v) \in E$ if and only if $(u', v')\in E$, for two components $u$ and $v$ from $\nbv{a,b,s,t}{\state}{m}$ and their corresponding components $u'$ and $v'$ from $\nbv{b,a,s,t}{\state}{m}$. However, $(u,v)\in E$ if and only if $e(u) = v$, and analogously $(u', v')\in E$ if and only if $e(u') = v'$. Thus this claim follows already from the fact that $\nbv{a_{i_1},a_{i_2},s,t}{\state}{m+1}$ and $\nbv{a_{i_2},a_{i_1},s,t}{\state}{m+1}$ have the same equality type.
\end{proof}

We now extend the lower bound for invariant initialization obtained in Theorem \ref{theorem:logicinit} to
quantifier-free programs with auxiliary functions. Invariant initialization is still weak in the presence of auxiliary functions in
the sense, that functions initialized by invariant initialization can only point to 'distinguished' nodes, as formalized by the following lemma.
\begin{lemma}\label{lemma:initfunc}
  Let $\prog = (P, \init, Q)$ be a \DynQF-program with invariant initialization mapping $\init$ and auxiliary schema $\auxSchema$. Further let $\inp$ be an input structure for $\prog$ whose domain contains $b$ and $b'$ with \mbox{$b \neq b'$}. If $\swap{b}{b'}$ is an isomorphism of $\inp$, then \mbox{$f^{\init(\inp)}(\vec a) \neq b$} for all $k$-ary function symbols $f \in \auxSchema$ and all $k$-tuples $\vec a$.
\end{lemma}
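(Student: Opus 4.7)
The plan is to leverage the invariance of $\init$: first lift the symmetry of $\inp$ under $\pi\df\swap{b}{b'}$ to the full initial state, and then use the fact that auxiliary functions must commute with this symmetry to contradict $f^{\auxInit(\inp)}(\vec a)=b$.

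First I would observe that $\pi$ is an automorphism of $\inp$, so $\pi(\inp)=\inp$. Applying the invariance identity $\pi(\auxInit(\inp))=\auxInit(\pi(\inp))$ then yields $\pi(\auxInit(\inp))=\auxInit(\inp)$, i.e., $\pi$ is an automorphism of $\auxInit(\inp)$ as well. Being an automorphism of the auxiliary structure, $\pi$ must commute with every auxiliary function: for each $f\in\auxSchema$ and each tuple $\vec c$,
$$f^{\auxInit(\inp)}\bigl(\pi(\vec c)\bigr)=\pi\bigl(f^{\auxInit(\inp)}(\vec c)\bigr).$$

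Next I would assume, toward a contradiction, that $f^{\auxInit(\inp)}(\vec a)=b$. Applying the commutation identity gives $f^{\auxInit(\inp)}(\pi(\vec a))=\pi(b)=b'$. Since $\pi$ fixes every element outside $\{b,b'\}$, whenever no component of $\vec a$ lies in $\{b,b'\}$ we have $\pi(\vec a)=\vec a$, and the two equalities collapse to $b=b'$, contradicting the hypothesis $b\ne b'$.

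The hard part will be tuples $\vec a$ that contain $b$ or $b'$: there $\pi(\vec a)\neq\vec a$, and the commutation only says that swapping $b$ and $b'$ within $\vec a$ swaps the output of $f$ accordingly, which is consistent with the assumption and yields no direct contradiction. In the intended downstream use, the lemma is invoked for tuples drawn from a homogeneous region of the domain that has been chosen disjoint from $\{b,b'\}$, so the elementary case above is exactly what is required; obtaining the statement literally for arbitrary $\vec a$ would require exhibiting further automorphisms of $\inp$ that fix $\vec a$ pointwise while moving $b$, something the single hypothesis that $\swap{b}{b'}$ is an automorphism of $\inp$ does not provide by itself.
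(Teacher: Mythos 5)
Your argument is exactly the one the paper intends: the entire published proof of this lemma is the single sentence that the claim ``follows immediately from the invariance of the initialization mapping,'' and your expansion of it --- lift $\pi=\swap{b}{b'}$ to an automorphism of $\auxInit(\inp)$ via $\pi(\auxInit(\inp))=\auxInit(\pi(\inp))=\auxInit(\inp)$, use $f(\pi(\vec a))=\pi(f(\vec a))$, and derive $b=b'$ when $\pi$ fixes $\vec a$ --- is the correct and complete version of that argument for tuples $\vec a$ disjoint from $\{b,b'\}$.

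Your reservation about the remaining tuples is not merely a loose end: the lemma as literally stated is false. Take $\auxInit$ to be the mapping that assigns to every input database the auxiliary database in which some unary $f$ is the identity function. This mapping is invariant ($\pi\circ\mathrm{id}\circ\pi^{-1}=\mathrm{id}$ for every permutation $\pi$), yet $f^{\auxInit(\inp)}(b)=b$, contradicting the conclusion for $\vec a=(b)$. So no proof of the unrestricted statement can exist, and the single hypothesis that $\swap{b}{b'}$ is an automorphism genuinely cannot reach tuples meeting $\{b,b'\}$, exactly as you say. What the paper actually needs downstream is the version you did prove: in Theorem \ref{theorem:logicinitqf} one only needs that $A\cup\{s,t\}$ is closed, and the offending case $f(\vec a)=b$ with $b\in\vec a$ is harmless for closure (one should additionally choose $b'\in A_i$ outside $\vec a$, which is possible once $|A_i|$ exceeds the maximal arity by two); in Proposition \ref{proposition:invariant} the relevant tuples range over the nodes of $G$ while $b,b'$ lie in the attached cliques, so they are automatically disjoint. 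In short: your proof is the right one, and the honest caveat you attach to it identifies a real (though repairable) overstatement in the lemma; the clean fix is to add the hypothesis $b,b'\notin\vec a$ to the statement.
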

\begin{proof}
  The claim follows immediately from the invariance of the initialization mapping.
\end{proof}

The following lemma will be useful for the proof of the next theorem.

\begin{lemma} \label{lemma:dynqfiso}
  Let $\prog$ be a \DynQF program and $\state$ and $\calT$ be states of $\prog$ with domains $S$ and $T$. Further let $A \subseteq S$ and $B \subseteq T$ be closed. If $\restrict{\state}{A}$ and $\restrict{\calT}{B}$ are isomorphic via $\pi$ then $\restrict{\updateState{P}{\alpha}{\state}}{A}$ and $ \restrict{\updateState{P}{\beta}{\calT}}{B}$ are isomorphic via $\pi$ for all $\pi$-respecting modification sequences $\alpha$, $\beta$  on $A$ and $B$.
\end{lemma}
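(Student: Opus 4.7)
The plan is to argue by induction on the length of the modification sequences $\alpha$ and $\beta$; the base case of length $0$ is trivial. For the induction step, it suffices to show that a single pair of $\pi$-respecting modifications $\delta(\vec a)$ on $A$ and $\delta(\pi(\vec a))$ on $B$ preserves both the isomorphism of the restrictions and the closedness of $A$ and $B$, so that the hypotheses are available for the next step.

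The heart of the argument is the following closure-preservation observation: since $A$ is closed in $\calS$, one can show by a straightforward induction on nesting depth that, for every update term $t$ and every tuple $\vec c$ over $A$, the value $\sem{t}{(\calS, \vec c)}$ lies in $A$ and satisfies $\pi(\sem{t}{(\calS, \vec c)}) = \sem{t}{(\calT, \pi(\vec c))}$, and that for every quantifier-free update formula $\varphi$ and tuple $\vec c$ over $A$, $\calS \models \varphi(\vec c)$ if and only if $\calT \models \varphi(\pi(\vec c))$. The point is that closedness of $A$ guarantees that every intermediate value obtained by applying function symbols to elements of $A$ remains inside $A$, so the evaluation never leaves the substructure on which $\pi$ already behaves as an isomorphism; this then propagates through $\ite{\cdot}{\cdot}{\cdot}$ constructs as well.

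Applying this observation to the update term $\ut{f}{\delta}{\vec x}{\vec y}$ of each auxiliary function symbol $f$, with $\vec a$ substituted into $\vec x$ and an arbitrary tuple $\vec c$ over $A$ substituted into $\vec y$, yields $f^{\calS'}(\vec c) \in A$ together with $\pi(f^{\calS'}(\vec c)) = f^{\calT'}(\pi(\vec c))$, where $\calS' \df \updateState{P}{\delta(\vec a)}{\calS}$ and $\calT' \df \updateState{P}{\delta(\pi(\vec a))}{\calT}$. This simultaneously shows that $A$ remains closed in $\calS'$ (and symmetrically $B$ in $\calT'$) and that $\pi$ preserves all auxiliary functions on $A$. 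Built-in functions are preserved because they do not change, and applying the same observation to each update formula $\uf{R}{\delta}{\vec x}{\vec y}$ settles preservation of the auxiliary relations. Together this gives $\restrict{\calS'}{A} \isomorphVia{\pi} \restrict{\calT'}{B}$, completing the induction step.

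I do not expect a serious obstacle here: the statement is essentially a cleaner, function-aware counterpart of Lemma \ref{lemma:substruclemma}, and closedness of $A$ and $B$ allows a direct local argument instead of invoking the more delicate neighborhood bookkeeping of Lemma \ref{lemma:substruclemmafun}. Alternatively, one could derive the lemma from Lemma \ref{lemma:substruclemmafun} by noting that closedness implies $\nb{A}{\calS}{k} = A$ and $\nb{B}{\calT}{k} = B$ for every $k$, so the given isomorphism witnesses $A \approx_m B$ for every $m$; the only mild annoyance is that the conclusion $A \approx_0 B$ of Lemma \ref{lemma:substruclemmafun} records only relational preservation, so one still needs to appeal to Claim (A\,ii) from its proof (or, equivalently, to the direct argument above) to recover preservation of functions.
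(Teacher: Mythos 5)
Your proof is correct, and your primary argument takes a more self-contained route than the paper's. The paper disposes of this lemma in two lines: it observes that closedness of $A$ and $B$ together with the given isomorphism implies $A \approx_k^{\pi,\calS,\calT} B$ for every $k$ (since $\nb{A}{\calS}{k} = A$ and $\nb{B}{\calT}{k} = B$), and then invokes Lemma \ref{lemma:substruclemmafun}. That is exactly the ``alternative'' you describe at the end, and your observation about it is well taken: the conclusion of Lemma \ref{lemma:substruclemmafun} is only $A \approx_0 B$, which by the paper's own remark yields a $\relSchema$-isomorphism, whereas the present lemma asserts an isomorphism of structures over a schema that may contain function symbols (and for the restriction to even be a structure, $A$ must stay closed after the updates). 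So the paper's citation, read literally, leaves the function-preservation and closure-preservation parts to the reader, to be extracted from Claim (A\,ii)/(B) inside the proof of Lemma \ref{lemma:substruclemmafun}. Your direct induction supplies precisely these missing pieces: closedness keeps every term evaluation inside $A$, so the update terms for the auxiliary functions land in $A$ (preserving closure) and commute with $\pi$ (preserving functions), while built-in functions are untouched. What the paper's route buys is brevity by reuse of the heavier neighborhood machinery; what yours buys is a cleaner, fully local argument that actually establishes the statement as written, including the functional part of the isomorphism. For the one application of this lemma (Theorem \ref{theorem:logicinitqf}), only the value of the $0$-ary query relation is needed, so the relational conclusion would suffice there; but as a proof of the lemma as stated, your version is the more complete one.
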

\begin{proof}
  Observe that when $A$ and $B$ are closed and $\restrict{\state}{A}$ and $\restrict{\calT}{B}$ are isomorphic via $\pi$ then $A$ and $B$ are $k$-similar via $\pi$ for arbitrary $k$. Thus the claim follows from Lemma \ref{lemma:substruclemmafun}.
\end{proof}

\begin{theorem}\label{theorem:logicinitqf}
  $\dynProb{\streachQ}$ cannot be maintained in \DynQF with invariant initialization mapping. This holds even for 1-layered \stgraphs.
\end{theorem}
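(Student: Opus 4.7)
The plan is to adapt the proof of Theorem \ref{theorem:logicinit} to the functional setting by using Lemma \ref{lemma:initfunc} to tame the auxiliary functions and then invoking Lemma \ref{lemma:dynqfiso} in place of Corollary \ref{lemma:substruccor}. Towards a contradiction, assume a \DynQF program $\prog = (P, \init, Q)$ with invariant initialization $\init$ maintains $\dynstReachQ$ on 1-layered \stgraphs. For every $i \geq 1$, let $G_i = (V_i, E_i)$ be the 1-layered \stgraph with $V_i = \{s,t\} \cup A_i$, $A_i = \{a_0,\ldots,a_i\}$, and $E_i = \{s\}\times A_i \cup A_i\times\{t\}$, and set $\state_i \df \init(G_i)$.

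The first key step is to observe that every auxiliary function in $\state_i$ is confined to the constants $\{s,t\}$. Indeed, for any distinct $b,b'\in A_i$ the swap $\swap{b}{b'}$ is an automorphism of $G_i$, so by Lemma \ref{lemma:initfunc} we obtain $f^{\state_i}(\vec c)\neq b$ for every $b\in A_i$ and every tuple $\vec c$; hence $f^{\state_i}(\vec c)\in\{s,t\}$. In particular, for every $k\le l$ the set $V_k$ is closed in $\state_l$, which is precisely the hypothesis needed for Lemma \ref{lemma:dynqfiso}. The second ingredient is a counting argument based on invariance: every permutation of $A_i$ fixing $s$ and $t$ is an automorphism of $G_i$ and, by invariance of $\init$, also of $\state_i$. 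Hence the value of every auxiliary relation or function on a tuple over $V_i$ depends only on the \emph{pattern} of the tuple, i.e., which coordinates equal $s$, which equal $t$, and the equality pattern among the remaining middle-vertex coordinates. Combined with the constraint that function values lie in $\{s,t\}$, this bounds the number of possible isomorphism types of the states $\state_i$ by a constant $N$ depending only on $\schema$.

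Choose $n > N$ and apply pigeonhole to $\state_0,\ldots,\state_n$ to find $k<l$ whose types coincide. Then the identity $\mathrm{id}_{V_k}$ is an isomorphism from $\state_k$ to $\restrict{\state_l}{V_k}$: the edge relations agree since both graphs carry the full 1-layered edge set on $V_k$, and the auxiliary relations and function values agree pattern by pattern. Let $\alpha$ be the modification sequence deleting the edges $(s,a_0),\ldots,(s,a_k)$; it acts on $V_k$ and is trivially $\mathrm{id}$-respecting. By Lemma \ref{lemma:dynqfiso}, $\restrict{\updateState{P}{\alpha}{\state_k}}{V_k}$ and $\restrict{\updateState{P}{\alpha}{\state_l}}{V_k}$ are isomorphic, so the Boolean query symbol $Q$ takes the same value in both. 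However, applying $\alpha$ to $G_k$ removes all outgoing edges of $s$ and leaves $t$ unreachable, whereas applying $\alpha$ to $G_l$ retains the edges $(s,a_{k+1}),\ldots,(s,a_l)$ and $t$ remains reachable from $s$ --- the desired contradiction. The main obstacle, and precisely the point at which the \DynProp argument would collapse once auxiliary functions are available, is ensuring that $V_k$ is closed in $\state_l$ so that the substructure lemma applies; this is exactly the role played by Lemma \ref{lemma:initfunc}, and it is ultimately why the invariance of $\init$ is the essential hypothesis.
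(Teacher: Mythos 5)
Your proposal is correct and follows essentially the same route as the paper's proof: the same family of complete 1-layered graphs, Lemma \ref{lemma:initfunc} to confine auxiliary function values to $\{s,t\}$ and hence make $V_k$ closed, a pigeonhole over the finitely many invariance-forced "types" of initial states to find $k<l$ with $\state_k \isomorph \restrict{\state_l}{V_k}$, and Lemma \ref{lemma:dynqfiso} applied to the edge-deletion sequence to derive the contradiction. Your phrasing of the counting step via equality patterns of tuples is just a reformulation of the paper's count of isomorphism types of structures on at most $m+2$ elements.
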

\begin{proof}
  We follow the argumentation of the proof of Theorem \ref{theorem:logicinit}.

  Towards a contradiction, assume that $\prog$ is a $\DynQF$-program with auxiliary schema $\auxSchema$ and invariant initialization mapping $\init$ which maintains the $s$-$t$-reachability query for 1-layered \stgraphs. Let $m$ be the maximum arity of relation  or function symbols in  $\auxSchema \cup \{E\}$. Further let $n$ be the number of isomorphism types of structures with at most $m+2$ elements. 

  We consider the complete 1-layered \stgraphs $G_i = (V_i, E_i)$, $2 \leq i \leq n+2$,  with $V_i = \{s,t\} \cup A_i$ and $A_i = \{a_1, \ldots, a_{i}\}$. Further let $\state_i = (V_i, E_i, \aux_i)$ be the state obtained by applying $\init$ to $G_i$.

  We observe that $\swap{a}{a'}$ is an automorphism of $G_i$ for all pairs $(a,a')$ of nodes in $A_i$ with $a \neq a'$. Thus, by Lemma \ref{lemma:initfunc}, $s$ and $t$ are the only values that the auxiliary functions in $\state_i$ can assume, and therefore $\restrict{\state_i}{A \cup \{s,t\}}$ is closed for any subset $A$ of $A_i$. Hence,  by Lemma \ref{lemma:dynqfiso}, it is sufficient
 to find $\state_k$ and $\state_l$ with $k<l$ such that $\state_k$ is isomorphic to $\restrict{\state_l}{V_k}$. Then, we can apply the same sequences of modifications as in Theorem \ref{theorem:logicinit} to reach a contradiction.

  Recall that a tuple is diverse, if all components differ pairwise. Since \linebreak[4] \mbox{$G_i \isomorphVia{\swap{\vec a}{\vec b}} G_i$}, for two diverse $m'$-tuples $\vec a$ and $\vec b$ over $A_i$ with $m' \leq m$, also $\state_i \isomorphVia{\swap{\vec a}{\vec b}} \state_i$ by the invariance of $\init$. In particular $(s,t,\vec a)$ and $(s,t,\vec b)$ are of the same isomorphism type.

  Since $n$ is the number of isomorphism types of structures of at most $m+2$ elements, there are two states $\state_k$ and $\state_l$ such that, all diverse $m$-tuples over $A_k$ and $A_l$ extended by $s$ and $t$ are of the same isomorphism type in $\state_k$ and $\state_l$, respectively. But then  $\state_k \isomorph \restrict{\state_l}{V_k}$.
\end{proof}

  \section{Lower Bounds for Other Dynamic Queries}\label{section:moreproblems}
    \makeatletter{}
In this section we use the lower bounds obtained for the dynamic $s$-$t$-reachability query for shallow graphs to establish lower bounds for the dynamic variants of the following Boolean queries

\problemdescr{\clique{k}}{A graph $G$}{Does $G$ contain a $k$-clique?}

\problemdescr{\colorability{k}}{A graph $G$}{Is $G$ $k$-colorable?}

\noindent where $k$ is a fixed natural number. Cliques are usually defined for undirected graphs only. We define a
clique in a directed graph to be a set of nodes such that each pair of nodes from the set is connected by an edge. Similarly for colorability.

Lower bounds for the dynamic variants of the $\clique{k}$ and $\colorability{k}$ problems (where $k$ is fixed) can be established via reductions to the dynamic $s$-$t$-reachability query for shallow graphs.

\begin{proposition}\label{proposition:binary}
  The dynamic query \dynClique{k}, for $k \geq 3$, and the dynamic query \dynColorability{k}, for $k \geq 2$,  are not in binary \DynPropbi.
\end{proposition}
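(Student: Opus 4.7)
The plan is to reduce from the dynamic $s$-$t$-reachability query on 2-layered \stgraphs, which by (the proof of) Theorem~\ref{theorem:binary} is not in binary \DynPropbi. In both cases we build, from a 2-layered \stgraph $G$, a graph $G^\ast$ such that $G^\ast$ has the target property (a $k$-clique, respectively being $k$-colourable) iff $s$ reaches $t$ in $G$. The extra \emph{scaffold} vertices and edges of $G^\ast$ are fixed once and for all and are placed into the built-in relations; each single-edge modification in $G$ then corresponds to a single-edge modification in $G^\ast$; and the edge relation of $G^\ast$ rewrites as a quantifier-free Boolean combination of atoms over $G$ and the built-ins (possibly after a substitution such as $t\mapsto s$). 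Hence a binary \DynPropbi-program for the derived query would, by simulation, yield a binary \DynPropbi-program for reachability on 2-layered \stgraphs, contradicting Theorem~\ref{theorem:binary}.

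For \dynClique{k} with $k\ge 3$, $G^\ast$ is obtained from $G$ by identifying $s$ and $t$ into a single vertex $v$ (so $(s,a)\in E$ becomes an edge $(v,a)$ of $G^\ast$ and $(b,t)\in E$ becomes $(b,v)$), and by adding $k-3$ universal vertices $u_1,\ldots,u_{k-3}$ that are pairwise connected and connected in both directions to every other vertex. Every $k$-clique of $G^\ast$ decomposes into the universals together with a clique of size at least $3$ among the non-universal vertices $\{v\}\cup A\cup B$. A case analysis on the layers shows that the only triangles in the non-universal subgraph are of the form $\{v,a,b\}$ with $a\in A$, $b\in B$ and edges $v\to a\to b\to v$, because there are no $A$-$A$ and no $B$-$B$ edges. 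Such a triangle exists iff $s\to a\to b\to t$ is a path in $G$, so $G^\ast$ contains a $k$-clique iff $t$ is reachable from $s$.

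For \dynColorability{k} with $k\ge 2$, $G'$ is obtained from $G$ by adding a fresh vertex $c$ with the (undirected) edges $\{s,c\}$ and $\{c,t\}$, plus $k-2$ universal vertices that are pairwise adjacent and adjacent to every other vertex. The universals raise the chromatic number by exactly $k-2$, so $G'$ is $k$-colourable iff $G\cup\{\{s,c\},\{c,t\}\}$ is bipartite. The undirected graph underlying $G$ is bipartite with parts $\{s\}\cup B$ and $A\cup\{t\}$; therefore, whenever $s$ and $t$ lie in the same undirected component of $G$, they are at odd distance and thus on opposite sides of every 2-colouring, leaving no valid colour for $c$. Conversely, if $s$ and $t$ are in different components, the two components can be coloured independently so that $s$ and $t$ take the same colour and $c$ takes the other. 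Hence $G'$ is $k$-colourable iff $s$ and $t$ are \emph{not} undirectedly connected in $G$. The main non-routine step is therefore to re-examine the proof of Theorem~\ref{theorem:binary} and verify that it also precludes undirected 2-layered reachability in binary \DynPropbi: after the sequence $\beta_1$ the vertex $a_{X_k}$ has lost all its edges to $B$ and is thus undirectedly isolated from $t$, whereas after $\beta_2$ the vertex $a_{X_l}$ still has neighbours in $X_l\setminus\{b_{i_1},\ldots,b_{i_k}\}$, each of which has an edge to $t$, so $s$ and $t$ are undirectedly connected — exactly the distinction needed to plug $\beta_1,\beta_2$ into Corollary~\ref{lemma:substruccor} and derive the contradiction.
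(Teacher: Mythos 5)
Your clique argument is essentially the paper's: identify $s$ and $t$ and pad with $k-3$ universal vertices forming a clique, so that a $k$-clique must consist of all the universals plus a triangle $\{v,a,b\}$, which exists iff $s$ reaches $t$. The only cosmetic difference is that you realise the padding through built-in relations over genuine extra domain elements, whereas the paper keeps the $K_{k-3}$ virtual and obtains the auxiliary initialization as the projection of $\init'(G\otimes K_{k-3})$ to the nodes of $G$; both are sound, yours at the small cost of checking that the proof of Theorem~\ref{theorem:binary} tolerates a few reserved, never-modified domain elements (it does). For colourability you genuinely diverge. The paper disposes of \dynColorability{k} in one sentence by asserting that the particular graphs arising in the clique reduction contain a $(k+1)$-clique if and only if they are not $k$-colourable, so the very same modification sequences $\beta_1,\beta_2$ already separate the colourability answers. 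You instead build a fresh gadget (the vertex $c$ adjacent to $s$ and $t$, plus $k-2$ universals) and reduce from \emph{undirected} 2-layered reachability, correctly supplying the one ingredient this requires: the proof of Theorem~\ref{theorem:binary} is agnostic to the query being maintained until its final step, and the graphs reached after $\beta_1$ and $\beta_2$ differ in undirected as well as directed $s$-$t$-connectivity (after $\beta_1$ the component of $s$ is just $\{s,a_{X_k}\}$), so the same Ramsey/Higman argument rules out the undirected variant. Your route is longer but more transparent: the paper's clique/colourability equivalence holds only for the specific graphs of that proof and is stated without verification, whereas your biconditional ($G'$ is $k$-colourable iff $s$ and $t$ are not undirectedly connected) holds for every 2-layered input and is actually proved. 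Both arguments are correct.
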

\begin{proof}
   We prove that \dynClique{3} cannot be maintained in binary \DynProp. Afterwards we sketch the proof for \dynClique{k}, for arbitrary $k \geq 3$. The graphs used in the proof have a $k$-Clique if and only if they are not  $(k-1)$-colorable. Therefore it follows that \dynColorability{k} cannot be maintained in binary \DynProp.

  More precisely, we show that from a binary \DynProp-program $\prog'$ for the query \mbox{\dynClique{3}}  one can construct a dynamic program $\prog$ that maintains the $s$-$t$-reacha\-bility query for 2-layered \stgraphs. As the latter does not exist thanks to Theorem \ref{theorem:binary}, we can conclude that the former does not exist either.

  Let us thus assume that $\prog'=(P', \init', Q')$ is a dynamic program for \mbox{\dynClique{3}} with binary auxiliary schema $\auxSchema'$ and built-in schema $\builtinSchema'$.

  The reduction is very simple. For a $2$-layered graph $G=(\{s,t\}\cup A\cup B,E)$, let $G'$ be the graph obtained from $G$ by identifying $s$ and $t$. Clearly, $G$ has a path from $s$ to $t$ if and only if $G'$ has a $3$-clique. See Figure \ref{figure:reachtoclique} for an illustration.

  \begin{figure}[t]
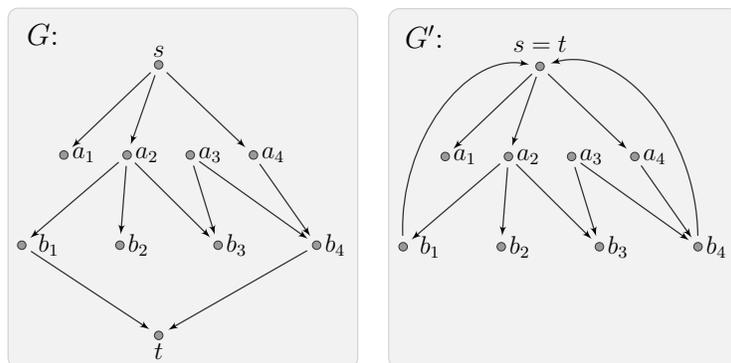
 
      \begin{center}
      \scalebox{0.8}{
        \piccliquea
      }
      \scalebox{0.8}{
        \piccliqueb
      }
    \end{center}
      \caption{The construction from Proposition \ref{proposition:binary}. The $s$-$t$-paths $(s, a_2, b_1, t)$ and $(s, a_4, b_4, t)$ in $G$ correspond to the cliques $\{s, a_2, b_1\}$ and $\{s, a_4, b_4\}$ in $G'$. \label{figure:reachtoclique}}
  \end{figure}

  The dynamic program $\prog$ uses the same auxiliary schema as $\prog'$, the same initialization mapping and the same built-in schema relations. However, edges $(u,t)$ in $E$ are interpreted as if they were edges $(u,s)$ in $E'$. More precisely, the update formulas of $\prog$ are obtained from those in $\prog'$ by replacing every atomic formula $E'(x,y)$ with $(y=s\land E(x,t))\lor (y\not=s \land E(x,y))$. Obviously, $\prog$ is a dynamic program for $s$-$t$-reachability for 2-layered \stgraphs if $\prog'$ is a dynamic program for 
\dynClique{3}, as desired.

  For arbitrary $k$, the construction is similar. The idea is that $\prog$ simulates on a graph $G$ the behavior of $\prog'$ on $G\otimes K_{k-3}$, that is, the graph that results from $G$ by adding a $(k-3)$-clique and completely connecting it with every node of $G$. Interestingly, the update formulas of $\prog$ are exactly as in the previous reduction to  \dynClique{3}, as the ``virtual'' additional $k-3$ nodes are never involved in changes of the graph. However, $\init$ is not the same as $\init'(G)$ but rather the projection of $\init'(G\otimes K_{k-3})$ to the nodes of $G$.
\end{proof}
\medskip

\begin{proposition}\label{proposition:invariant}
  The dynamic query \dynClique{k}, for $k \geq 3$, and the dynamic query \dynColorability{k}, for $k \geq 2$, cannot be maintained in \DynQF with invariant initialization mapping.
\end{proposition}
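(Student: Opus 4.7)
The plan is to follow the template of Proposition \ref{proposition:binary}, but use Theorem \ref{theorem:logicinitqf} in place of Theorem \ref{theorem:binary}. The reduction from \streachQ on shallow \stgraphs to \dynClique{k} already works program-wise; the only nontrivial issue is that invariance of the initialization must be preserved under the reduction, and that the shallow-\streachQ lower bound is available in the invariant \DynQF setting.

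First I would verify that \dynProb{\streachQ} on 2-layered \stgraphs (and not just on 1-layered ones as in Theorem \ref{theorem:logicinitqf}) cannot be maintained in \DynQF with invariant initialization. The proof of Theorem \ref{theorem:logicinitqf} adapts essentially verbatim: consider the sequence of complete 2-layered \stgraphs $G_i$ with both layers of size~$i$. Every bijection that fixes $s,t$ and permutes each layer separately is an automorphism of $G_i$, so by invariance of $\init$ and by Lemma \ref{lemma:initfunc} the only possible values of the auxiliary functions in $\state_i \df \init(G_i)$ are $s$ and $t$. Consequently $\restrict{\state_i}{V}$ is closed for every $V \supseteq \{s,t\}$, so Lemma \ref{lemma:dynqfiso} applies. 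Counting isomorphism types of diverse tuples of length at most $\arity(\auxSchema \cup \{E\})$ extended by $s,t$ yields two indices $k<l$ with $\state_k \cong \restrict{\state_l}{V_k}$. Applying the same $\pi$-respecting modification sequence that deletes all $(s,a)$-edges in both states then produces a contradiction, exactly as in Theorem \ref{theorem:logicinitqf}.

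Second, I would run the reduction of Proposition \ref{proposition:binary} in the invariant setting. Given a \DynQF-program $\prog'=(P',\init',Q')$ with invariant initialization maintaining \dynClique{k}, define $\prog=(P,\init,Q)$ by the exact syntactic transformation used there: every atomic $E'(x,y)$ in $P'$ is replaced by $(y=s\wedge E(x,t))\vee(y\ne s\wedge E(x,y))$, so that $\prog$ simulates $\prog'$ on the graph obtained from the input 2-layered \stgraph $G$ by identifying $s$ with $t$; for $k>3$ the initial state additionally encodes, via $\init$, a virtual $(k-3)$-clique fully joined to every vertex, which is never touched by later modifications. Crucially, $\init$ is still invariant, because identification of $s$ with $t$ and attachment of a virtual clique both commute with permutations that fix the constants $s,t$: applying such a permutation $\pi$ to $G$ and then performing the construction yields exactly the same graph as first performing the construction and then applying $\pi$. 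Hence $\pi(\init(G)) = \pi(\init'(G_0)) = \init'(\pi(G_0)) = \init'((\pi G)_0) = \init(\pi G)$. This gives a \DynQF-program with invariant initialization maintaining \dynProb{\streachQ} on 2-layered \stgraphs, contradicting the first step.

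For \dynColorability{k}, I would reuse the observation from Proposition \ref{proposition:binary} that the graphs produced by the reduction satisfy "contains a $k$-clique iff not $(k-1)$-colorable". Combined with the ability of \DynQF to take Boolean complements of the query bit via an extra constant (or by negating the designated $0$-ary query symbol), the same construction yields a program for 2-layered \streachQ from any \DynQF-program with invariant initialization for \dynColorability{k-1}, giving the claimed lower bound. The main technical obstacle I expect is the first step: checking carefully that the argument of Theorem \ref{theorem:logicinitqf}, in particular the use of Lemma \ref{lemma:initfunc} to enforce closedness, still applies to the 2-layered graphs $G_i$, which it does because every non-constant node lies in a layer of size $i\ge 2$ and hence admits a nontrivial automorphism $\swap{a}{a'}$ of $G_i$.
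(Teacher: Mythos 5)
Your overall strategy would yield the result, but it is a genuinely different route from the paper's. You port the reduction of Proposition \ref{proposition:binary} wholesale (identify $s$ and $t$ in a $2$-layered \stgraph), which forces you to first upgrade Theorem \ref{theorem:logicinitqf} from $1$-layered to $2$-layered \stgraphs; your sketch of that upgrade (complete $2$-layered graphs, transposition automorphisms within each layer, Lemma \ref{lemma:initfunc} forcing all auxiliary function values into $\{s,t\}$, hence closedness and Lemma \ref{lemma:dynqfiso}, then pigeonhole over types --- which in the $2$-layered case must be stratified by the layer pattern of a tuple rather than being a single type per arity) does go through. The paper avoids this extra work entirely by switching to a simpler reduction tailored to the invariant setting: for a \emph{$1$-layered} \stgraph $G$ it takes $G'$ to be $G$ with the single edge $(s,t)$ added, so that $G'$ has a $3$-clique if and only if $t$ is reachable from $s$, and the update formulas just replace $E'(x,y)$ by $E(x,y)\vee(x=s\wedge y=t)$; Theorem \ref{theorem:logicinitqf} can then be cited as stated. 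Your route buys a $2$-layered strengthening of Theorem \ref{theorem:logicinitqf} as a by-product; the paper's buys a shorter proof.

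There is one concrete gap in your treatment of $k>3$. You attach a single virtual $(k-3)$-clique and define $\init(G)$ as the projection of $\init'(G\otimes K_{k-3})$ to the nodes of $G$. In the \DynQF setting this projection is not automatically a legitimate state: an auxiliary \emph{function} in $\init'(G\otimes K_{k-3})$ could take values inside the virtual clique, and restricting to the nodes of $G$ would then not yield a function on that domain. The paper closes this by attaching \emph{two} copies, $G\otimes(K_{k-3},K_{k-3})$: every virtual node then has an automorphic twin, so Lemma \ref{lemma:initfunc} guarantees that no auxiliary function points at a virtual node, and closedness of the real part is preserved under subsequent modifications. With a single copy this argument fails outright for $k=4$ (the lone virtual node has no twin), and even for $k\ge 5$ you would need to invoke Lemma \ref{lemma:initfunc} explicitly for the virtual nodes, which you do not. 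Your handling of invariance of the composed initialization and of the colorability case (complementing the query bit and shifting the index from \dynClique{k} to \dynColorability{k-1}) matches the paper's intent.
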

\begin{proof}
  The proof approach is the same as for the previous proposition. We prove that \dynClique{3} cannot be maintained in \DynQF with invariant initialization. Afterwards we sketch the proof for  \dynClique{k}, for arbitrary $k \geq 3$. The graphs used in the proof have a $k$-Clique if and only if they are not  $(k-1)$-colorable. Therefore it follows that \dynColorability{k}  cannot be maintained in \DynQF with invariant initialization mapping.

  More precisely, we show that from \DynQF dynamic program $\prog'$ with invariant initialization that maintains \dynClique{3}  one can construct a dynamic program $\prog'$ that maintains the $s$-$t$-reachability query for 1-layered \stgraphs. As the latter does not exist thanks to Theorem \ref{theorem:logicinitqf}, we can conclude that the former does not exist either.

  Let us thus assume that $\prog'=(P', \init', Q')$ is a dynamic program for \mbox{\dynClique{3}} with invariant initialization mapping \init' and auxiliary \linebreak[4] \mbox{schema $\auxSchema'$}.

  We use the following simple reduction. For a $1$-layered graph $G=(\{s,t\}\cup A,E)$, let $G'$ be the graph obtained from $G$ by adding an edge $(s,t)$. Clearly, $G$ has a path from $s$ to $t$ if and only if $G'$ has a $3$-clique. 

  The dynamic program $\prog$ uses the same auxiliary schema as $\prog'$ and the same initialization mapping. The update formulas of $\prog$ are obtained from those in $\prog'$ by replacing every atomic formula $E'(x,y)$ with $(E(x, y) \vee (x=s \wedge y = t))$. Obviously, $\prog$ is a dynamic program for $s$-$t$-reachability for 2-layered \stgraphs if $\prog'$ is a dynamic program for \dynClique{3}, as desired.

  For arbitrary $k$, the construction is similar. The idea is that $\prog$ simulates on a graph $G$ the behavior of $\prog'$ on $G\otimes (K_{k-3}, K_{k-3})$, that is, the graph that results from $G$ by adding two $(k-3)$-cliques and completely connecting them with every node of $G$. The update formulas of $\prog$ are exactly as in the previous reduction to  \dynClique{3}. However, $\init$ is not the same as $\init'(G)$ but rather the projection of $\init'(G\otimes (K_{k-3}, K_{k-3}))$ to the nodes of $G$. By \mbox{Lemma \ref{lemma:initfunc}}, auxiliary functions in $\init(G)$ do not take values from $(K_{k-3}, K_{k-3})$. Thus $\prog$ is a dynamic program for $s$-$t$-reachability for 2-layered \stgraphs if $\prog'$ is a dynamic program for \dynClique{k}.  
\end{proof}

  \section{Subclasses of \DynProp}\label{section:normalforms}
    \makeatletter{}Proving that Reachability cannot be maintained in $\DynProp$ appears to be non-trivial. A natural question is, whether lower bounds for syntactic fragments of $\DynProp$ can be proved, without restrictions on the arity of auxiliary relations. Normal form results from \cite{ZeumeS13reachmfcs} (see below) imply that lower bounds for some large fragments cannot be obtained easier than for \DynProp. In this section we prove that Reachability cannot be maintained in the (very) weak fragment of \DynProp where update formulas are restricted to be conjunctions of atoms.

Throughout this section we allow arbitrary initialization and no auxiliary functions.

A formula is \textit{negation-free} if it does not use negation at all. A formula is \textit{conjunctive} if it is a conjunction of (positive or negated) literals. A dynamic program is negation-free (conjunctive, respectively) if all its update formulas are negation-free  (conjunctive, respectively). We follow the naming schema from \cite{ZeumeS14CQicdt} and refer to the conjunctive, the  negation-free and the conjunctive, negation-free fragment of $\DynProp$ as $\DynPropCQneg$, $\DynPropUCQ$ and $\DynPropCQ$, respectively.

The following theorem implies that lower bounds for \DynPropCQneg and \DynPropUCQ immediately yield lower bounds for \DynProp. In other words, proving lower bounds for those fragments is not easier than proving lower bounds for $\DynProp$ itself.

\begin{theorem}[\cite{ZeumeS13reachmfcs, ZeumeS13reacharxiv}] \label{theorem:normalforms}
  Let $\query$ be a query. Then the following statements are equivalent:
  \begin{enumerate}
    \item $\query$ can be maintained in $\DynProp$.
    \item $\query$ can be maintained in $\DynPropCQneg$.
    \item $\query$ can be maintained in $\DynPropUCQ$.
  \end{enumerate}
\end{theorem}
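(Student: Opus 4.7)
The implications (2) $\Rightarrow$ (1) and (3) $\Rightarrow$ (1) are immediate, since both $\DynPropCQneg$ and $\DynPropUCQ$ are syntactic restrictions of $\DynProp$. The substantive directions are (1) $\Rightarrow$ (3) and (1) $\Rightarrow$ (2). In both cases the plan is to start from an arbitrary $\DynProp$-program $\prog$ for $\query$ and produce an equivalent program in the target fragment by enlarging the auxiliary schema and syntactically rewriting the update formulas, while keeping the designated query symbol $Q$ in place.

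For (1) $\Rightarrow$ (3) the plan is to eliminate negation. For every auxiliary relation $R$ of $\prog$, introduce a companion relation $\bar R$ intended to always store the complement of $R$, and initialise $\bar R$ by complementing $\init(R)$. Each update formula of $\prog$ is rewritten by replacing every negated atom $\neg R(\vec y)$ by the positive atom $\bar R(\vec y)$; after this rewriting, every update formula is a positive Boolean combination of atoms, and conversion to disjunctive normal form yields a UCQ. The update formula for $\bar R$ itself is obtained as the syntactic negation of the original update formula for $R$, followed by the same rewriting and DNF conversion. This is a local transformation whose correctness is easy to verify by induction on the modification sequence.

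For (1) $\Rightarrow$ (2) the plan is to eliminate disjunction while allowing negation. The key tool is the partition-to-conjunction observation: if a family of relations $\{T_\rho\}_\rho$ partitions all $k$-tuples, then any union $\bigvee_{\rho\in\mathcal{R}} T_\rho(\vec z)$ is equivalent to the conjunction $\bigwedge_{\rho\notin\mathcal{R}} \neg T_\rho(\vec z)$. The plan is to refine the auxiliary data of $\prog$ into type-indexed relations $T_\tau^{(k)}$, for each relevant arity $k$ and each atomic $(\inpSchema\cup\auxSchema)$-type $\tau$ of $k$-tuples, with the invariant that these relations partition the $k$-tuples. The original auxiliary relations, and in particular the query symbol $Q$, are then recovered from the type relations by the partition-to-conjunction trick as conjunctions of negated atomic lookups. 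The update of $T_\tau^{(k)}$ after a modification $\delta(\vec a)$ is governed by the joint atomic type $\rho$ of $(\vec a,\vec x)$, and is obtained by one application of the partition trick to the finitely many joint types that lead a tuple into $T_\tau^{(k)}$.

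The main obstacle is closing this construction under updates: the joint-type relations themselves have to be maintained, and their updates depend on joint types at still higher arity. The plan here is a layered construction in which type relations of a given arity are updated by reading off type relations of slightly higher arity, with the layering bounded by the maximum combined arity $|\vec a|+k$ appearing in any update formula of $\prog$. At the top layer one argues separately that atomic types of large tuples are determined by the atomic types of their sub-tuples at arity bounded by the maximum relation arity of the original schema, so that joint-type membership reduces to a Boolean combination of small-arity $T_\tau^{(l)}$-lookups; the partition trick is applied at each sub-arity to collapse these combinations into a single conjunction with negation. Verifying that this top layer closes off without reintroducing disjunction is the most technical step of the proof and is where care is needed in choosing the schema refinement.
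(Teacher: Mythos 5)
The paper itself does not prove Theorem~\ref{theorem:normalforms}: the result is imported from \cite{ZeumeS13reachmfcs, ZeumeS13reacharxiv}, and the introduction states that the proofs of these normal forms will appear in the long version of \cite{ZeumeS14CQicdt}. So there is no in-paper argument to compare against, and I can only assess your proposal on its own terms. Your direction $(1)\Rightarrow(3)$ is the standard complementation argument and is essentially sound; two small points need attention. Negated \emph{input} atoms $\neg E(\vec y)$ also need maintained complement relations (whose updates under $\ins_E(\vec a)$ and $\del_E(\vec a)$ involve the conditions $\vec y = \vec a$ and $\vec y \neq \vec a$), and since ``negation-free'' here means no negation at all, inequality literals $x \neq y$ must themselves be replaced by a never-changing auxiliary relation. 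Both are routine given arbitrary initialization.

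The gap is in $(1)\Rightarrow(2)$, exactly at the step you flag as ``the most technical.'' As written, your top-layer argument does not close: expressing ``the joint type of $(\vec a,\vec x)$ is $\rho$'' through sub-tuple types yields a \emph{conjunction} of positive atoms $\bigwedge_{\vec z} T^{(|\vec z|)}_{\rho|_{\vec z}}(\vec z)$, and applying the partition trick to the union over the set $\mathcal{R}$ of types leading into $T^{(k)}_\tau$ produces $\bigwedge_{\rho\notin\mathcal{R}} \neg\bigl(\bigwedge_{\vec z} T^{(|\vec z|)}_{\rho|_{\vec z}}(\vec z)\bigr)$; the negation of a conjunction of atoms is not a literal, so the resulting formula is not conjunctive. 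The repair is to decompose the update of $T^{(k)}_\tau$ \emph{atom by atom} rather than type by type: with $m$ the maximal arity of $\inpSchema\cup\auxSchema$ and $r$ the arity of the modified tuple, $\vec x$ belongs to the new $T^{(k)}_\tau$ iff for every atomic formula $S(\vec z)$ with $\vec z$ a tuple over the components of $\vec x$ and the constants of arity at most $m$, the new truth value of $S(\vec z)$ agrees with $\tau$. Each such condition depends only on the old type of the tuple $(\vec a,\vec z)$, which has arity at most $r+m$, and hence becomes a conjunction of negated atoms over the relations $T^{(r+|\vec z|)}_\rho(\vec a,\vec z)$ by one application of the partition trick; the outer conjunction over the atoms $S(\vec z)$ keeps the formula conjunctive. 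With this decomposition every type relation of arity up to $r+m$ (and the query relation) is updated using only type relations of arity up to $r+m$, the construction closes, and the layered scheme you describe becomes unnecessary. Your overall architecture is the right one, but without this atom-wise decomposition the proof as sketched would reintroduce disjunction at the decisive point.
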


The rest of this section is devoted to the proof of the following theorem. 

\begin{theorem}\label{theorem:reachnotindynpropcq}
    $\dynProb{\streachQ}$ is not in \DynPropCQ.
\end{theorem}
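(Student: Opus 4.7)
The plan is to adapt the Ramsey/Higman argument of Proposition~\ref{proposition:unary} to \DynPropCQ by replacing the isomorphism-based substructure lemma (Lemma~\ref{lemma:substruclemma}) by a \emph{homomorphism-based} analogue. First I would establish the following lemma: for any \DynPropCQ-program $\prog$, if $\pi\colon\restrict{\calS}{A}\to\restrict{\calT}{B}$ is a homomorphism (a map preserving all positive atomic facts and fixing the constants, not necessarily bijective), then for every pair of $\pi$-respecting modification sequences $\alpha$ on $A$ and $\beta$ on $B$, the map $\pi$ remains a homomorphism from $\restrict{P_\alpha(\calS)}{A}$ to $\restrict{P_\beta(\calT)}{B}$. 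The inductive proof proceeds exactly as for Lemma~\ref{lemma:substruclemma}, now using the classical preservation of conjunctions of positive atoms under homomorphisms. Because $Q$ is $0$-ary, this specializes to the implication $Q^{P_\alpha(\calS)}\Rightarrow Q^{P_\beta(\calT)}$.

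Next, assuming a \DynPropCQ-program $\prog=(P,\init,Q)$ maintains \dynstReachQ, I would mirror the construction of Proposition~\ref{proposition:unary}: starting on a $1$-layered \stgraph with vertex set $\{s,t,a_1,\dots,a_n\}$ for $n$ sufficiently large, insert all edges $(s,a_i)$ and $(a_i,t)$ to reach a state $\calS'$, and for each $i$ let $\calS'_i$ be the state obtained by additionally deleting $(s,a_n),(s,a_{n-1}),\dots,(s,a_{i+1})$ from $\calS'$, so that $\calS'_i$ has $s$-$t$-reachability iff $i\geq 1$. For each $(i,j)$ with $j\leq i$, record the \emph{positive atomic type} $\tau_{i,j}$ of $a_j$ in $\calS'_i$ relative to the constants, i.e.\ the set of positive atomic formulas over the program's signature with variables among $\{a_j,s,t\}$ that hold at $a_j$. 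Using an iterated Ramsey argument in the spirit of Theorem~\ref{coro:ramsey} and Lemma~\ref{lemma:homramsey}, first pass to a large subset of vertices along which the \emph{joint} positive types of tuples of size up to the maximum arity $r$ of $\prog$'s schema are homogeneous within each $\calS'_i$. Then apply Higman's Lemma to the sequence of words $w_i=\tau_{i,1}\cdots\tau_{i,i}$, under the well-quasi-order of \emph{reverse} set inclusion on types ($X\leq Y$ iff $X\supseteq Y$), to obtain $k<l$ and positions $i_1<\cdots<i_k$ with $\tau_{l,i_j}\subseteq\tau_{k,j}$ for every $j$. This provides the desired homomorphism $\pi\colon\restrict{\calS'_l}{\{s,t,a_{i_1},\dots,a_{i_k}\}}\to\restrict{\calS'_k}{\{s,t,a_1,\dots,a_k\}}$ with $\pi(a_{i_j})=a_j$.

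Finally, apply the $\pi$-respecting deletion sequences $\beta_2\colon\del(s,a_{i_1}),\dots,\del(s,a_{i_k})$ to $\calS'_l$ and $\beta_1\colon\del(s,a_1),\dots,\del(s,a_k)$ to $\calS'_k$. In $P_{\beta_1}(\calS'_k)$ every outgoing edge from $s$ has now been deleted, so the graph has no $s$-$t$-path and $Q$ is false; in $P_{\beta_2}(\calS'_l)$ the $l-k>0$ surviving edges $(s,a_j)$ with $j\in\{1,\dots,l\}\setminus\{i_1,\dots,i_k\}$ lie outside the substructure and are untouched by $\beta_2$, so an $s$-$t$-path $s\to a_j\to t$ survives and $Q$ is true. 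The homomorphism substructure lemma then yields $Q^{P_{\beta_2}(\calS'_l)}\Rightarrow Q^{P_{\beta_1}(\calS'_k)}$, i.e.\ true $\Rightarrow$ false -- the desired contradiction.

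The main obstacle is the orientation of the Higman well-quasi-order. Using \emph{reverse} set inclusion on positive types is essential so that the homomorphism runs from $\calS'_l$ (with richer ``outside'') into $\calS'_k$ (with poorer ``outside'') rather than the naive other direction; only in this orientation does the surviving reachability in $P_{\beta_2}(\calS'_l)$ force the collapsing true $\Rightarrow$ false on $Q$. A secondary complication is that, since \DynPropCQ allows auxiliary relations of unbounded arity within a fixed schema, the Ramsey preprocessing must be iterated through all arities up to $r$ before Higman is applied, analogously to Theorem~\ref{coro:ramsey} and Lemma~\ref{lemma:homramsey}.
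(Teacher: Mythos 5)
Your opening move---a homomorphism-preservation analogue of Lemma~\ref{lemma:substruclemma} for \DynPropCQ, giving $Q^{P_\alpha(\calS)}\Rightarrow Q^{P_\beta(\calT)}$ for $\pi$-respecting sequences---is sound and is a genuinely nice observation that the paper does not make. The gap is in the combinatorial middle. Your Higman letters $\tau_{i,j}$ record only the positive atomic facts about the single element $a_j$ together with $s,t$; they say nothing about auxiliary facts such as $R(a_{j_1},a_{j_2},a_{j_3})$ involving several distinct path vertices, and \DynPropCQ places no bound on the arity of the auxiliary schema, so such facts cannot be ignored: for $\pi$ to be a homomorphism, every such fact of $\calS'_l$ on $(a_{i_{j_1}},a_{i_{j_2}},\dots)$ must be included among the facts of $\calS'_k$ on $(a_{j_1},a_{j_2},\dots)$. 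The step you invoke to control them---``pass to a subset along which joint types of tuples are homogeneous within each $\calS'_i$''---is not available: the states $\calS'_i$ arise from \emph{different} deletion sequences whose definition depends on the enumeration of the very subset you are trying to select, and the auxiliary relations change with every deletion, so Ramsey's theorem for a single fixed structure (Theorem~\ref{coro:ramsey}, Lemma~\ref{lemma:homramsey}) does not apply; even granting simultaneous homogeneity, you would still need the homogeneous tuple-type of $\calS'_l$ to be contained in that of $\calS'_k$, which the per-element Higman step does not provide. This is precisely the obstacle that confines the paper's Theorem~\ref{theorem:binary} to \emph{binary} auxiliary relations (there the $\calS'_i$ are restrictions of one fixed state and the $a_X$-gadget ensures each substructure contains a single ``left'' vertex), and the paper explicitly says it does not know how to push Ramsey/Higman to arity three; passing from isomorphisms to homomorphisms does not remove this difficulty. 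A minor point: over a finite alphabet, plain letter equality already implies the inclusion you need, so the reverse-inclusion order you call essential buys nothing.

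The paper proves the theorem by an entirely different, syntax-driven route. It first shows (Lemma~\ref{theorem:separationdynanddynprop}) that $\dynProb{NonEmptySet}$ is not maintainable in \DynAND, via a ``deletion depth'' induction exploiting conjunctivity and negation-freeness: after a normal form eliminating repeated variables in atoms, any relation at deletion depth $k$ has arity at most $k$ and is forced to contain \emph{all} diverse tuples over $U$ whenever $|U|\ge k+1$, so the query bit cannot turn false after the last deletion. Theorem~\ref{theorem:reachnotindynpropcq} then follows by translating a \DynAND-program for \dynstReachQ on $1$-layered \stgraphs into one for $\dynProb{NonEmptySet}$ under deletions (encoding the constants $s,t$ into extra relation symbols $R^\rho$). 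If you want to salvage your approach, the realistic option is to keep your homomorphism lemma but combine it with a syntactic argument of this kind rather than with Ramsey/Higman.
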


To this end, we first  prove that the query \problem{NonEmptySet} from Example \ref{example:emptylist} cannot be maintained in this fragment. Afterwards we sketch how to adapt this proof for the reachability query.

 For technical reasons, the proof assumes a $\DynAND$-program in which no atom contains any variable more than once. We first illustrate by an example how this restriction can be achieved.
  \begin{example}\label{example:dynandnormalform}
    We consider the following $\DynAND$-program, where, for simplicity, only update formulas for insertions are specified.
      \begin{align*}
        \uf{R}{\ins}{u}{x,y} &= S(x,y) \land R(x,x) \\
        \uf{S}{\ins}{u}{x,y} &= S(x,y)
      \end{align*}
  An equivalent $\DynAnd$-program in which all update formulas only contain atoms with distinct variables can be obtained by replacing $R(x,x)$ by $R'(x)$ where $R'$ is a fresh unary relation symbol. It then has to be ensured, that $R'(x)\equiv R(x,x)$. This can be achieved by updating $R'$ with the update formula for $R$, in which $x$ and $y$ are unified.
      \begin{align*}
        \uf{R}{\ins}{u}{x,y} &= S(x,y) \land R'(x) \\
        \uf{S}{\ins}{u}{x,y} &= S(x,y) \\
        \uf{R'}{\ins}{u}{x} &= S(x,x) \land R'(x)
      \end{align*}

     Finally we apply the same construction to the atom $S(x,x)$ in $ \ufwa{R'}{\ins}$:
      \begin{align*}
        \uf{R}{\ins}{u}{x,y} &= S(x,y) \land R'(x) \\
        \uf{S}{\ins}{u}{x,y} &= S(x,y) \\
        \uf{R'}{\ins}{u}{x} &= S'(x) \land R'(x) \\
        \uf{S'}{\ins}{u}{x} &= S'(x)
      \end{align*}
      
  \end{example}

  The process of Example \ref{example:dynandnormalform} necessarily terminates since there is only a finite number of equality types for the variables of each of the atoms occurring in an update formula. An \emph{equality type} $\rho$ of a set of variables $X = \{x_1, \ldots, x_n\}$ is simply an equivalence relation on $X$.   
  \begin{lemma}\label{lemma:dynandnormalform}
    For every \DynAND-program there is an equivalent \DynAND-program in which no atom in any update formula contains a variable more than once.
  \end{lemma}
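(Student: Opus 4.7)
The plan is to make Example \ref{example:dynandnormalform} systematic by introducing, for every relation symbol and every possible ``coincidence pattern'' of its arguments, a fresh relation symbol of smaller arity that represents the corresponding diagonal of the original relation, and then to rewrite every atom so that its variables are pairwise distinct.

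Formally, for every auxiliary relation symbol $R$ of arity $n$ and every equality type $\rho$ on $\{1,\ldots,n\}$ (an equivalence relation on the argument positions) let $m_\rho$ be the number of equivalence classes of $\rho$ and introduce a fresh $m_\rho$-ary relation symbol $R_\rho$. Fix representatives of the classes of $\rho$ and let $\sigma_\rho\colon\{1,\ldots,n\}\to\{1,\ldots,m_\rho\}$ send each position to the index of its class. The intended semantics of $R_\rho$ is that, in every state $\state$,
\[
R_\rho^\state(y_1,\ldots,y_{m_\rho}) \iff R^\state(y_{\sigma_\rho(1)},\ldots,y_{\sigma_\rho(n)}).
\]
Note that $R_{\rho_0} = R$ when $\rho_0$ is the identity equality type. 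The syntactic rewrite $N(\varphi)$ of an atom $\varphi = R(z_1,\ldots,z_n)$ in an update formula takes the equality type $\rho$ induced by the tuple $(z_1,\ldots,z_n)$ (so $i\sim j$ iff $z_i$ and $z_j$ are the same variable), lists the distinct variables among the $z_i$ as $y_1,\ldots,y_{m_\rho}$ (in the order of the chosen class representatives), and outputs $R_\rho(y_1,\ldots,y_{m_\rho})$. Extend $N$ homomorphically to conjunctions of atoms; this also handles equality atoms between variables, which can be dropped in favour of variable renaming.

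For the update formulas of the new program, each $R_\rho$ needs update formulas $\ufwa{R_\rho}{\delta}$ for every abstract modification $\delta$. These are obtained from the original update formula $\ufwa{R}{\delta}(\vec u;\vec y)$ of $R$ in two steps: first, identify the variables $y_i,y_j$ of $\vec y$ whenever $\sigma_\rho(i)=\sigma_\rho(j)$ (renaming so that only $m_\rho$ distinct $y$-variables remain), obtaining a conjunction of atoms; second, apply the rewrite $N$ to every atom in that conjunction. The result is a conjunction of atoms each of which has pairwise distinct variables, as required by the lemma.

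Termination and correctness: the set of equality types of a fixed arity is finite, and the rewrite $N$ only ever introduces symbols of the form $R_\rho$, so the total collection of new relation symbols is finite and the transformation applied to update formulas is not iterative but a one-shot syntactic substitution. Correctness is by induction on the length of the modification sequence, with hypothesis that in every reachable state the invariant $R_\rho^\state(y_1,\ldots,y_{m_\rho}) \iff R^\state(y_{\sigma_\rho(1)},\ldots,y_{\sigma_\rho(n)})$ holds for all $R$ and $\rho$; the initialization mapping of the new program is chosen accordingly from the original one. The induction step reduces to the observation that, in any state satisfying the invariant, $N(\varphi)$ is satisfied by an assignment $\beta$ iff $\varphi$ is satisfied by $\beta$, because $R_\rho(y_1,\ldots,y_{m_\rho})$ was defined to mean exactly $R(z_1,\ldots,z_n)$ for the corresponding pattern of repeated variables. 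The only mild subtlety, and arguably the main obstacle, is the bookkeeping around atoms that also contain constant symbols or equalities $x=x'$: constants are handled by enlarging the notion of equality type to ``each position is either a representative variable or a constant'' (so $R_\rho$ becomes an $m_\rho$-ary symbol indexed by such an extended type), and equalities between variables are eliminated by unification before applying $N$. With these routine adjustments the construction yields the desired equivalent $\DynAND$-program.
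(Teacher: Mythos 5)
Your proposal is correct and follows essentially the same route as the paper: introduce, for each relation symbol $R$ and each equality type $\rho$ of its argument positions, a fresh symbol $R^\rho$ of arity equal to the number of classes of $\rho$, maintain the invariant that $R^\rho$ is the corresponding diagonal of $R$, and obtain its update formulas by unifying variables according to $\rho$ and rewriting every atom to its diagonal symbol. The only difference is presentational — you perform the rewriting as a single closed substitution where the paper iterates the step of Example \ref{example:dynandnormalform} and argues termination via the finiteness of equality types — and your remarks on constants and equality atoms are a harmless elaboration.
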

  \begin{proofsketch}
    For a given $\DynAND$-program $\prog$ schema $\schema$, construct an equivalent $\DynAnd$-program $\prog'$ over schema $\schema'$ where $\schema'$ contains, for every $k$-ary relation symbol $R \in \schema$ and every equality type $\rho$ on $k$ variables $x_1, \ldots, x_k$, a relation symbol $R^\rho$ of arity $k'$ where $k'$ is the number of equivalence classes of $\rho$. 

      The intention is that $(\state,\beta)\models R(\vec x)$, for  a state $R$ and variable assignment $\beta$ respecting $\rho$ if and only if $(\state,\beta^\rho) \models R^\rho(\vec y)$, where $\beta^\rho$ maps every variable $y_i$ to the value of the $i$-th equivalence class of $\rho$ under $\beta$. This can be ensured along the lines of Example \ref{lemma:dynandnormalform}.
  \end{proofsketch}

  We prove Theorem \ref{theorem:separationdynanddynprop} in a slightly more general setting. A modification $\alpha$ is \emph{honest} with respect to a given state if it does not insert a tuple already present in the input database and does not delete a tuple which is not present in the database. A query is in  h-$\DynC$ if it can be maintained with $\calC$  update programs, for all sequences of honest modifications. It is easy to see  that for a class $\calC$ closed under boolean operations, the classes $\DynC$ and h-$\DynC$ coincide. However for weak classes such as $\DynAND$ the restriction to honest modifications might make a difference, since update formulas cannot explicitly test (at least not in a straight forward way) whether a modification is honest. Nevertheless, all our proofs work for both kinds of types of modifications.
  
  We prove that h-$\DynAnd$ (and therefore also $\DynAnd$) cannot maintain the query $\exists x U(x)$ from Example \ref{example:emptylist}.   

  \begin{lemma}\label{theorem:separationdynanddynprop}
    $\dynProb{NonEmptySet}$ is neither in \DynAND nor in h-$\DynAnd$.
  \end{lemma}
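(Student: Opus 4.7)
The plan is to assume, for contradiction, that $\prog = (P, \init, Q)$ is an h-$\DynAnd$-program that maintains $\dynProb{\problem{NonEmptySet}}$ (which subsumes the $\DynAnd$ case, since h-$\DynAnd$ is the more permissive class), and then to exhibit two reachable states on which the conjunctive update formula for $Q$ under deletion must produce the same value but the correct answer differs. By Lemma~\ref{lemma:dynandnormalform}, I may assume that no atom in any update formula of $\prog$ contains a variable more than once; then $\uf{Q}{\del}{u}{}$ is a conjunction of atoms each of which either is ground or contains exactly one occurrence of $u$, so its value at $u = a$ in a state $\state$ is determined by the finite set $T_a(\state)$ consisting of the ground atoms together with the atoms of the form $R(c_1,\ldots,a,\ldots,c_k)$ (with relation symbols from $\tau$ and the remaining positions filled by constants) that hold in $\state$. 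The goal is therefore to produce reachable states $\state^-$ and $\state^+$ and an element $a$ with $U^{\state^-}=\{a\}$, $U^{\state^+}\supsetneq\{a\}$ and $T_a(\state^-) = T_a(\state^+)$: then $\uf{Q}{\del}{a}{}$ yields the same value on both, while the required $Q$-value after $\del_U(a)$ is false for $\state^-$ and true for $\state^+$, a contradiction.

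To construct these states, I start from $\state_0 \df \init(\emptyset)$ over a sufficiently large domain $D$ and first apply Ramsey's theorem for structures (Theorem~\ref{coro:ramsey}) to extract a large subset $A \subseteq D$, disjoint from the constants, that is $\norder$-homogeneous in $\state_0$ for some order $\norder$. I then pick $a \in A$ together with a long sequence $b_1, b_2, \ldots \in A \setminus \{a\}$ and define $\state^{(0)} \df \updateState{P}{\ins_U(a)}{\state_0}$ and $\state^{(j)} \df \updateState{P}{\ins_U(b_j)}{\state^{(j-1)}}$, so that $U^{\state^{(j)}} = \{a, b_1, \ldots, b_j\}$. By the required behaviour of $\prog$, $\uf{Q}{\del}{a}{}$ is false in $\state^{(0)}$ and true in each $\state^{(j)}$ with $j \geq 1$.

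The finishing step will rely on the finiteness of the range of $T_a(\cdot)$ combined with the monotonicity of conjunctive updates and the symmetry provided by the homogeneous set $A$: applying Higman's Lemma to a suitable encoding of the sequence $T_a(\state^{(0)}), T_a(\state^{(1)}), \ldots$ yields indices $j_1 < j_2$ whose $T_a$-values are comparable in the appropriate way, and an interchange argument using a permutation of $A$ that maps the inserted $b_j$'s of one scenario onto those of the other then forces $T_a(\state^{(j)}) = T_a(\state^{(0)})$ for some $j \geq 1$, closing the contradiction. The hard part will be precisely controlling how the asymmetry introduced by the first step $\ins_U(a)$ propagates through the conjunctive updates to atoms about $a$, since $a$ and the $b_j$'s are no longer interchangeable in $\state^{(0)}$; the argument requires combining the homogeneity of $A$ in $\state_0$ with the normal form of Lemma~\ref{lemma:dynandnormalform} to show that, modulo this initial asymmetry, the effect of subsequent insertions on atoms of the form $R(c_1,\ldots,a,\ldots,c_k)$ is of bounded combinatorial complexity and must eventually coincide with the ``no further insertions'' behaviour of $\state^{(0)}$.
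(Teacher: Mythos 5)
Your reduction of the task to finding two reachable states $\state^-$, $\state^+$ with $U^{\state^-}=\{a\}$, $U^{\state^+}\supsetneq\{a\}$ and $T_a(\state^-)=T_a(\state^+)$ is sound, but the construction you propose for these states cannot succeed, and it breaks exactly at the step you defer as ``the hard part''. Consider the following legal \DynAND-behaviour: initialize a binary auxiliary relation $S$ on the empty database to the inequality relation $\{(x,y)\mid x\neq y\}$ (arbitrary initialization is allowed in this section, and $S(u,x)$ is a positive, non-repeating atom even though $\neq$ itself would not be), keep $S$ unchanged under all modifications, and set $\uf{R}{\ins}{u}{x} \df S(u,x)\wedge U(x)$ for a unary auxiliary symbol $R$. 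Since update formulas are evaluated in the state \emph{before} the modification, after $\ins_U(a)$ one gets $R=\emptyset$, while after each subsequent $\ins_U(b_j)$ one gets $a\in R$. Hence $R(a)\in T_a(\state^{(j)})\setminus T_a(\state^{(0)})$ for every $j\ge 1$, no matter how large and homogeneous $A$ is: homogeneity, Higman's Lemma and interchange permutations of the $b_j$'s are powerless here, because the relevant asymmetry is between $a$ and elements inserted \emph{later}, not between different $b_j$'s. So your intermediate goal $T_a(\state^{(j)})=T_a(\state^{(0)})$ for some $j\ge 1$ is simply false for insertion-only histories; indeed with $\uf{Q}{\del}{u}{}\df R(u)$ such a program answers the single deletion you perform correctly. (Note also that pigeonhole or Higman applied to the sequence $T_a(\state^{(0)}),T_a(\state^{(1)}),\ldots$ only yields coinciding or comparable indices $j_1<j_2$, with no way to force the index $0$ to participate.)

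What your argument never exploits is the program's obligation to survive \emph{sequences of deletions}, and this is the engine of the paper's proof. There one defines the deletion dependency graph and the deletion depth of each auxiliary relation, and shows by induction on the depth $k$ that whenever $|U|\ge k+1$, every relation of deletion depth $k$ must contain \emph{all} diverse tuples over $U$; the induction step deletes one element to push a missing tuple one level closer to $Q$, using that non-repeating conjunctive deletion formulas let the arity grow by at most one per level. Starting from a state with $m+1$ elements in $U$ (where $m$ is the maximal deletion depth) and deleting down to a single element $a$, all atoms of $\uf{Q}{\del}{a}{}$ are then forced to be true, so $Q$ stays true after the last deletion --- a contradiction obtained with no Ramsey, Higman, or homogeneous sets at all. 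If you wish to keep your two-state formulation, $\state^+$ must be reached by deletions from a state with a large $U$, not by insertions into a small one.
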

  \begin{proof}
    Towards a contradiction, we assume that there is a h-$\DynAnd$-program $\prog = (P, \init, Q)$ over schema $\schema$ that maintains query $\query$ defined by $\exists x U(x)$ and, by Lemma \ref{lemma:dynandnormalform}, that no variable occurs more than once in any atom of an update formula of $\prog$. 

    The following notions will be convenient for the proof. The \emph{dependency graph} of a dynamic program $\prog$ with auxiliary schema $\schema$ has vertex set $V = \schema$ and an edge $(R, R')$ if the relation symbol $R'$ occurs in one of the update formulas for $R$. The \emph{deletion dependency graph} of $\prog$ is defined like the dependency graph except that only update formulas for deletions are used. The \emph{deletion depth} of a relation $R$ is defined as the length of the shortest path from $Q$ to $R$ in the deletion dependency graph. 

We start with a simple observation. Let $R(u)$ be a relation atom in the formula $\uf{Q}{\del}{u}{}$ for the 0-ary query relation $Q$, that is:
$$\uf{Q}{\del}{u}{} \df \ldots \wedge R(u) \wedge \ldots$$
Further let $\state$ be a state in which the relation $U$ contains two elements $a\not=b$. Then, necessarily, $R^\state$ contains $a$ and $b$, as otherwise deletion of $a$ or $b$ would make $Q$ empty without $U$ becoming empty. This observation can be generalized: if a relation $R$ has ``distance $k$'' from $Q$ in the subgraph of the dependency graph induced by $\del$-formulas and $U$ contains at least $k+1$ elements, then $R$ must contain all \emph{diverse} tuples over $U$, that is, tuples that consist of pairwise distinct elements from $U$.

We prove this observation next, afterwards we look at how the statement of the lemma follows. Using our assumption on non-repeating variables, it is easy to show that the arity of relations of deletion depth $k$ is at most $k$ (at most one plus the arity of the updated relation). 

We prove by induction on $k$ that, for each relation $R$ of deletion depth $k$, and every state $\state$ in which $U$ contains at least $k+1$ elements, $R$ has to contain all diverse tuples over $U$. 

For $k=0$ this is obvious as $Q$ needs to contain the empty tuple if $U$ is non-empty.

 For $k>0$, let $\state$ be a state such that $U^\state$ contains at least $k+1$ elements. Further let $R$ be some arbitrary relation symbol of deletion distance $k$. Then $R(\vec x)$  occurs in the update formula $\uf{R'}{\del}{u}{\vec y}$ of some relation symbol $R'$ of deletion depth $k-1$ for some $\vec x = (x_1, \ldots, x_l)$,  with $\vec x \subseteq \{u\} \cup \vec y$. By the above, $l\leq k$ and $\vec y$ contains at most $k-1$ variables. 

Towards a contradiction, let us assume that there is a diverse $k$-tuple $\vec a = (a_1, \ldots, a_k)$ over $U^\state$ that is not in $R^\state$. Let $\Theta: \{x_1, \ldots, x_l\} \rightarrow U^\state$ be the assignment with $\Theta(x_i) = a_i$ and let $\hat{\Theta}$ be some extension of $\Theta$ to an injective assignment of $\{u\} \cup \vec y$ to elements from $U^\state$ (such an assignment exists because $|\{u\} \cup \vec y| \leq k< |U|$). Then $\uf{R'}{\del}{u}{\vec y}$ evaluates to false in state $\state$ under $\hat{\Theta}$ (since $\vec a \notin R^\state$ by assumption). Thus, deleting $\hat{\Theta}(u)$ from $U^\state$ yields a state $\state'$ with $\hat{\Theta}(\vec y) \notin R'^{\state'}$. However, $U^{\state'}$ still contains at least $k$ elements and therefore, by induction hypothesis, the relation $R'^{\state'}$ contains every diverse tuple over $U^{\state'}$ and thus, in particular, $\hat{\Theta}(\vec y)$, the desired contradiction from the assumption that $\vec a\not\in R^\state$.

Now we can complete the proof of Lemma \ref{theorem:separationdynanddynprop}. Let $\state$ be a state in which the set $U$ contains $m+1$ elements, where $m$ is the maximum (finite) deletion depth of any relation symbol in $\prog$. By the claim above, all relations whose symbols are reachable from $Q$ in the deletion dependency graph of $\prog$ contain all diverse tuples over $U^\state$. Thus, all relation atoms over tuples from $U^\state$ evaluate to true. It is easy to show by induction on the length of modification sequences  that this property (applied to $U^{\state'}$)  holds for all states $\state'$ that can be obtained from $\state$ by deleting elements from $U^\state$. In particular, it holds for any such state in which $U^{\state'}$ contains only one element $a$. But then, $\uf{Q}{\del}{a}{}$ evaluates to true in $\state'$ and thus $Q$ remains true after deletion of $a$, the desired contradiction to the assumed correctness of $\prog$.
  \end{proof}

\begin{proofsketchof}{Theorem \ref{theorem:reachnotindynpropcq}}
  Towards a contradiction assume that there is a \DynAND-program $\prog$ for $\dynProb{\streachQ}$ over schema $\schema$. We show that a \DynAND-program $\prog'$ can be constructed from $\prog$ such that $\prog'$ maintains $\dynProb{NonEmptySet}$ under deletions. As the proof of the preceding lemma shows that $\dynProb{NonEmptySet}$ cannot be maintained in \DynAND even if elements are deleted from $U$ only, this is the desired contradiction. 
  
  The intuition behind the construction of $\prog'$ is as follows. For sets $U\subseteq A$, the 1-layered graph $G$ with nodes $\{s, t\} \cup A$ and edges $\{(s, a) \mid a \in U\} \cup \{(a, t) \mid a \in A\}$ naturally corresponds to the instance $I$ of $\dynProb{NonEmptySet}$ over domain $A$ with set $U$. The deletion of an element $a$ from $U$ in $I$ corresponds to the deletion of the edge $(s, a)$ from $G$. Using this correspondence, the program $\prog'$ essentially maintains the same auxiliary relations as $\prog$. When $a$ is deleted from $U$ then $\prog'$ simulates $\prog$ after the deletion of $(s, a)$. 
  
  A complication arises from the fact that $\dynProb{NonEmptySet}$ does not have constants $s$ and $t$. Therefore the program $\prog'$ encodes the relationship of $s$ and $t$ to elements from $A$ by using additional auxiliary relations. More precisely, for every $k$-ary relation symbol $R \in \schema$ and every tuple $\rho = (\rho_1, \ldots, \rho_k)$ over $\{\bullet, s, t\}$, the program $\prog'$ has a fresh $l$-ary relation symbol $R^\rho$ where $l$ is the number of $\rho_i$'s with $\rho_i = \bullet$. The intention is as follows. Let $i_1 < \ldots < i_l$ such that $\rho_{i_j} = \bullet$. With every $l$-tuple $\vec u=(y_1,\ldots,y_l)$ of variables we associate the tuple ${\vec u}^\rho=(u_1^\rho,\ldots,u_k^\rho)$ of terms from $\{s,t,y_1,\ldots,y_l\}$, where (1) $u_i^\rho = s$ if $\rho_i = s$, (2) $u_i^\rho = t$ if $\rho_i = t$, and (3) $u_{i_j}^\rho=y_j$, for $j \in \{1,\ldots,l\}$. Analogously, we define ${\vec a}^\rho$ for tuples $\vec a=(a_1,\ldots,a_l)$ over $A$. Then $\prog'$ 
ensures 
that  $\vec a \in R^\rho$ in some state if and only if ${\vec a}^\rho \in R$ in the corresponding state of $\prog$.   
  
  Update formulas $\uf{R^\rho}{\del U}{v}{x_1, \ldots, x_l}$ of $\prog'$ are obtained from update formulas $\uf{R}{\del E}{u, v}{x_1, \ldots, x_k}$ of $\prog$ in two steps. First, from $\ufwa{R}{\del E}$ a formula $\phi'$ is constructed by replacing every occurrence of $x_i$ by $x^\rho_i$ and replacing every occurrence of $u$ by $s$. Then  $\ufwa{R^\rho}{\del U}$ is obtained from $\phi'$ by replacing every atom $T(\vec w)$ in $\ufwa{R}{\del E}$ by $T^{\rho}(\vec y)$, for the unique tuple $\vec y$ of variables and the unique tuple $\rho$, for which ${\vec y}^\rho=\vec w$.
  
  Now, $\prog'$ yields the same query result after deletion of elements $a_1, \ldots, a_m$ as $\prog$ after deletion of edges $(s,a_1), \ldots, (s, a_m)$. Hence the program $\prog'$ maintains $\dynProb{NonEmptySet}$ under deletions. This is a contradiction. 
\end{proofsketchof}

    %%% End: 

  \section{Future Work}
    The question whether Reachability is maintainable with first-order updates remains one of the major open questions in dynamic complexity.  Proving that
    Reachability cannot be maintained with quantifier-free updates
    with arbitrary auxiliary data seems to be a worthwhile
    intermediate goal, but it appears non-trivial as well.

    We contributed to the intermediate goal by giving a first lower bound for  binary auxiliary
    relations. Whether the strictness of the  arity hierarchy for \DynProp extends beyond arity three 
   is another open question. 

    For (full) first-order updates a major challenge is the development of lower bound tools. Current techniques are in some sense not fully dynamic: either results from static descriptive complexity are applied to constant-length modification sequences; or non-constant but very regular modification sequences are used.
    In the latter case, the modifications do not depend on previous changes
    to the auxiliary data (as, e.g., in \cite{GraedelS12} and
    in this paper). Finding techniques that adapt to changes could be a good
    starting point.

  \bibliographystyle{plain}
  \bibliography{bibliography}

\begin{thebibliography}{10}

\bibitem{DongLW03}
Guozhu Dong, Leonid Libkin, and Limsoon Wong.
\newblock Incremental recomputation in local languages.
\newblock {\em Inf. Comput.}, 181(2):88--98, 2003.

\bibitem{DongS98}
Guozhu Dong and Jianwen Su.
\newblock Arity bounds in first-order incremental evaluation and definition of
  polynomial time database queries.
\newblock {\em J. Comput. Syst. Sci.}, 57(3):289--308, 1998.

\bibitem{DongZ00}
Guozhu Dong and Louxin Zhang.
\newblock Separating auxiliary arity hierarchy of first-order incremental
  evaluation systems using (3k+1)-ary input relations.
\newblock {\em Int. J. Found. Comput. Sci.}, 11(4):573--578, 2000.

\bibitem{Etessami98}
Kousha Etessami.
\newblock Dynamic tree isomorphism via first-order updates.
\newblock In Alberto~O. Mendelzon and Jan Paredaens, editors, {\em Proceedings
  of the Seventeenth ACM SIGACT-SIGMOD-SIGART Symposium on Principles of
  Database Systems, June 1-3, 1998, Seattle, Washington, USA}, pages 235--243.
  ACM Press, 1998.

\bibitem{GeladeMS12}
Wouter Gelade, Marcel Marquardt, and Thomas Schwentick.
\newblock The dynamic complexity of formal languages.
\newblock {\em ACM Trans. Comput. Log.}, 13(3):19, 2012.

\bibitem{GraedelS12}
Erich Gr{\"a}del and Sebastian Siebertz.
\newblock Dynamic definability.
\newblock In Alin Deutsch, editor, {\em 15th International Conference on
  Database Theory, ICDT '12, Berlin, Germany, March 26-29, 2012}, pages
  236--248. ACM, 2012.

\bibitem{GrahamRS1990}
R.L. Graham, B.L. Rothschild, and J.H. Spencer.
\newblock {\em Ramsey Theory}.
\newblock Wiley Series in Discrete Mathematics and Optimization. Wiley, 1990.

\bibitem{Hesse01}
William Hesse.
\newblock The dynamic complexity of transitive closure is in
  {DynTC}$^{\mbox{0}}$.
\newblock In {\em Database Theory - ICDT 2001, 8th International Conference,
  London, UK, January 4-6, 2001, Proceedings}, pages 234--247, 2001.

\bibitem{Hesse03}
William Hesse.
\newblock {\em Dynamic Computational Complexity}.
\newblock PhD thesis, University of Massachusetts Amherst, 2003.

\bibitem{PatnaikI94}
Sushant Patnaik and Neil Immerman.
\newblock {Dyn-FO}: A parallel, dynamic complexity class.
\newblock In {\em Proceedings of the Thirteenth ACM SIGACT-SIGMOD-SIGART
  Symposium on Principles of Database Systems, May 24-26, 1994, Minneapolis,
  Minnesota}, pages 210--221. ACM Press, 1994.

\bibitem{PatnaikI97}
Sushant Patnaik and Neil Immerman.
\newblock {Dyn-FO}: A parallel, dynamic complexity class.
\newblock {\em J. Comput. Syst. Sci.}, 55(2):199--209, 1997.

\bibitem{PatrascuD04}
Mihai Patrascu and Erik~D. Demaine.
\newblock Lower bounds for dynamic connectivity.
\newblock In L{\'a}szl{\'o} Babai, editor, {\em Proceedings of the 36th Annual
  ACM Symposium on Theory of Computing, Chicago, IL, USA, June 13-16, 2004},
  pages 546--553. ACM, 2004.

\bibitem{SchmitzS11}
Sylvain Schmitz and Ph. Schnoebelen.
\newblock Multiply-recursive upper bounds with {H}igman's lemma.
\newblock In {\em Automata, Languages and Programming - 38th International
  Colloquium, ICALP 2011, Zurich, Switzerland, July 4-8, 2011, Proceedings,
  Part II}, pages 441--452, 2011.

\bibitem{WeberS07}
Volker Weber and Thomas Schwentick.
\newblock Dynamic complexity theory revisited.
\newblock {\em Theory Comput. Syst.}, 40(4):355--377, 2007.

\bibitem{ZeumeS13reachmfcs}
Thomas Zeume and Thomas Schwentick.
\newblock On the quantifier-free dynamic complexity of reachability.
\newblock In Krishnendu Chatterjee and Jiri Sgall, editors, {\em MFCS}, volume
  8087 of {\em Lecture Notes in Computer Science}, pages 837--848. Springer,
  2013.

\bibitem{ZeumeS13reacharxiv}
Thomas Zeume and Thomas Schwentick.
\newblock On the quantifier-free dynamic complexity of reachability.
\newblock {\em CoRR}, abs/1306.3056, 2013.

\bibitem{ZeumeS14CQicdt}
Thomas Zeume and Thomas Schwentick.
\newblock Dynamic conjunctive queries.
\newblock In Nicole Schweikardt, Vassilis Christophides, and Vincent Leroy,
  editors, {\em ICDT}, pages 38--49. OpenProceedings.org, 2014.

\end{thebibliography}

\end{document}